\documentclass{article}[10pt]
% my packages
\usepackage{fullpage}
\usepackage{amsmath}
\usepackage{amsthm}
\usepackage{amssymb}
\usepackage{mathtools}
\usepackage{verbatim}
\usepackage{color}
\usepackage{boxedminipage}
\usepackage{slashbox}
\usepackage{enumitem}
\usepackage[caption=false]{subfig}
\usepackage{url}
\usepackage{listings}
\usepackage{xcolor}
\usepackage{hyperref}

\newtheorem{theorem}{Theorem}
\newtheorem{definition}{Definition}
\newtheorem{lemma}{Lemma}

% list env
\setlist{noitemsep, leftmargin=*,topsep=2pt}

% definitions
\mathchardef\mhyphen="2D
\newcommand{\Hyphen}{\mhyphen}

\newcommand{\OrderedFenceFunc}{\mathsf{ordered}_f}
\newcommand{\OrderedFunc}{\mathsf{ordered}}

\newcommand{\AddrDep}{<_{adep}}
\newcommand{\DataDep}{<_{ddep}}
\newcommand{\LSFOrd}{<_{ppomf}}
\newcommand{\DepOrd}{<_{ppod}}
\newcommand{\SameAddrOrd}{<_{pposa}}
\newcommand{\FenceOrd}{<_{ppof}}

\newcommand{\ProgOrd}{<_{po}}
\newcommand{\PreservePO}{<_{ppo}}
\newcommand{\MemOrd}{<_{mo}}
\newcommand{\ReadFrom}{\xrightarrow{}_{r\!f}}

\newcommand{\DoneTS}{\mathsf{doneTS}}
\newcommand{\AddrTS}{\mathsf{addrTS}}
\newcommand{\StDataTS}{\mathsf{sdataTS}}
\newcommand{\FromSt}{\mathsf{from}}
\newcommand{\MemAddr}{\mathsf{addr}}
\newcommand{\StData}{\mathsf{sdata}}
\newcommand{\LdVal}{\mathsf{ldval}}

\newcommand{\RobAddrDep}{<_{adep\Hyphen{}rob}}
\newcommand{\RobDataDep}{<_{ddep\Hyphen{}rob}}

\newcommand{\RobProgOrd}{<_{po\Hyphen{}rob}}
\newcommand{\RobNTPreservePO}{<_{ntppo\Hyphen{}rob}}

\newcommand{\Coh}{<_{co}}
\newcommand{\Fr}{\xrightarrow{}_{f\!r}}
\newcommand{\PoLoc}{<_{poloc}}
\newcommand{\RfInv}{\xrightarrow{}_{{r\!f}^{-1}}}
\newcommand{\Rfi}{\xrightarrow{}_{r\!f\!i}}
\newcommand{\Rfe}{\xrightarrow{}_{r\!f\!e}}
\newcommand{\Eco}{<_{eco}}

\newcommand{\IIEMemOrd}{<_{mo\Hyphen{}i2e}}
\newcommand{\IIEProgOrd}{<_{po\Hyphen{}i2e}}
\newcommand{\IIEPreservePO}{<_{ppo\Hyphen{}i2e}}

\DeclareMathOperator{\acyclic}{acyclic}

\begin{document}

\title{Weak Memory Models with Matching Axiomatic and Operational Definitions}
%\author{Sizhuo Zhang, Muralidaran Vijayaraghavan, Daniel Lustig, Arvind}

\author{\begin{tabular}{cccc}
        $\mathrm{Sizhuo\ Zhang}^1$ & $\mathrm{Muralidaran\ Vijayaraghavan}^1$ & $\mathrm{Dan\ Lustig}^2$ & $\mathrm{Arvind}^1$ \\
    \end{tabular}\\
    \begin{tabular}{cc}
        ${}^1$\{szzhang, vmurali, arvind\}@csail.mit.edu & ${}^2$dlustig@nvidia.com \\
        ${}^1$MIT CSAIL & ${}^2$NVIDIA \\
    \end{tabular}}
\date{}
\maketitle

\begin{abstract}
Memory consistency models are notorious for being difficult to define precisely, to reason about, and to verify.
More than a decade of effort has gone into nailing down the definitions of the ARM and IBM Power memory models, and yet there still remain aspects of those models which (perhaps surprisingly) remain unresolved to this day.
In response to these complexities, there has been somewhat of a recent trend in the (general-purpose) architecture community to limit new memory models to being (multicopy) atomic: where store values can be read by the issuing processor before being advertised to other processors.
TSO is the most notable example, used in the past by IBM 370 and SPARC-TSO, and currently used in x86. Recently (in March 2017) ARM has also switched to a multicopy atomic memory model, and the new RISC-V ISA and recent academic proposals such as WMM are pushing to do the same.

In this paper, we show that when memory models are atomic, it becomes much easier to produce axiomatic definitions, operational definitions, and proofs of equivalence than doing the same under non-atomic models.
The increased ease with which these definitions can be produced in turn allows architects to build processors much more confidently, and yet the relaxed nature of the models we propose still allows most or all of the performance of non-atomic models to be retained.
In fact, in this paper, we show that atomic memory models can be defined in a way that is parametrized by basic instruction and fence orderings.
Our operational vs.\@ axiomatic equivalence proofs, which are likewise parameterized, show that the operational model is sound with respect to the axioms and that the operational model is complete: that it can show any behavior permitted by axiomatic model.

For concreteness, we instantiate our parameterized framework in two forms.
First, we describe GAM (General Atomic Memory model), which permits intra-thread load-store reorderings.
Then, we show how forbidding load-store reordering (as proposed by WMM) allows the operational and axiomatic model to be even further simplified into one based on Instantaneous Instruction Execution (I2E).
Under I2E, each processor executes instructions in order and instantaneously, providing an even simpler model still for processors which do not implement load-store reordering.
We then prove that the operational and axiomatic definitions of I2E are equivalent as well.

\end{abstract}

\section{Introduction}

Interest in weak memory models stems from the belief that such models provide greater flexibility in implementation and thus, can lead to higher performance multicore microprocessors than those that support stronger memory models like \emph{Sequential Consistency} (SC) or \emph{Total Store Order} (TSO). 
However, extremely complicated and contentious definitions of POWER and ARM ISAs, which are the most important modern examples of industrially supported weak memory models, have generated somewhat of a backlash against weak memory models. 
As recently as 2017, a trend is emerging in which general-purpose processors are moving away from extremely weak (so-called ``non-atomic'') memory models and back towards simpler options which are much more tractable to understand and analyze.

\begin{comment}
Parallel programmers rely exclusively on sequential execution of their threads, and find any reordering of instructions in their codes by compilers or processors as a nuisance to be dealt with for the sake of performance.
It is a fact that programmers would prefer the shared memory systems to support the SC memory model. 
This debate has recently been invigorated by RISC-V, a new open-source and free ISA, whose memory model is still being debated in public.
In this paper we address a technical challenge in this debate by defining a class of weak memory models both operationally and axiomatically, and showing the equivalence of the two. 

We first discuss three challenges in defining a weak memory model such that it has matching operational and axiomatic definitions, and is easy for the programmer to use.
\end{comment}

Over the years, two competing memory model definition approaches have emerged.
%Memory models are typically defined in two forms.
One form is the \emph{operational model}, which is essentially an abstract machine that can run a program and directly produce its legal behaviors.
The other form is the \emph{axiomatic model}, which is a collection of constraints on legal program behaviors.
Each type has its own advantage.
Axiomatic models can use general-purpose combinatorial search tools like model checkers and SMT solvers to check whether a specific program behavior is allowed or disallowed, and they are useful for building computationally-efficient tools~\cite{alglave2014herding,memalloy,lustig2017automated}.
However they are not as suitable for inductive proofs that aim to build up executions incrementally, and many architects find them rather non-intuitive and a big departure from actual hardware. %because that would require guessing all the load values and unrolling all the loops in a program. 
On the other hand, operational models are very natural representations of actual hardware behavior, and their small step semantics are naturally very well suited to building formal inductive proofs~\cite{Nienhuis:2016:OSC:2983990.2983997}.

Given the complementary natures of these two types of definitions, it would be ideal if a memory model could have an axiomatic definition and an operational definition which match each other.
Then different definitions can serve different use cases.
This is indeed the case for strong memory models like SC and TSO, but unfortunately, not so for weak memory models.
The research in weak memory models can then be classified into the following two categories:
\begin{enumerate}
    \item Build accurate axiomatic and operational models of existing architectures.
    \item Specify what memory models ought to look like: proposed memory models should be simple to understand with no obvious restrictions on implementations, and the equivalence of axiomatic and operational models may even be understood intuitively.
\end{enumerate}
While great efforts have been devoted to the first type of research to create models for commercial architectures like POWER and ARM, these models and proofs are still subject to subtle incompatibilities and frequent model revisions that invalidate the efforts~\cite{sarkar2011understanding,alglave2014herding,flur2016modelling,lahav2017repairing}.
More importantly, the veracity of these models is hard to judge because they are often based on information which is not public.
For example, the ARM operational model proposed by Flur et al.~\cite{flur2016modelling} allows many non-atomic memory behaviors that cannot be observed in any ARM hardware, and the paper claims that those behaviors are introduced to match the intentions of ARM's architects.
However, the recently released ARM ISA manual~\cite{armv8ar} clearly forbids those behaviors, invalidating the model completely.

This paper falls in the second category, and is motivated by the aim to reduce the complexity of commercial weak memory models.
The results in this paper are not purely academic -- the growing importance of the open source RISC-V ISA~\cite{riscv} has provided an opportunity to design a clean slate memory model.
The memory model for RISC-V is still being debated, and the members of the RISC-V community who are involved in the debate have expressed a strong desire for both axiomatic and operational definitions of the memory model.

In this paper, we present a framework which provides an axiomatic semantics, an operational semantics, and proofs of equivalence, and all in a way that is \emph{parameterized} by the basic instruction and fence orderings in a given model.
With our model, specifications and proofs are not nearly as fragile and subject to frequent breakage with every subtle tweak to a memory model definition.
Instead, the parameterization allows fence semantics to be simply and easily tweaked as needed.

\subsection{Contributions}
The main contribution of this paper is GAM, a general memory model for systems with atomic memory.
The model is parameterized by fences and basic instruction reorderings.
Both its operational and axiomatic definitions can be restricted to provide definitions of other simpler atomic memory models.
We provide proofs that the operational definition of GAM is \emph{sound} and \emph{complete} with respect to its \emph{axiomatic} definition.
We believe that GAM is the first memory model that allows load-store reordering and for which matching axiomatic and operational definitions have been provided.

%In addition to the formalisms, we empirically test the proofs using model checking to further ensure their correctness.
%We also provide a set of litmus tests to capture allowed and disallowed behaviors and to ensure that the GAM model matches the best practices and understanding of the memory model community.

On top of GAM, we show that GAM can be further simplified by simply preventing load-store reordering.
Such models can be described in terms of Instantaneous Instruction Execution (I2E), a model in which instructions execute instantaneously and in-order, with special memory buffers capturing the weak memory behaviors.
Furthermore, I2E models can additionally be parameterized by dependency orderings (under a commonly satisfied constraint), providing even more flexibility.
We provide proofs of equivalence for our axiomatic and operational definitions of I2E as well.

\noindent\textbf{Paper organization:} 
In Section \ref{sec:background}, we present three issues that complicate the definitions of weak memory models.
In Section \ref{sec:related}, we presented the related work.
In Section \ref{sec:GAM}, we present the axiomatic and operational definitions of our parameterized general atomic memory model GAM, along with the proofs of the equivalence of the two definitions.
In Section \ref{sec:COM}, we present an alternative axiomatic definition of GAM because this definition is better suited for using model checkers.
In Section \ref{sec:instance}, we show how GAM can be restricted to represent other simpler memory models.
In Section \ref{sec:I2E}, we show that if Load-Store reordering is disallowed, then the operational models can be described in the Instantaneous Instruction Execution manner and parameterized by dependency orderings.
Finally we end the paper with brief conclusions in Section \ref{sec:conclusion}.

\section{Memory Model Background}\label{sec:background}

In the following, we discuss the three specific challenges in defining a weak memory model such that it has matching operational and axiomatic definitions, and  explain briefly how we tackle the challenges.

\subsection{Atomic versus Non-atomic Memory}
Both ARM (until March 2017) and IBM Power use what is known as \emph{non-atomic memory} which does not have a universally-agreed-upon definition~\cite{alglave2014herding,maranget2012tutorial}. 
A major source of complication in weak model definitions stems from the use of non-atomic memory.
This lack of consensus makes it difficult to have matching definitions with non-atomic memory.
In this paper, we define memory models that use \emph{atomic memory}, or more precisely its variant which is known as \emph{multicopy atomic memory}. 
By atomic memory we mean a conceptual \emph{multiported monolithic memory} where loads and stores are executed instantaneously and a load $a$ returns the value of the latest store to address $a$.
Multicopy atomic memory lets a processor that generates a store bypass the value of that store to other newer loads in the same processor, before other processors may see that store value.   
Multicopy atomic memory captures the abstraction of a store buffer in the microarchitecture and is the underlying memory system for the popular TSO memory model used by Intel and AMD~\cite{sewell2010x86}.
In this paper we will use the term atomic memory and multicopy atomic memory interchangeably.

In the RISC-V debate a strong consensus has emerged that the memory model for RISC-V should depend only on atomic memory and therefore in this paper we will discuss only atomic memory models.

\subsection{Instruction Reorderings and Single-thread Semantics}\label{sec:inst-reorder}
Modern high-performance processors invariably execute instructions out of order (aka OOO processors) but they do it such that this reordering is transparent to a single threaded program.
However, in a multithreaded setting these instruction reorderings become visible. 
A major classification of memory models is along the lines of which (memory) instruction reorderings are permitted. 
For example, SC does not allow any reordering, while TSO allows a Load to be reordered with respect to previous Stores (i.e., it allows Store-Load reordering).
WMM~\cite{wmm}, Alpha, ARM, and Power also permit Store-Store and Load-Load reordering, provided the accesses are to different addresses, and all of these, except WMM, also permit Load-Store reordering\cite{armv8ar,power2013version,alpha1998,wmm}. 
The same address Store-Store reordering would clearly destroy the single thread semantics and thus, is prohibited.
The reason for disallowing the same address Load-Load reorderings is subtler, and a variation of WMM can be defined that indeed allows such a reordering.
%Alpha~\cite{alpha} and RMO\cite{RMO} are models defined in the past that additionally allow Load-Store reordering in the presence of atomic memory.

However, it should be noted that all of SC, TSO and WMM have matching axiomatic and operational definitions, while to our knowledge Alpha and RMO have only axiomatic definitions.
This difference is likely to be caused by the added complexity of permitting Load-Store reordering, i.e., issuing a Store to the memory before all the previous Loads have completed.
A consequence of allowing Load-Store reordering is that the value a load gets in a multithreaded setting can depend upon a future store from the same thread.  
This complicates operational definitions. 
Load-Store reordering also complicates axiomatic semantics where a special axiom is often needed to disallow so-called \emph{out-of-thin-air} (OOTA) behavior~\cite{Boehm:2014:OGA:2618128.2618134,alpha1998}.
These two factors add to the difficulty of matching axiomatic and operational definitions. 

The \emph{General Atomic Memory} (GAM) model defined in this paper takes the challenge and allows all four reorderings (i.e., including Load-Store reordering).
To model Load-Store reordering, this paper provides an operational definition of GAM using unbounded Reorder Buffer (ROB) with speculative execution and atomic memory. 
(The memory system itself is not speculative, i.e., once a store has been issued it cannot be retracted.)
%A similar mechanism has been used in the past to define the operational model for Power but there is no matching axiomatic definition for this most recent of operational formulations of Power~\cite{sarkar2011understanding}.
A similar mechanism has been used in the past to define the operational model for Power~\cite{sarkar2011understanding}, but there are separate concerns about that model which are discussed in Section~\ref{sec:related}.
%Although there is an axiomatic model~\cite{mador2012axiomatic} that matches this Power operational model, both models are highly delicate and cannot reach the level of parameterization as GAM does.

\subsection{Fences for Writing Multithreaded Programs}
If we classify memory models based on instruction reordering only then for a given program, GAM allows more program behaviors than WMM, WMM allows more behaviors than TSO, and TSO allows more behaviors than SC.
More behaviors generally mean more flexibility in hardware implementation, however, a programmer needs a way to control instruction reorderings in order to write shared memory multithreaded programs.
The foundations of all multithreaded programming, from Dijkstra~\cite{ReadersAndWriters1965} and Lamport~\cite{lamport1979make} to current Java multithreaded libraries, is based on SC, that is, order-preserving interleaving of instructions in a multithreaded program.
Hence as a minimum, any ISA supporting a memory model weaker than SC must provide \emph{fence} instructions to make it possible to disallow instruction reorderings to enforce SC, if desired. 
Not surprisingly, different models require different types of fences and the execution cost of a fence varies from implementation to implementation. 

Fences are often explained in two entirely different ways.
One way is to define a fence simply to prevent reordering between loads and stores. 
For example, RMO and RISC-V have four individual fence components, FenceSS, FenceLS, FenceSL and FenceLL, to prevent reorderings between Store-Store, Load-Store, Store-Load and Store-Store, respectively (the actual names of fence instructions are different), and as many as fifteen fences can be formed by composing these options.
Such fences specify when two instructions in a dynamic instruction stream in a processor may not be reordered.
Of course, for complete specification, one also has to specify how fences may be reordered with respect to each other or how/whether, for example, FenceSS may be reordered with respect to a Load. 
This view of fences is only about reordering with in a processor and has nothing to do with the memory system.

Specifying how fences control instruction reordering is not sufficient to understand how programs behave.
We need to specify what the meaning of ``a store has completed'', i.e., when the value of a store becomes visible to loads in other processors or to a load in the same processor.
One needs to understand the details of the memory subsystem, such as presence of store buffers, write through caches, etc., to give precise meaning to fences.

The second type of fence definitions is usually explained in terms of their effect on memory.
For example, a Store-Release fence (alternatively known as a Commit) blocks the execution of the following stores until all the preceding instructions have completed.
Similarly, Load-Acquire fence (or Reconcile) blocks the execution of following instructions until all preceding loads are satisfied.
Similarly, there is Full-fence instruction that blocks the execution of all subsequent memory instructions until all the preceding memory instructions have completed.

In addition to subtle differences in the semantics of fences, there can be huge differences in performance penalty of using different types of fences.
For example, a full fence may be overkill in an algorithm where it may be sufficient to keep two sequential stores from being reordered.
Insertion of fences in a multithreaded program by the programmer or the compiler writer is one of the thorniest problems related to weak memory models. 
If too many unnecessary fences are inserted in a program then it would show poor performance, and in the extreme case the whole purpose of having a weak memory model would be lost. 
If too few fences are inserted, then the meaning of a program may change by admitting new behaviors which may not be acceptable. 
The debugging of multithreaded programs is a difficult task in the best of times, insertion of fences creates the possibility of including even more silent bugs which may manifest under very peculiar scheduling conditions. 
Automatic insertion of fences by a compiler for the programming model such as the one embodied in C11 may be feasible but that memory model of C11 is already based on some cost assumptions of various fences, creating a catch-22 situation~\cite{c11}.

The lack of agreement on the set of fences and the nuances between different fences all add to the difficulty of matching axiomatic and operational definitions.
To address these problems, GAM restricts itself to a very simple atomic memory model where there is no ambiguity about when a value is visible to other processors.
Such atomic memory automatically avoids many of the thorniest difficulties (such as cumulativity~\cite{alglave2014herding}) in the definitions of fences.
Since there is still no clear consensus on which set of fences gives the best tradeoff between ease of use and performance, we have parameterized the GAM model with the type of fences.
The axiomatic definition, operational definition and the proofs of equivalence are all also parameterized by the type of fences.

%\mycomment{Arvind: Should we say something about atomics, RMW instructions, Data race free programs etc. ?}
%\mycomment{Dan: I vote no, except maybe to say something towards the end like this: we can easily add RMWs by simply requiring the read and the write to be executed consecutively (operational) or to be consecutive in $\MemOrd$ (axiomatic).  Otherwise people will just thing we're ``cheating'' by presenting a simple model just by leaving out important real complexities like these.  (I don't believe that, but I don't want people to read it that way)}
%Sizhuo: also vote no

\section{Related Work}\label{sec:related}
%Out-of-thin-air~\cite{Boehm:2014:OGA:2618128.2618134}.

\newcommand{\etal}{et al.}
\newcommand{\etc}{etc.}

Lamport's paper on SC \cite{lamport1979make} is probably the first formal definition, both axiomatic and operational, of a memory model. In the nineties, three different weak memory models were defined axiomatically for SUN's Sparc processors: TSO, PSO and RMO~\cite{sparc1992sparcv8,weaver1994sparc}. A weak memory model for DEC Alpha was also specified axiomatically in the same time frame~\cite{alpha1998}.
Until a decade ago, however, there was no effort to specify weak memory models operationally or match axiomatic specifications to operational models. In this context, papers by Sarkar \etal \cite{Sarkar:2009:SXM:1594834.1480929}, Sewell \etal
\cite{sewell2010x86} and Owens \etal{} \cite{owens2009better} are very important because they showed that the axiomatic specification of TSO is exactly equivalent to an operational model using store buffers connected to I2E processors and atomic memory.

Until recently, weak memory models have not been defined prior to ISA
implementation and have been documented by manufacturers only in an ad
hoc manner using a combination of natural language and litmus tests.
Not surprisingly, such ``definitions'' have had to be revised as
implementations have changed, revealing new corner cases of behaviors.
Over the last decade, several studies have been performed, mostly by
academic researchers, to determine the allowed and disallowed behavior
of several commercial microprocessors, with the goal of creating
formal models to explain the observed behaviors.  These studies have
been done on real machines by running billions of instructions and
recording the observations (just like studying any natural
phenomenon).  Then, with extra inputs from hardware designers, a model
is constructed that tries to satisfy all these observations.  
For example, 
Sarkar \etal{} specified an operational model for POWER
\cite{sarkar2011understanding, sarkar2012synchronising}, using a
non-atomic memory.
Later, Mador-Haim \etal{} \cite{mador2012axiomatic}
developed an axiomatic model for POWER and proved that it matches the
earlier operational model. 
Alglave \etal{} \cite{alglave2009semantics, Alglave2011,
  alglave2012formal, alglave2014herding, alglave2013software} give
axiomatic specifications for ARMv7 and POWER using the Herd framework;
Flur \etal{} \cite{flur2016modelling} give operational specification
for ARMv8.

However, there has been some dispute if the
operational model of POWER models actual POWER processors accurately
\cite{alglave2014herding}. We attribute the reason for the potential
errors to be the inherent complexity of the operational model because of the use of non-atomic memory.
Alglave models are not sufficiently grounded in operational models and face the problem of being too liberal. The model may admit behaviors which cannot be observed in any implementation. Such models can also lead to insertion of unnecessary fences in a program. 
We think it is important to have matching operational and axiomatic models.

Researchers have also proposed several other consistency models:
Processor Consistency~\cite{goodman1991cache}, Weak
Consistency~\cite{dubois1986memory}, RC~\cite{gharachorloo1990memory},
CRF~\cite{shen1999commit}, Instruction Reordering + Store
Atomicity~\cite{arvind2006memory}.  
The tutorials by Adve \etal{} \cite{adve1996shared} and by Maranget
\etal{} \cite{maranget2012tutorial} provide relationships among some
of these models.

Researchers have also proposed architectural mechanisms for implementing SC \cite{lamport1979make} efficiently
\cite{gharachorloo1991two,ranganathan1997using,guiady1999sc+,gniady2002speculative,ceze2007bulksc,wenisch2007mechanisms,blundell2009invisifence,singh2012end,lin2012efficient,gope2014atomic}. 
Several of these architectural mechanisms are interesting in their own right and applicable to reducing power consumption, however,
so far commercial processor vendors have shown little interest in adopting stricter memory models.

Recently, there is a splurge of activity in trying to specify
semantics of concurrent languages: C/C++
\cite{c++n4527,boehm2008foundations,batty2011mathematizing,Batty:2016:OSA:2914770.2837637,Kang:2015:FCM:2737924.2738005},
Java \cite{manson2005java,cenciarelli2007java, maessen2000improving}.
These models are specified axiomatically, and allow load-store reordering.
For C++, there has been work to specify an equivalent operational model
\cite{operationalC++}.

%% The check series of work presents a methodology to test if a specific
%% microarchitectural specification (which is conceptually similar to an
%% operational model) adheres to an axiomatic memory model specification,
%% for a set of litmus tests \cite{}. They have recently extended their
%% methodology to include RTL verification \cite{RTLCheck} (maybe a CoRR
%% version is more appropriate since ISCA is too far off?).

%% Recently, Lustig \etal{} have used Memory Ordering Specification
%% Tables (MOSTs) to describe memory models, and proposed a hardware
%% scheme to dynamically convert programs across memory models described
%% in MOSTs \cite{lustig2015armor}.  MOST specifies the ordering strength
%% (e.g., locally ordered, multicopy atomic) of two instructions from
%% the same processor under different conditions (e.g., data dependency,
%% control dependency).  Our work is orthogonal in that we propose new
%% memory models with equivalent operational and axiomatic definitions,
%% that are proved manually.

%% Adve and Hill defined Data-Race-Free-0 (DRF0), a class of programs where shared variables are protected by locks, and proposed that DRF0 programs should behave as SC \cite{adve1990weak}.
%% However, architectural memory models must also define the behaviors of non-DRF0 programs.

%% Arvind and Maessen have specified the conditions for preserving store
%% atomicity in program execution when instructions can be reordered
%% \cite{arvind2006memory}.  In contrast, GAM do not insist on store
%% atomicity at the program level.

\section{General Atomic Memory Model (GAM)}\label{sec:GAM}

In this section, we introduce GAM, an atomic memory model framework parametrized by how the memory model enforces following two types of orderings:
\begin{enumerate}
    \item Memory instruction ordering: the ordering between two memory instructions, i.e., the commonly referred load-load, load-store, store-store and store-load orderings.
    \item Fence ordering, the ordering between a fence and a memory instruction or between two fences.
\end{enumerate}
We refer to the combination of the above two orderings as \emph{memory/fence ordering}.
GAM uses a function $\OrderedFunc(I_{old}, I_{new})$ to represent memory/fence ordering, and this function is used in both the axiomatic and operational definitions of GAM.
$\OrderedFunc(I_{old}, I_{new})$ returns true when the older instruction $I_{old}$ should be ordered before the younger instruction $I_{new}$ according to the memory instruction ordering or fence ordering enforced by the memory model.
For example, Table~\ref{tab:order-tso} shows the $\OrderedFunc(I_{old}, I_{new})$ table for TSO, which has only one type of fence.
The only memory ordering that is not enforced by TSO is the store-load ordering, as represented by the false entry (St, Ld).
As a more complex example, Table~\ref{tab:order-rmo} shows the $\OrderedFunc(I_{old}, I_{new})$ table for RMO.
In RMO, all four memory instruction orderings are relaxed, as shown by the false entries (Ld, Ld), (Ld, St), (St, Ld) and (St, St).
The four fences are used to enforce each type of orderings respectively.
For example, the true entries (FenceLS, St) and (Ld, FenceLS) means that FenceLS is ordered before younger stores and is ordered after older loads, thus enforce load-to-store ordering.
The fences are even unordered with respect to each other.
As a framework, given an $\OrderedFunc$ function (such as Table~\ref{tab:order-tso} or \ref{tab:order-rmo}), GAM can produce equivalent axiomatic and operational models that enforce the memory/fence orderings represented by the $\OrderedFunc$ function.

\begin{table}[!htb]
    \centering
    \begin{tabular}{|l|l|l|l|}
        \hline
        \backslashbox{$I_{old}$}{$I_{new}$} & Ld    & St    & Fence \\ \hline
        Ld                                  & True  & True  & True  \\ \hline
        St                                  & False & True  & True  \\ \hline
        Fence                               & True  & True  & True  \\ \hline
    \end{tabular}
    \caption{Orderings for TSO memory instructions and fences: $\OrderedFunc_{TSO}(I_{old}, I_{new})$}\label{tab:order-tso}
\end{table}
\begin{table}[!htb]
    \centering
    \begin{tabular}{|l|l|l|l|l|l|l|}
        \hline
        \backslashbox{$I_{old}$}{$I_{new}$} & Ld    & St    & FenceLL   & FenceLS   & FenceSL   & FenceSS \\ \hline
        Ld                                  & False & False & True      & True      & False     & False \\ \hline
        St                                  & False & False & False     & False     & True      & True  \\ \hline
        FenceLL                             & True  & False & False     & False     & False     & False \\ \hline
        FenceLS                             & False & True  & False     & False     & False     & False \\ \hline
        FenceSL                             & True  & False & False     & False     & False     & False \\ \hline
        FenceSS                             & False & True  & False     & False     & False     & False \\ \hline
    \end{tabular}
    \caption{Orderings for RMO memory instructions and fences: $\OrderedFunc_{RMO}(I_{old}, I_{new})$}\label{tab:order-rmo}
\end{table}

It should be noted that the memory/fence ordering cannot fully describe a memory model.
The following three aspects are not captured by the $\OrderedFunc$ function:
\begin{enumerate}
    \item Load value: a memory model must specify which store values a load may read.
    \item Dependency ordering: most memory models order two instructions if the younger instruction is dependent on the older instruction in certain ways.
    \item Same-address ordering: even when the memory/fence ordering does not apply to two memory instructions for the same address, a memory model may still order them for the correctness of single-threaded programs.
\end{enumerate}
The GAM definition given in this section is not parametrized in terms of the above three aspects.
Later in Section~\ref{sec:instance}, we will show how to tweak the definition of GAM to derive memory models with a different dependency ordering or a different same-address ordering.
The way to determine load values should be common across all multicopy atomic memory models, so we do not bother changing that.
In the following, we give axiomatic and operation definitions of GAM and the equivalence proof.
When we use examples to explain our definitions, we assume the $\OrderedFunc$ function in Table~\ref{tab:order-rmo}, i.e., with all four memory instruction reorderings and relaxed fences.

We present the axiomatic definition before the operational definition but these definitions can be read in any order.

\subsection{Axiomatic Definition of GAM}
\label{sec:gam:axiom}

The axiomatic definition of GAM takes three relations as input: program order ($\ProgOrd$), read-from relations ($\ReadFrom$) and memory order ($\MemOrd$).
The program order ($\ProgOrd$) is a per-processor total order and represents the order in which the instructions are committed in that processor.
A read-from edge specifies that a load reads from a particular store; $\ReadFrom$ points from a store instruction to a load instruction for the same address, with the load getting the same value that is written by the store.
The memory order ($\MemOrd$) is a total order of all memory instructions in all processors.
Intuitively, $\MemOrd$ specifies the order of when each memory accesses are performed globally.

The axiomatic model checks $\ProgOrd$, $\ReadFrom$ and $\MemOrd$ against a set of axioms.
If all the axioms are satisfied, then the program behavior given by $\ProgOrd$ is allowed by the memory model.

It should be noted that $\ProgOrd$ is the observable program behavior, while $\ReadFrom$ and $\ProgOrd$ are just a \emph{witness} which cannot be observed directly.
To justify that a program behavior is allowed by GAM, we only need to find one witness (i.e., $\langle \ReadFrom, \MemOrd\rangle$) that satisfies all the axioms.
To prove that a program behavior is disallowed by GAM, we must show that there is no witness that can satisfy all the axioms simultaneously.

In order to describe the axioms, we first define \emph{preserved program order} ($\PreservePO$), which is computed from $\ProgOrd$.
$\PreservePO$ captures the constraints on the out-of-order (OOO) execution of instructions in each processor (locally).
Thus, a property of $\PreservePO$ is that if $I_1 \PreservePO I_2$ then $I_1 \ProgOrd I_2$ where $I_1$ and $I_2$ are instructions.

As will become clear $\ProgOrd$ by itself cannot reflect the constraints on the memory system and the interaction between processors.  
These constraints are expressed by the separate memory axioms of GAM.
In the following, we first define how to compute $\PreservePO$ from $\ProgOrd$, and then give the memory axioms of GAM.

\subsubsection{Definition of Preserved Program Order $<_{ppo}$ for GAM}\label{sec:ppo}

We define $\PreservePO$ in three parts.
The first part is the \emph{preserved memory/fence order} ($\LSFOrd$) which is captured by the $\OrderedFunc$ function.
The second part is the \emph{preserved dependency order} ($\DepOrd$), which includes branch dependencies, address dependencies, data dependencies, etc.
The last part is the \emph{preserved same-address order} ($\SameAddrOrd$), i.e., the ordering of memory instructions for the same address. 
Finally $\PreservePO$ is defined as the transitive closure of $\LSFOrd$, $\DepOrd$ and $\SameAddrOrd$.

\noindent\textbf{Definition of preserved memory/fence order $\LSFOrd$ for GAM:}
The preserved memory/fence order is fully described by the $\OrderedFunc$ function.
\begin{definition}[Preserved memory/fence order $\LSFOrd$\label{def:ppo-mem-fence}]
    $I_1 \FenceOrd I_2$ iff $I_1$ and $I_2$ both are memory or fence instructions, and $I_1 \ProgOrd I_2$, and $\OrderedFunc(I_1,I_2)$ is true.
\end{definition}

\noindent\textbf{Definition of preserved dependency order $<_{ppod}$ for GAM:}
We first give some basic definitions that are used to define dependency orderings precisely (all definitions ignore the PC register and the zero register): 
\begin{definition}[RS: Read Set]
    $RS(I)$ is the set of registers an instruction $I$ reads.
\end{definition}

\begin{definition}[WS: Write Set]
    $WS(I)$ is the set of registers an instruction $I$ can write.
\end{definition}

\begin{definition}[ARS: Address Read Set]
    $ARS(I)$ is the set of registers a memory instruction $I$ reads to
    compute the address of the memory operation.
\end{definition}

\begin{definition}[data-dependency $\DataDep$ \label{def:data-dep}]
    $I_1 \DataDep I_2$ if $I_1 \ProgOrd I_2$ and $WS(I_1) \cap
    RS(I_2) \neq \emptyset$ and there exists a register
    $r$ in $WS(I_1) \cap RS(I_2)$ such that there is no instruction
    $I$ such that $I_1 \ProgOrd I \ProgOrd I_2$ and $r \in WS(I)$.
\end{definition}

\begin{definition}[addr-dependency $\AddrDep$ \label{def:addr-dep}]
    $I_1 \AddrDep I_2$ if $I_1 \ProgOrd I_2$ and $WS(I_1) \cap
    ARS(I_2) \neq \emptyset$ and there exists a register
    $r$ in $WS(I_1) \cap ARS(I_2)$ such that there is no instruction
    $I$ such that $I_1 \ProgOrd I \ProgOrd I_2$ and $r \in WS(I)$.
\end{definition}

Note that data-dependency includes addr-dependency,
i.e., $I_1 \AddrDep I_2$ $\implies$ $I_1 \DataDep I_2$.

Now we define $\DepOrd$, which essentially says that the data-dependencies must be observed, stores should not execute until the preceding branches have been resolved, the execution of stores should be constrained by instructions on which prior memory instructions are
address dependent, and in the case of a load following a store to the same address, the execution of the load should be constrained by instructions which produce the store's data.

\begin{definition}[Preserved dependency order $\DepOrd$
\label{def:ppo-dep}]
    $I_1 \DepOrd I_2$ if either
    \begin{enumerate}
        \item \label{ppo:ddep} $I_1 \DataDep I_2$, or
        \item \label{ppo:br->st} $I_1 \ProgOrd I_2 $, and $I_1$ is a branch, and $I_2$ is a store, or
        \item \label{ppo:adep+po} $I_2$ is a store instruction, and there
        exists a memory instruction $I$ such that $I_1 \AddrDep I \ProgOrd I_2$,
        or
        \item \label{ppo:ddep+sa} $I_2$ is a load instruction, and there
        exists a store $S$ to the same address such that $I_1 \DataDep S
        \ProgOrd I_2$, and there is no other \emph{store} for the same
        address between $S$ and $I_2$.
    \end{enumerate}
\end{definition}

In the above definition, cases~\ref{ppo:ddep} and \ref{ppo:br->st} are straightforward; we discuss the rest of the cases below.

Case~\ref{ppo:adep+po} is about a subtle dependency caused by an address dependency and is illustrated by the example in
Figure~\ref{fig:adep-po-example}.  
If $I_3$ (store) is allowed to be issued before $I_1$, then the earlier load ($I_2$) may end up reading its own future store in case $I_1$ returns value $r_1=b$.

\begin{figure}[!htb]
    \centering
    \begin{tabular}{|l|}
        \hline
        $I_1:$ $r_1$ = Ld $a$\\
        $I_2:$ $r_2$ = Ld $r_1$\\
        $I_3:$ St $b$ = 1\\
        \hline
    \end{tabular}
    \caption{Example for case~\ref{ppo:adep+po}}\label{fig:adep-po-example}
\end{figure}

Case~\ref{ppo:ddep+sa} is about another subtle dependency when data is
transfered not by registers but by local bypassing.  In
Figure~\ref{fig:ddep-sa-example}, $I_3$ must be issued after $I_1$.
Otherwise, in case $I_3$ is issued before $I_1$, $I_3$ must bypass
from $I_2$.  However, the data of $I_2$ is still unknown at that time.
\begin{figure}[!htb]
    \centering
    \begin{tabular}{|l|}
        \hline
        $I_1:$ $r_1$ = Ld $a$ \\
        $I_2:$ St $b$ = $r_1$ \\
        $I_3:$ $r_2$ = Ld $b$ \\ 
        \hline
    \end{tabular}
    \caption{Example for case~\ref{ppo:ddep+sa}}\label{fig:ddep-sa-example}
\end{figure}

\noindent\textbf{Definition of preserved same-address order $\SameAddrOrd$ for GAM:}
Next we give the definition of $\SameAddrOrd$, which captures the orderings between memory instructions for the same address.
\begin{definition}[Preserved same-address order $\SameAddrOrd$\label{def:ppo-same-addr}]
    $I_1 \SameAddrOrd I_2$ if either
    \begin{enumerate}
        \item \label{ppo:ld->st} $I_1 \ProgOrd I_2$, and $I_1$ is a load and $I_2$ is
        a store to the same address, or
        \item \label{ppo:st->st} $I_1 \ProgOrd I_2$, and both $I_1$ and $I_2$ are store
        instructions for the same address, or
        \item \label{ppo:ld->ld} $I_1 <_{po} I_2$ and both $I_1$ and $I_2$ are load
        instructions for the same address with no intervening \emph{store} to the
        same address.
    \end{enumerate}
\end{definition}

The above definition explicitly excludes the enforcement of ordering of a store followed by a load to the same address.
Otherwise our model would be stricter than TSO in some cases.
Case~\ref{ppo:ld->ld} requires that loads for
the same address without store to the same address in between to be
issued in order. For example, all instructions in
Figure~\ref{fig:lda-lda-example} must be issued in order.

\begin{figure}[!htb]
  \centering
  \begin{tabular}{|l|}
    \hline
    $I_1:$ $r_1$ = Ld $a$ \\
    $I_2:$ $r_2$ = Ld ($b+r_1-r_1$) \\
    $I_3:$ $r_3$ = Ld $b$ \\
    $I_4:$ $r_4$ = Ld ($c+r_3-r_3$) \\
    \hline
  \end{tabular}
  \caption{Example for case~\ref{ppo:ld->ld}}\label{fig:lda-lda-example}
\end{figure}

It should be noted that the choice to enforce this same-address load-load ordering in GAM is kind of arbitrary, because we do not see any decisive argument to support either enforcing or relaxing this ordering.
On the one hand, implementations that execute loads for the the same address out of order will not violate single-thread correctness, and do not need the extra hardware to enforce this load-load ordering.
On the other hand, programmers may expect memory models to have the per-location SC property~\cite{cantin2003complexity}, i.e., all memory accesses for a single address appear to be sequentially consistent, and enforcing this same-address load-load ordering is an easy way to provide the per-location SC property.
The Alpha memory model~\cite{alpha1998} is the same as GAM in enforcing this ordering, while the RMO memory model~\cite{weaver1994sparc} chooses to relax this ordering completely.
In Section~\ref{sec:compare-rmo} programmers would like memory models to have the per-location SC property~\cite{cantin2003complexity}.
It should be noted that ARMv8.2 makes yet another choice in same-address load-load ordering which we will explain in Section~\ref{sec:compare-arm}.

Finally, we define $\PreservePO$ as the transitive closure of $\DepOrd$, $\SameAddrOrd$ and $\LSFOrd$.
\begin{definition}[Preserved program order $\PreservePO$\label{def:ppo}]
    $I_1 \PreservePO I_2$ if either
    \begin{enumerate}
        \item $I_1 \LSFOrd I_2$, or
        \item $I_1 \DepOrd I_2$, or
        \item $I_1 \SameAddrOrd I_2$, or
        \item \label{ppo:trans} there exists an instruction $I$ such that $I_1
        \PreservePO I$ and $I <_{ppo} I_2$.
    \end{enumerate}
\end{definition}

\subsubsection{Memory Axioms of GAM}

GAM has the following two axioms (the notation $\max_{mo}$ means to find the youngest instruction in $\MemOrd$):
\begin{itemize}
    \item \textbf{Axiom Inst-Order:} If $I_1 \PreservePO I_2$, then $I_1 \MemOrd I_2$.
    \item \textbf{Axiom Load-Value:}
    \begin{displaymath}
    \mathsf{St}\ a\ v \ReadFrom \mathsf{Ld}\ a \Rightarrow
    \mathsf{St}\ a\ v = \max_{mo}\{
    \mathsf{St}\ a\ v'\ |\ \mathsf{St}\ a\ v' \ProgOrd
    \mathsf{Ld}\ a\ \vee\ \mathsf{St}\ a\ v' \MemOrd \mathsf{Ld}\ a \}
    \end{displaymath}
\end{itemize}
The first axiom says that $\MemOrd$ must respect $\PreservePO$.
An interpretation of this axiom is that the local ordering constraints on executing two memory instructions in the processor must be preserved when these two memory accesses are performed globally.
The second axiom specifies the store that a load should read given $\MemOrd$ and $\ProgOrd$.
Intuitively, each store overshadows previous stores to the same address and thus, a load should not be able to read overshadowed values. 
The only complication is because of bypassing: a load may read one of its own store values before it is advertised, which means  a later load in other processors may still read the globally advertised store value in the memory.   
More precisely, the set of stores that are visible to a load consists of stores that either precede the load in $\ProgOrd$ or perform globally before the load does (i.e., precede the load in $\MemOrd$).
The store read by the load must be visible to the load, and cannot be overshadowed (in $\MemOrd$) by another store which is also visible to the load.

\subsection{An Operational Definition of GAM}
\label{sec:operational}

The operational model of GAM consists of $n$ processors $P_1\ldots P_n$ and a monolithic memory $m$.
Each processor $P_i$ consists of an ROB and a PC register.
The PC register contains the address of the next instruction to be fetched into ROB.
When an instruction is fetched, if the instruction is a branch, we predict the branch target address and update the PC register speculatively; otherwise we simply increment the PC register.
Each instruction in the ROB has a \emph{done} bit.
(We refer to an instruction as done if the
done bit is true, and as not done otherwise.)  Though
instructions that have been marked as done can be removed from the
ROB, we will not bother with this detail.

At each step of the execution one of the instructions marked as
not-done in the ROB of a processor $P_i$ is selected and executed and
(sometimes) marked as done. There is often a \emph{guard} condition
associated with the execution of an instruction, and an instruction
can be executed only if the guard is true. As will become clear soon
that sometimes the execution of an instruction cannot proceed even when its guard is true.

Our axiomatic model, permits very aggressive execution of load
instructions but it also requires that consecutive loads to the same address be done in
order.  If the operational model executed load instructions only when
the address for its preceding memory instructions were known, then we will not
be able to capture all the behaviors allowed by the axiomatic
model. Thus, in the operational model, we let a load execute even
before all the addresses of preceding memory instructions are known,
and then later kill a done load if an older memory instruction
happens to get the same address.  The kill of a load instruction
means that all the instruction younger than the killed load,
including that load itself, are discarded from the ROB, and the PC register is updated to make
instruction fetch begin by refetching the killed load instruction.

In order to implement these speculative loads, we need an additional
\emph{address-available} state bit in the ROB for each memory
instruction. This bit indicates when the address calculation has been
completed. Initially this bit is not set.
% unless the address is a constant.

We need to know if the source operands of an instruction are available
in order to execute the instruction.  If the operand is
specified as a source register $r$, then its availability is
determined by searching the ROB from the current instruction slot
towards older instructions until the first slot containing $r$ as the
destination register. (The search always terminates because we assume
that the ROB has been initialized with instructions that set initial register values).
If the slot containing the destination register is marked as
done then the operand is assumed to be available, otherwise not.

This operational model is also parametrized by the memory/fence ordering.
That is, it uses $\OrderedFunc$ function to control when a memory or fence instruction can be marked as done.

In the following we specify how to execute an instruction in the ROB
according to the preserved program order definition given in Section
\ref{sec:ppo}.  Each operational rule has a guard and a specified
action.
\begin{itemize}
    \item \textbf{Rule Fetch:} \\
    \emph{Guard:} True. \\
    \emph{Action:} Fetch a new instruction from the address stored in the PC register.
    Add the new instruction into the tail of ROB.
    If the new instruction is a branch, we predict the branch target address of the branch, update PC to be the predicted address, and record the predicted address in the ROB entry of the branch; otherwise we increment PC.
    
    \item \textbf{Rule Execute-Reg-to-Reg:} Execute a reg-to-reg instruction $I$. \\
    \emph{Guard:} $I$ is marked not-done and all source operands of $I$ are ready. \\
    \emph{Action:} Do the computation, record the result in the ROB slot, and mark $I$ as done.
    
    \item \textbf{Rule Execute-Branch:} Execute a branch instruction $I$. \\
    \emph{Guard:} $I$ is marked not-done and all source operands of $I$ are ready. \\
    \emph{Action:} Compute the branch target address and mark $I$ as done.
    If the computed target address is different from the previously predicted address (which is recorded in the ROB entry), then we kill all instructions which are younger than $I$ in the ROB (excluding $I$).
    That is, we remove those instructions from the ROB, and update the PC register to the computed branch target address.
    
    \item \textbf{Rule Execute-Fence:} Execute a fence instruction $I$. \\
    \emph{Guard:} $I$ is marked not-done, and for each older (memory or fence) instruction $I'$ such that $\OrderedFunc(I', I)$ is true, $I'$ is done. \\
    \emph{Action:} Mark $I$ as done.
    
    \item \textbf{Rule Execute-Load:} Execute a load instruction $I$ for address $a$. \\
    \emph{Guard:} $I$ is marked not-done, and the address-available bit is set to available, and for each older (memory or fence) instruction $I'$ such that $\OrderedFunc(I', I)$ is true, $I'$ is done.\\
    \emph{Action:} Search the ROB from $I$ towards the oldest instruction for the first not-done memory instruction with address $a$:
    \begin{enumerate}
        \item If a not-done load to $a$ is found then instruction $I$ cannot be executed, i.e., we do nothing.
        \item If a not-done store to $a$ is found then if the data for the store is ready, then execute $I$ by bypassing the data from the store, and mark $I$ as done; otherwise, $I$ cannot be executed.
        \item If nothing is found then execute $I$ by reading $m[a]$, and mark $I$ as done.
    \end{enumerate}
    
    \item \textbf{Rule Compute-Store-Data:} compute the data of a store instruction $I$. \\
    \emph{Guard:} the source registers for the data computation are ready. \\ 
    \emph{Action:} Compute the data of $I$ and record it in the ROB slot.
    
    \item \textbf{Rule Execute-Store:} Execute a store $I$ for address $a$. \\
    \emph{Guard:} $I$ is marked not-done and in addition all the following conditions must be true:
    \begin{enumerate}
        \item The address-available flag for $I$ is set,
        \item The data of $I$ is ready,
        \item For each older (memory or fence) instruction $I'$ such that $\OrderedFunc(I', I)$ is true, $I'$ is done,
        \item All older branch instructions are done,
        \item \label{guard:addr->st} All older loads and stores have their address-available flags set,
        \item All older loads and stores for address $a$ are done.
    \end{enumerate}
    \emph{Action:} Update $m[a]$ and mark $I$ as done.
    
    \item \textbf{Rule Compute-Mem-Addr:} Compute the address of a load or store instruction $I$. \\
    \emph{Guard:} The address-available bit is not set and the address operand is ready with value $a$\\
    \emph{Action:} We first set the address-available bit and record the address $a$ into the ROB entry of $I$.
    Then we search the ROB from $I$ towards the youngest instruction (excluding $I$) for the first memory instruction with address $a$. 
    If the instruction found is a done load, then we kill that load and all instructions that are younger than the load in the ROB.
    That is, we remove the load and all younger instructions from the ROB, and set the PC register to the instruction-fetch address of the load.
    Otherwise no instruction needs to be killed.
\end{itemize}

\subsection{Soundness: GAM Operational model $\subseteq$ GAM Axiomatic Model}\label{sec:op<axiom}

The goal is to show that for any execution of the operational model, we can construct $\langle \ProgOrd, \MemOrd, \ReadFrom\rangle$ which satisfies the GAM axioms and has the same program behavior as the operational execution.
To do this, we need to introduce some ghost states to the operational model, and show invariants that hold after every step in the operational model.

In the operational model, we assume there is a (ghost) global time which is incremented whenever a rule fires.
%% For a rule $R$, we use $ts(R)$ (i.e., the timestamp of rule $R$) to denote the global time right before $R$ fires.
We also assume each instruction $I$ in an ROB has the following ghost states which are accessed only in the proofs (all states start as $\top$):
\begin{itemize}
    \item $I.\DoneTS$: Records the current global time when a rule $R$ fires and marks $I$ as done.
    \item $I.\AddrTS$: Records the current global time for memory instruction $I$ when a Compute-Mem-Addr rule $R$ fires to compute the address of $I$.
    \item $I.\StDataTS$: Records the current global time for a store instruction $I$, when a Compute-Store-Data rule $R$ fires to compute the store data of $I$.
    \item $I.\FromSt$: Records the store read by $I$ if $I$ is a load.
    That is, the store is either the not-done store $I$ bypasses from or the done store with the maximum $\DoneTS$ among all done stores for $a$ when $I$ is marked as done.

%a store in the rule, then we record that store into $I.\FromSt$; %otherwise $I$ reads from the monolithic memory address $a$, we record %into $I.\FromSt$ 
\end{itemize}
In the final proof, we will use the states at the end of the operational execution to construct the axiomatic edges.
$\ProgOrd$ will be constructed by the order of instructions in ROB, $\ReadFrom$ will be constructed by the $\FromSt$ states of loads, and $\MemOrd$ will be constructed by the order of $\DoneTS$ timestamps of all memory instructions.

For convenience, we use $I.\LdVal$ to denote the load value if $I$ is a load, use $I.\MemAddr$ to denote the memory access address if $I$ is a memory instruction, and use $I.\StData$ to denote the store data if $I$ is a store.
These fields are $\top$ if the corresponding values are not available.

Given the model state at any time in the execution of the operational model, we can define the program order $\RobProgOrd$, data-dependency order $\RobDataDep$, address-dependency order $\RobAddrDep$, and a new relation $\RobNTPreservePO$ which is similar to the preserved program order. (we add suffix $rob$ to distinguish from the definitions in the axiomatic model):
\begin{itemize}
    \item $\RobProgOrd$: Instructions $I_1 \RobProgOrd I_2$ iff both $I_1$ and $I_2$ are in the same ROB and $I_1$ is older than $I_2$ in the ROB.
    \item $\RobDataDep$: $I_1 \RobDataDep I_2$ iff $I_1\RobProgOrd I_2$ and $I_2$ needs the result of $I_1$ as a source operand.
    \item $\RobAddrDep$: $I_1 \RobAddrDep I_2$ iff $I_1\RobProgOrd I_2$, and $I_2$ is a memory instruction, and $I_2$ needs the result of $I_1$ as a source operand to compute the memory address to access.
    \item $\RobNTPreservePO$: $I_1 \RobNTPreservePO I_2$ iff $I_1 \RobProgOrd I_2$ and at least one of the following conditions hold:
    \begin{enumerate}
        \item $I_1 \RobDataDep I_2$.
        \item $I_1$ is a branch, and $I_2$ is a store.
        \item $I_2$ is a store, and there exists a memory instruction $I$ such that $I_1\RobAddrDep I\RobProgOrd I_2$.
        \item $I_2$ is a load with $I_2.\MemAddr = a \neq \top$, and there exists a store $S$ with $S.\MemAddr = a$, and $I_1 \RobDataDep S \RobProgOrd I_2$, and there is no store $S'$ such that $S'.\MemAddr = a$ and $S \RobProgOrd S' \RobProgOrd I_2$.
        \item $I_1$ is a load with $I_1.\MemAddr = a \neq \top$, and $I_2$ is a store with $I_2.\MemAddr = a$.
        \item Both $I_1$ and $I_2$ are stores with $I_1.\MemAddr = I_2.\MemAddr = a \neq \top$.
        \item Both $I_1$ and $I_2$ are loads with $I_1.\MemAddr = I_2.\MemAddr = a \neq \top$, and there is no store $S$ such that $S.\MemAddr = a$ and $I_1 \RobProgOrd S \RobProgOrd I_2$.
        \item $\OrderedFunc(I_1, I_2)$ is true.
    \end{enumerate}
\end{itemize}
It should be noted that the way to compute $\RobNTPreservePO$ from $\RobProgOrd$ is almost the same as the way to compute $\PreservePO$ from $\ProgOrd$ except for two differences.
The first difference is that $\RobNTPreservePO$ is not made transitively closed; this is for simplifying the proof to some degree.
The second difference is that in case the definition needs the address of memory instructions, $\RobNTPreservePO$ ignores memory instructions which have not computed their addresses.
Since the address of every memory instruction will be computed at the end of the operational execution, the second difference will diminish by that time.
Since $\ProgOrd$ is defined by the $\RobProgOrd$ at the end of the operational execution,  $\PreservePO$ will be the transitive closure of $\RobNTPreservePO$ at the end of the operational execution.

With the above definitions, we give the invariants of any operation execution in Lemma~\ref{lem:rob:inv}.
Invariant~\ref{inv:rob:ppo} is a similar statement to the Inst-Order axiom, and will become exactly the same as that axiom at the end of the operational execution.
Invariants~\ref{inv:rob:addr} and \ref{inv:rob:st-data} captures the ordering effects of dependencies carried to the computation of memory address and store data.
Invariant~\ref{inv:rob:addr-st} captures guard~\ref{guard:addr->st} of the Execute-Store rule, and is also related to case~\ref{ppo:adep+po} of Definition~\ref{def:ppo-dep} for $\PreservePO$.
Invariant~\ref{inv:rob:kill-done-st} is an important property saying that stores are never written to the shared memory speculatively, so the model does not need any system-wide rollback.
Invariant~\ref{inv:rob:mem-val} constrains the current monolithic memory value.
Invariant~\ref{inv:rob:rf} constrains the store read by a load, and in particular, invariant~\ref{prop:rob:rf:done-max} will become the Load-Value axiom at the end of the operation execution.
The detailed proof can be found in Appendix~\ref{sec:gam_axi_contain_op}.
\begin{lemma}\label{lem:rob:inv}
    The following invariants hold during the execution of the operational model:
    \begin{enumerate}
        \item \label{inv:rob:ppo} If $I_1 \RobNTPreservePO I_2$ and $I_2.\DoneTS \neq \top$, then $I_1.\DoneTS \neq \top$ and $I_1.\DoneTS < I_2.\DoneTS$.
        \item \label{inv:rob:addr} If $I_1 \RobAddrDep I_2$ and $I_2.\AddrTS \neq \top$, then $I_1.\DoneTS \neq \top$ and $I_1.\DoneTS < I_2.\AddrTS$.
        \item \label{inv:rob:st-data} If $I_1 \RobDataDep I_2$, and not $I_1\RobAddrDep I_2$, and $I_2$ is a store, and $I_2.\StDataTS \neq \top$, then $I_1.\DoneTS \neq \top$ and $I_1.\DoneTS < I_2.\StDataTS$.
        \item \label{inv:rob:addr-st} If $I_1 \RobProgOrd I_2$, and $I_1$ is a memory instruction, and $I_2$ is a store, and $I_2.\DoneTS \neq \top$, then $I_1.\AddrTS \neq \top$ and $I_1.\AddrTS < I_2.\DoneTS$.
        \item \label{inv:rob:kill-done-st} We never kill a done store.
        \item \label{inv:rob:mem-val} For any address $a$, let $S$ be the store with the maximum $\DoneTS$ among all the done stores for address $a$.
        The monolithic memory value for $a$ is equal to $S.\StData$.
        \item \label{inv:rob:rf} For any done load $L$, let $S = L.\FromSt$ (i.e., $S$ is the store read by $L$).
        All of the following properties are satisfied:
        \begin{enumerate}
            \item \label{prop:rob:rf:no-kill} $S$ still exists in an ROB (i.e., S is not killed).
            \item \label{prop:rob:rf:addr-data} $S.\MemAddr = L.\MemAddr$ and $S.\StData = L.\LdVal$.
            \item \label{prop:rob:rf:done-no-st} If $S$ is done, then there is no not-done store $S'$ such that $S'.addr = a$ and $S'\RobProgOrd L$.
            \item \label{prop:rob:rf:done-max} If $S$ is done, then for any other done store $S'$ with $S'.\MemAddr = L.\MemAddr$, if $S'\RobProgOrd L$ or $S'.\DoneTS < L.\DoneTS$, then $S'.\DoneTS < S.\DoneTS$.
            \item \label{prop:rob:rf:not-done} If $S$ is not done, then $S\RobProgOrd L$, and there is no store $S'$ such that $S'.\MemAddr = L.\MemAddr$ and $S\RobProgOrd S'\RobProgOrd L$.
        \end{enumerate}
    \end{enumerate}
\end{lemma}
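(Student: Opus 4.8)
The plan is to prove all seven invariants \emph{simultaneously} by induction on the number of rule firings (equivalently, on the ghost global time). In the base case the ROB holds only the initializing instructions and every timestamp ($\DoneTS$, $\AddrTS$, $\StDataTS$) is $\top$, so each invariant holds vacuously (its hypothesis $\cdot\neq\top$ fails) except Invariant~\ref{inv:rob:mem-val}, which holds because memory is initialized consistently with the initializing stores. For the inductive step I assume all invariants hold in the pre-state and case-split on which of the eight rules fires. The observation that keeps the analysis manageable is that each rule writes only a small set of ghost states: \textbf{Fetch} appends a fresh all-$\top$ entry (so no hypothesis becomes newly true); the \textbf{Execute-*} rules each set one $\DoneTS$ (Execute-Load additionally sets $\FromSt$, Execute-Store additionally writes memory); \textbf{Compute-Mem-Addr} sets one $\AddrTS$ and may kill; \textbf{Compute-Store-Data} sets one $\StDataTS$. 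Hence for each rule all but a few invariants are preserved immediately, and I only discharge the ones whose hypotheses can newly become true.

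The backbone is Invariant~\ref{inv:rob:ppo} for a freshly-done $I_2$, established by walking through the eight disjuncts defining $I_1\RobNTPreservePO I_2$. For a non-memory $I_2$ the data-dependency disjunct reduces to the Execute-* guard demanding all source operands ready, which forces the producer $I_1$ done. For a memory $I_2$ the data-dependency disjuncts chain the timestamp invariants: if $I_1\RobAddrDep I_2$ then Invariant~\ref{inv:rob:addr} gives $I_1.\DoneTS < I_2.\AddrTS$, and $I_2.\AddrTS < I_2.\DoneTS$ since Compute-Mem-Addr fires for $I_2$ before it can execute; if $I_2$ is a store with a pure data dependency, Invariant~\ref{inv:rob:st-data} together with $I_2.\StDataTS < I_2.\DoneTS$ does the same. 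The branch-before-store and same-address disjuncts follow from the Execute-Store guards (older branches done; older same-address loads and stores done) and, for loads, from the Execute-Load search, which refuses to execute past a not-done same-address load or store. The address-dependency-to-store disjunct (case~\ref{ppo:adep+po} of Definition~\ref{def:ppo-dep}) needs two auxiliaries at once: from $I_1\RobAddrDep I\RobProgOrd I_2$ I invoke Invariant~\ref{inv:rob:addr-st} for $I.\AddrTS < I_2.\DoneTS$, then Invariant~\ref{inv:rob:addr} for $I_1.\DoneTS < I.\AddrTS$. The $\OrderedFunc$ disjunct is exactly the ordering guard shared by every Execute-* rule. Invariants~\ref{inv:rob:addr}, \ref{inv:rob:st-data} and \ref{inv:rob:addr-st} are themselves read off the guards of Compute-Mem-Addr, Compute-Store-Data and guard~\ref{guard:addr->st} of Execute-Store.

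For Invariant~\ref{inv:rob:kill-done-st} I inspect the two kill sites. A mispredicting \textbf{Execute-Branch} kills instructions younger than a branch only now becoming done; guard~4 of Execute-Store (all older branches done) shows no younger store could already be done, so none is killed. A \textbf{Compute-Mem-Addr} for an instruction $I$ kills a done load and everything younger; any store younger than $I$ would, by guard~\ref{guard:addr->st}, have required $I.\AddrTS\neq\top$ to execute, contradicting that $I.\AddrTS$ is set only now. Invariant~\ref{inv:rob:mem-val} changes only under \textbf{Execute-Store}: the executing store to $a$ receives the current (hence maximal) $\DoneTS$ among done stores to $a$, and memory is set to its data, while other addresses are untouched.

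The main obstacle is Invariant~\ref{inv:rob:rf} with its five interlocking sub-properties. When \textbf{Execute-Load} sets $L.\FromSt = S$ I read the sub-properties off the search semantics: a bypass from the first not-done same-address store yields property~\ref{prop:rob:rf:not-done} ($S$ is the closest preceding store), while a read from $m[a]$ uses Invariant~\ref{inv:rob:mem-val} to identify $S$ as the maximal done store and yields properties~\ref{prop:rob:rf:addr-data} and \ref{prop:rob:rf:done-max}. The delicate reasoning is preserving property~\ref{prop:rob:rf:done-max} for an \emph{already-done} load $L$ when another rule fires. When \textbf{Execute-Store} makes a store $S'$ to the same address done, $S'$ gets the current maximal $\DoneTS > L.\DoneTS$, so I must rule out the dangerous case $S'\RobProgOrd L$: here the Execute-Load semantics force either that $L$ read from memory (whence all older same-address instructions, including $S'$, were already done) or that $S'$ was handled via guard~6 of whichever same-address store $L$ bypassed, so $S'$ cannot first become done after $L$. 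When the very store $S=L.\FromSt$ that $L$ bypassed transitions from not-done to done, I show property~\ref{prop:rob:rf:not-done} converts cleanly into property~\ref{prop:rob:rf:done-max}, because $S$ then carries the maximal timestamp and guard~6 guarantees every older same-address store was already done with a smaller timestamp. Finally, property~\ref{prop:rob:rf:no-kill} under \textbf{Compute-Mem-Addr} splits cleanly: a done source store is protected by Invariant~\ref{inv:rob:kill-done-st}, and a not-done source store satisfies $S\RobProgOrd L$ and lies in the kill range below $L$, so any kill removing $S$ would also remove $L$, contradicting that $L$ survives. I expect this invariant to be the hardest precisely because it must reconcile speculative, possibly-rolled-back load execution with the non-speculative monolithic memory and the bypass path at once, threading its sub-properties through every rule that alters a store's done status, a load's value, or the ROB contents.
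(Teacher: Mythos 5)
Your overall architecture matches the paper's proof closely: a simultaneous induction over rule firings, a per-rule case analysis, Invariants~\ref{inv:rob:addr}--\ref{inv:rob:addr-st} read directly off the guards of Compute-Mem-Addr, Compute-Store-Data and Execute-Store, the two-site kill analysis for Invariant~\ref{inv:rob:kill-done-st}, and the threading of the five sub-properties of Invariant~\ref{inv:rob:rf} through Execute-Load, Execute-Store and Compute-Mem-Addr (your handling of the bypassed store becoming done, and of a killed $\FromSt$ store dragging its load into the kill range, are essentially the paper's arguments). However, there is one genuine gap, and it is created by your organizing principle that each rule ``writes only a small set of ghost states'' so that you ``only discharge the [invariants] whose hypotheses can newly become true.'' The hypothesis of Invariant~\ref{inv:rob:ppo} is not just $I_2.\DoneTS \neq \top$ but also $I_1 \RobNTPreservePO I_2$, and $\RobNTPreservePO$ is address-dependent: firing Compute-Mem-Addr creates \emph{new} $\RobNTPreservePO$ edges between instructions that already exist, even though the only timestamp it sets is one $\AddrTS$. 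Concretely, computing a load $L$'s address creates new same-address edges from $L$ to younger stores and loads (cases 5 and 7 of the definition of $\RobNTPreservePO$), and computing a store $S$'s address creates edges from $S$ (case 6) and even edges \emph{across} $S$: a new edge $I_1 \RobNTPreservePO L$ whenever $I_1 \RobDataDep S \RobProgOrd L$ with matching addresses and no intervening same-address store (case 4). The target of such a new edge can a priori already be done, in which case Invariant~\ref{inv:rob:ppo} must be re-established; your plan treats Compute-Mem-Addr as touching only Invariant~\ref{inv:rob:addr}, the kill invariants, and property~\ref{prop:rob:rf:no-kill}, so this case is never discharged. This is precisely why the paper proves a separate bookkeeping lemma (Lemma~\ref{lem:rob:ppo}) cataloguing which edges each rule can add, remove, or leave untouched.

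Closing the gap is not automatic; it is where the Compute-Mem-Addr cases of the paper's proof do real work. When the new edge's target is a younger done \emph{store}, Invariant~\ref{inv:rob:addr-st} in the pre-state yields a contradiction (its doneness would force the address of the rule's own instruction to be computed already, against the rule's guard) --- this is your guard-\ref{guard:addr->st} observation redeployed. But when the target is a younger done \emph{load} $I_2$, the argument must combine the rule's forward ROB search with the induction hypothesis: since case 4 (resp.\@ case 7) guarantees no same-address store intervenes, the first same-address memory instruction the search finds is a load; if it is done, it and everything younger --- including $I_2$ --- is killed, so the offending edge disappears from the post-state; if it is not done, then it was already $\RobNTPreservePO$-before the done $I_2$ in the pre-state (case 7), contradicting Invariant~\ref{inv:rob:ppo} before the rule fired. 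Without this piece the induction does not close, because Invariant~\ref{inv:rob:ppo} is the backbone that your other cases (the Execute-Branch kill analysis, property~\ref{prop:rob:rf:not-done} preservation, and several of the Execute-Load subcases) lean on. A smaller omission in the same spirit: your kill-range argument for property~\ref{prop:rob:rf:no-kill} is stated only for Compute-Mem-Addr, but it is needed verbatim for Execute-Branch kills as well.
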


With the above invariants, we can finally prove the following soundness theorem.
\begin{theorem}
    GAM operational model $\subseteq$ GAM axiomatic model.
\end{theorem}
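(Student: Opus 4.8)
The plan is to take the ghost-annotated operational model, run it to a \emph{completed} configuration in which every instruction in every ROB has been marked done (so that every observable program behavior of the operational model arises this way), and read off the axiomatic witness directly from the final ghost state. Concretely, I would set $\ProgOrd := \RobProgOrd$ (the per-processor ROB order), define $\MemOrd$ on the memory instructions by increasing $\DoneTS$, and define $\ReadFrom$ so that each load $L$ reads from the store $L.\FromSt$. The first bookkeeping step is well-definedness: $\MemOrd$ is a strict total order because the global clock strictly increases on every rule firing and each instruction stamps a unique $\DoneTS$ when it is marked done, so all timestamps are pairwise distinct; $\ProgOrd$ is a per-processor total order by the structure of the ROB; and $\ReadFrom$ is a genuine function into surviving stores by parts~\ref{prop:rob:rf:no-kill} and \ref{prop:rob:rf:addr-data} of Invariant~\ref{inv:rob:rf}. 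By construction the committed instruction stream and its load values coincide with the operational behavior, the latter using $S.\StData = L.\LdVal$ from part~\ref{prop:rob:rf:addr-data}.

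It then remains to check the two GAM axioms against this witness. For \textbf{Inst-Order}, I would first invoke the observation (made when $\RobNTPreservePO$ was introduced) that once all memory addresses are resolved --- which holds in a completed configuration --- the axiomatic $\PreservePO$ is exactly the transitive closure of $\RobNTPreservePO$ computed on the final $\RobProgOrd$. A single $\RobNTPreservePO$ edge between two memory instructions yields strictly increasing $\DoneTS$ by Invariant~\ref{inv:rob:ppo}, hence a $\MemOrd$ edge; and an arbitrary $\PreservePO$ pair is connected by a chain of such edges, whose intermediate instructions may be fences but each of whose steps still strictly increases $\DoneTS$, so transitivity of $<$ on timestamps gives $I_1 \MemOrd I_2$ for every memory pair with $I_1 \PreservePO I_2$. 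For \textbf{Load-Value}, in a completed configuration every store is done, so for a load $L$ with $S = L.\FromSt$ part~\ref{prop:rob:rf:done-max} of Invariant~\ref{inv:rob:rf} states precisely that every other done store $S'$ to the same address that is visible to $L$ (i.e.\ $S' \ProgOrd L$ or $S' \MemOrd L$) satisfies $S' \MemOrd S$; since $S$ itself is visible to $L$ (it is either bypassed from an older store in $\ProgOrd$ or is the done store of maximal $\DoneTS$ below $L.\DoneTS$), this says exactly that $S = \max_{mo}$ of the visible set, which is the Load-Value axiom.

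The construction above is essentially an assembly step, so the real difficulty is pushed into Lemma~\ref{lem:rob:inv}, which must be proved by induction over the operational rules. The two invariants doing the heavy lifting are Invariant~\ref{inv:rob:ppo} (which must be shown to be preserved by every rule that marks an instruction done --- Execute-Load, Execute-Store, Execute-Fence, and so on --- including the delicate cases~\ref{ppo:adep+po} and \ref{ppo:ddep+sa} that depend on address and store-data timestamps through Invariants~\ref{inv:rob:addr}--\ref{inv:rob:addr-st}) and Invariant~\ref{inv:rob:rf}, which tracks, across bypassing, speculation, and kills, exactly which store each load has read. I expect Invariant~\ref{inv:rob:rf} to be the main obstacle: its five sub-properties must be re-established after Execute-Load, Execute-Store, and Compute-Mem-Addr (whose kill action is where a mis-speculated load is squashed), and the argument that a load's recorded $\FromSt$ survives and remains maximal in $\MemOrd$ leans on Invariant~\ref{inv:rob:kill-done-st} (done stores are never killed) and on the same-address ordering guards of the load and store rules. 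Once Lemma~\ref{lem:rob:inv} is in hand, the theorem follows by instantiating it at the final configuration as above.
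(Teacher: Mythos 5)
Your proposal is correct and takes essentially the same route as the paper's own proof: both construct the axiomatic witness from the final ghost state ($\ProgOrd$ from ROB order, $\MemOrd$ from $\DoneTS$ timestamps, $\ReadFrom$ from the $\FromSt$ fields), derive Inst-Order from Invariant~\ref{inv:rob:ppo} combined with the observation that $\PreservePO$ is the transitive closure of $\RobNTPreservePO$ at the end of execution, obtain Load-Value directly from Invariant~\ref{prop:rob:rf:done-max} once every instruction is done, and push all of the substantive work into the rule-by-rule induction of Lemma~\ref{lem:rob:inv}. Your additional remarks on well-definedness of $\MemOrd$ (distinct timestamps from the strictly increasing global clock) and on $\PreservePO$ chains passing through fences only make explicit steps the paper leaves implicit.
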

\begin{proof}
    For any execution of the operational model, at the end of the execution, all instructions must be done.
    We construct $\langle \ProgOrd, \MemOrd, \ReadFrom \rangle$ using the ending state of the operational execution as follows:
    \begin{itemize}
        \item $\ProgOrd$ is constructed as the order of instructions in each ROB.
        \item $\MemOrd$ is constructed by the ordering of $\DoneTS$, i.e., for two memory instructions $I_1$ and $I_2$, $I_1\MemOrd I_2$ iff $I_1.\DoneTS < I_2.\DoneTS$.
        \item $\ReadFrom$ is constructed by the $\FromSt$ fields, i.e., for a load $L$ and a store $S$, $S\ReadFrom L$ iff $S = L.\FromSt$.
    \end{itemize}
    Invariant~\ref{prop:rob:rf:addr-data} ensures that the constructed $\ReadFrom$ and $\ProgOrd$ are consistent with each other (e.g., it rules out the case that $\ReadFrom$ says a load should read a store with value 1, but $\ProgOrd$ says the load has value 2).
    
    Since all instructions are done at the end of execution, then invariant~\ref{prop:rob:rf:done-max} becomes the Load-Value axiom.
    Therefore, the constructed $\langle \ProgOrd, \MemOrd, \ReadFrom \rangle$ satisfy the Load-Value axiom.
    
    At the end of execution, invariant~\ref{inv:rob:ppo} becomes: if $I_1\RobNTPreservePO I_2$, then $I_1.\DoneTS < I_2.\DoneTS$.
    Note that the $\PreservePO$ computed from $\ProgOrd$ is actually the transitive closure of $\RobNTPreservePO$.
    Since instructions are totally ordered by $\DoneTS$ fields, we have if  $I_1\PreservePO I_2$, then $I_1.\DoneTS < I_2.\DoneTS$.
    Since $\MemOrd$ is defined by the order of $\DoneTS$ fields, the Inst-Order axiom is also satisfied.
\end{proof}

\subsection{Completeness: GAM Axiomatic model $\subseteq$ GAM Operational Model}\label{sec:axiom<op}
\begin{theorem}
    GAM axiomatic model $\subseteq$ GAM operational model.
\end{theorem}
\begin{proof}
    The goal is that for any legal axiomatic relations $\langle \ProgOrd, \MemOrd, \ReadFrom\rangle$ (which satisfy the GAM axioms), we can run the operational model to give the same program behavior.
    The strategy to run the operational model consists of two major phases.
    In the first phase, we only fire Fetch rules to fetch all instructions into all ROBs according to $\ProgOrd$.
    During the second phase, in each step we fire a rule that either marks an instruction as done or computes the address or data of a memory instruction.
    Which rule to fire in a step depends on the current state of the operational model and $\MemOrd$.
    Here we give the detailed algorithm that determines which rule to fire in each step:
    \begin{enumerate}
        \item If in the operational model there is a not-done reg-to-reg or branch instruction whose source registers are all ready, then we fire an Execute-Reg-to-Reg or Execute-Branch rule to execute that instruction.
        \item If the above case does not apply, and in the operational model there is a memory instruction, whose address is not computed but the source registers for the address computation are all ready, then we fire a Compute-Mem-Addr rule to compute the address of that instruction.
        \item If neither of the above cases applies, and in the operational model there is a store instruction, whose store data is not computed but the source registers for the data computation are all ready, then we fire a Compute-Store-Data rule to compute the store data of that instruction.
        \item If none of the above cases applies, and in the operational model there is a fence instruction and the guard of the Execute-Fence rule for this fence is ready, then we fire the Execute-Fence rule to execute that fence.
        \item \label{sim:mem} If none of the above cases applies, then we find the oldest instruction in $\MemOrd$, which is not-done in the operational model, and we fire an Execute-Load or Execute-Store rule to execute that instruction.
    \end{enumerate}
    Before giving the invariants, we give a definition related  to the ordering of stores for the same address.
    For each address $a$, all stores for $a$ are totally ordered by $\MemOrd$, and we refer to this total order of stores for $a$ as $<_{co}^a$.
    
    Now we show the invariants.
    After each step, we maintain the following invariants:
    \begin{enumerate}
        \item The order of instructions in each ROB in the operational model is the same as the $\ProgOrd$ of that processor in the axiomatic relations.
        \item \label{inv:result} The results of all the instructions that have been marked as done so far in the operational model are the same as those in the axiomatic relations.
        \item All the load/store addresses that have been computed so far in the operational model are the same as those in the axiomatic relations.
        \item All the store data that have been computed so far in the operational model are the same as those in the axiomatic relations.
        \item \label{inv:no-kill} No kill has ever happened in the operational model.
        \item For the rule fired in each step that we have performed so far, the guard of the rule is satisfied the at that step (i.e., the rule can fire).
        \item \label{inv:done} In each step that we have performed so far, if we fire a rule to execute an instruction (especially a load) in that step, the instruction must be marked as done by the rule.
        \item \label{inv:store} For each address $a$, the order of all the store updates on monolithic memory address $a$ that have happened so far in the operational model is a prefix of $<_{co}^a$.
    \end{enumerate}
    The detailed proof of the invariants can be found in Appendix~\ref{sec:gam_op_contain_axi}.
\end{proof}

\section{COM: an Alternative Axiomatic Model}\label{sec:COM}

In this section, we present an alternative (but still parameterized) axiomatic formulation that is perhaps less intuitive, but nevertheless in common use due to its computational efficiency.
We call this formulation the \emph{COM} model (where ``COM'' stands for communication, as described below).
We first present a proof of equivalence between the GAM axioms and the COM axioms.  This in turn implies that COM is also equivalent to the operational definition of GAM.  We then implement both axiomatic models in Alloy~\cite{alloy} in order to perform sanity checking and empirical testing of the models and of the proofs.

\subsection{The COM Axioms}
The COM model is defined in terms of three basic relations and three derived relations, plus $\PreservePO$:
\begin{itemize}
  \item Basic relations:
    \begin{itemize}
      \item Program order ($\ProgOrd$), as before
      \item Reads-from ($\ReadFrom$), as before
      \item Coherence ($\Coh$), a total order over the writes to each memory address
    \end{itemize}
  \item Derived relations:
    \begin{itemize}
      %\item Reads-from internal ($\Rfi$), which is the subset of $\ReadFrom$ for which both the read and the write are in the same thread
      \item Reads-from external ($\Rfe$), which is the subset of $\ReadFrom$ for which both the read and the write are in different threads
      \item From-reads ($\Fr$=${\RfInv};\Coh$), which relates each read $r$ to every write which follows the $\ReadFrom$-source of $r$ in $\Coh$.  ($\RfInv$ indicates the inverse of $\ReadFrom$)
      \item Program order, same location ($\PoLoc$), which is the subset of program order that relates memory accesses to the same memory address
    \end{itemize}
  %\item Parameterized relations:
  %  \begin{itemize}
  %    \item Preserved program order ($\PreservePO$), as before.  We assume that $\PreservePO$ contains at least the edges of Definition \ref{def:ppo-same-addr}.
  %  \end{itemize}
\end{itemize}

Another derived relation $<_{com}=\ReadFrom \cup \Coh \cup \Fr$ is often defined as a convenient shorthand in this style of model (hence our choice of the name ``COM''), but we do not use it in this paper.

\noindent In the COM formulation, an execution is legal if it satisfies the following two axioms:

\begin{itemize}
  \item \textbf{Axiom SC-per-Location:} $\acyclic(\ReadFrom \cup \Coh \cup \Fr \cup \PoLoc)$
  \item \textbf{Axiom Causality:} $\acyclic(\Rfe \cup \Coh \cup \Fr \cup \PreservePO)$
\end{itemize}

\subsection{Equivalence of GAM and COM}

The complete proofs are provided in Appendix~\ref{sec:gamcom}.  We provide an intuition here.

To prove that GAM $\subseteq$ COM, we must do two things: 1) find a suitable choice of $\Coh$, which does not exist in the GAM model, and 2) prove that if the GAM axioms are satisfied, the COM axioms are satisfied.  Of course, the natural choice for $\Coh$ is to simply take the restriction of $\MemOrd$ that relates only stores to the same address, and that is indeed what we use.  It remains to show that for any choice of $\MemOrd$ in the GAM axioms, the two COM axioms are satisfied.

We start with a lemma:
\begin{lemma}\label{lem:com_in_memord}
  All of $\Rfe$, $\Coh$, $\Fr$, and $\PreservePO$ are contained in $\MemOrd$.
\end{lemma}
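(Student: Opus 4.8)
The plan is to prove the four containments separately, since in this direction of the equivalence $\Coh$ is chosen as the same-address restriction of $\MemOrd$. Two of the four are immediate: $\PreservePO \subseteq \MemOrd$ is precisely Axiom Inst-Order, and $\Coh \subseteq \MemOrd$ holds by construction, because $\Coh$ is defined here as the restriction of $\MemOrd$ to pairs of same-address stores. The only real content is in the $\Rfe$ and $\Fr$ cases, and both follow from Axiom Load-Value together with the fact that $\MemOrd$ totally orders all memory instructions.

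For $\Rfe \subseteq \MemOrd$, I would take $S \Rfe L$, so $S \ReadFrom L$ with $S$ and $L$ on different processors. By Axiom Load-Value, $S$ is the $\max_{mo}$ of the set of same-address stores $S'$ with $S' \ProgOrd L$ or $S' \MemOrd L$; in particular $S$ itself lies in that set. Since $\ProgOrd$ is per-processor and $S$, $L$ sit on different processors, $S \ProgOrd L$ is impossible, so membership forces $S \MemOrd L$, as desired. Note that for an \emph{internal} reads-from edge this argument breaks down (the store could be visible only through $\ProgOrd$), which is exactly why the lemma is stated for $\Rfe$ rather than for all of $\ReadFrom$.

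For $\Fr \subseteq \MemOrd$, I would take $L \Fr W$, i.e.\ there is a store $S$ with $S \ReadFrom L$ and $S \Coh W$; thus $S$ and $W$ are stores to the common address $a$ of $S$ and $L$, and $S \MemOrd W$. The goal is $L \MemOrd W$. Because $\MemOrd$ totally orders the memory instructions and the store $W$ differs from the load $L$, either $L \MemOrd W$ or $W \MemOrd L$. If the latter held, then $W$ would be a same-address store with $W \MemOrd L$, hence a member of the visible set quantified over in Axiom Load-Value; but $S \MemOrd W$ would then exhibit a strictly $\MemOrd$-younger element of that set than $S$, contradicting the claim that $S$ is its $\max_{mo}$. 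Hence $L \MemOrd W$.

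The bookkeeping is routine; the one point requiring care is the maximality argument for $\Fr$, where I must ensure the witness store $W$ genuinely lands in the visible set that Axiom Load-Value ranges over. This relies on $W$ being to the same address as $L$ (inherited through $\Coh$ from the reads-from source $S$) and on using totality of $\MemOrd$ to convert the failure of $L \MemOrd W$ into the usable hypothesis $W \MemOrd L$. I expect this to be the main, though still modest, obstacle.
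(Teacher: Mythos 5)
Your proof is correct and follows essentially the same route as the paper's: $\Coh$ by construction, $\PreservePO$ by Inst-Order, $\Rfe$ by noting that membership of the source store in the Load-Value candidate set forces $\MemOrd$ since $\ProgOrd$ is ruled out across processors, and $\Fr$ by deriving $S \MemOrd W \MemOrd L$ and contradicting the $\max_{mo}$ claim of the Load-Value axiom. Your phrasing of the $\Fr$ case via totality of $\MemOrd$ is only cosmetically different from the paper's direct proof by contradiction.
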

\begin{proof}
  Straightforward; see appendix.
\end{proof}

With this lemma, it is easy to show that the Causality axiom is satisfied:
\begin{theorem}
  The Causality axiom is satisfied.
\end{theorem}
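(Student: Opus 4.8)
The plan is to reduce the Causality axiom entirely to Lemma~\ref{lem:com_in_memord}. The axiom requires that the relation $\Rfe \cup \Coh \cup \Fr \cup \PreservePO$ be acyclic. By the lemma, each of the four constituent relations is individually contained in $\MemOrd$, so their union is likewise contained in $\MemOrd$; that is, $\Rfe \cup \Coh \cup \Fr \cup \PreservePO \subseteq \MemOrd$. Thus it suffices to observe that any subrelation of $\MemOrd$ is acyclic.

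The key step is the elementary fact that $\MemOrd$ is itself acyclic. Recall that $\MemOrd$ is, by construction, a total order over all memory instructions; being a (strict) total order it is irreflexive and transitive, and therefore admits no cycle. Since acyclicity is inherited downward---a relation contained in an acyclic relation cannot itself contain a cycle---the union $\Rfe \cup \Coh \cup \Fr \cup \PreservePO$, being a subrelation of $\MemOrd$, is acyclic. This is precisely the statement of the Causality axiom.

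I expect no real obstacle in this argument: all of the substantive content has been pushed into Lemma~\ref{lem:com_in_memord}, whose own proof the paper defers to the appendix as ``straightforward.'' The only point that warrants a moment's attention is that $\PreservePO$ may in principle order fence instructions, whereas $\MemOrd$ is described as a total order over memory instructions; however, the lemma already asserts $\PreservePO \subseteq \MemOrd$, so once the lemma is granted there is nothing further to verify. The theorem is therefore a one-line consequence: a union of subrelations of an acyclic total order is acyclic.
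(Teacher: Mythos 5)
Your proposal is correct and follows exactly the paper's own argument: by Lemma~\ref{lem:com_in_memord} the union $\Rfe \cup \Coh \cup \Fr \cup \PreservePO$ is contained in $\MemOrd$, and since $\MemOrd$ is a total (hence acyclic) order, any subrelation of it is acyclic. Your side remark about $\PreservePO$ possibly involving fences is a reasonable sanity check, but as you note, it is already discharged by the lemma.
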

\begin{proof}
  By Lemma~\ref{lem:com_in_memord}, the union $\Rfe \cup \Coh \cup \Fr \cup \PreservePO$ is a subset of $\MemOrd$.  Therefore, since $\MemOrd$ is acyclic, $\Rfe \cup \Coh \cup \Fr \cup \PreservePO$ must also be acyclic.
\end{proof}

The SC-per-Location axiom will take a bit more work to prove.
To start, define $\Eco$ as the union of the following relations:
\begin{itemize}
  \item $\Coh$ (Write to Write)
  \item $\Fr$ (Read to Write)
  \item ${\Coh}^*;\ReadFrom$ (Write to Read)
  \item $\RfInv;{\Coh}^*;\ReadFrom$ (Read to Read)
\end{itemize}

\begin{lemma}\label{lem:eco_either}
  For all pairs $i_1$, $i_2$ of memory accesses to the same address, either $i_1\Eco i_2$ or $i_2\Eco i_1$.
\end{lemma}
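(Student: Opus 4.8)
The plan is to do a case analysis on the access types (load versus store) of $i_1$ and $i_2$, using the single structural fact supplied by the COM setup: $\Coh$ is a \emph{total} order over the stores to each fixed address $a$. The only additional ingredient I need is that every load has a well-defined $\ReadFrom$-source store to the same address; I would invoke the standard convention that each address is initialized by an implicit initial store, so that this source always exists. With these two facts in hand, each of the three shapes of the pair $\{i_1,i_2\}$ maps cleanly onto one of the four defining components of $\Eco$.

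First, the store-store case: if both $i_1$ and $i_2$ are stores to $a$, then by totality of $\Coh$ over stores to $a$ they are $\Coh$-comparable, and since $\Coh\subseteq\Eco$ we are immediately done in whichever direction $\Coh$ orders them. Second, the mixed case: suppose $i_1$ is a load $R$ and $i_2$ is a store $W$ (the other orientation is symmetric). Let $S$ be the $\ReadFrom$-source of $R$, itself a store to $a$. Totality of $\Coh$ gives either $S\Coh W$ or $W{\Coh}^*S$ (the latter folding in the possibility $S=W$). In the first subcase $R\Fr W$ holds by definition of $\Fr={\RfInv};\Coh$, routing through $S$, so $R\Eco W$; in the second subcase $W\,({\Coh}^*;\ReadFrom)\,R$ holds, again routing through $S$, so $W\Eco R$.

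Third, the load-load case: suppose $i_1=R_1$ and $i_2=R_2$ are loads with $\ReadFrom$-sources $S_1$ and $S_2$ respectively. By totality of $\Coh$ we may assume without loss of generality $S_1{\Coh}^*S_2$ (the reverse order is symmetric). Then $R_1\,({\RfInv};{\Coh}^*;\ReadFrom)\,R_2$ holds, unfolding through the intermediate stores $S_1$ and $S_2$, so $R_1\Eco R_2$. This exhausts all shapes of the pair, giving $i_1\Eco i_2$ or $i_2\Eco i_1$ in every case.

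The one place that requires genuine care — and essentially the only obstacle — is the bookkeeping around the reflexive closure ${\Coh}^*$ versus the strict step $\Coh$. Because $\Coh$ is a \emph{strict} total order, the sources can coincide ($S=W$ in the mixed case, or $S_1=S_2$ in the load-load case), and these degenerate situations are exactly what the closures in the definition of $\Eco$ are designed to absorb. I would make sure each $\Eco$-component is invoked with the correct closure: the reflexive-transitive ${\Coh}^*$ on the store-to-read and read-to-read legs so that equal-source cases are covered, and the plain $\Coh$ on the $\Fr$ leg where a strict step is genuinely present. Everything else is a direct unfolding of the relational composition operators.
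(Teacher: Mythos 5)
Your proposal is correct and matches the paper's proof essentially step for step: the same case split on write--write, write--read, and read--read pairs, the same appeal to the totality of $\Coh$ routed through each load's $\ReadFrom$-source, and the same use of ${\Coh}^*$ to absorb the equal-source degenerate cases (the paper merely enumerates $w=w'$ and $w\Coh w'$ as separate subcases where you fold them into one). Your explicit remark about every load having a well-defined source store is a background assumption the paper uses implicitly, not a divergence in approach.
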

\begin{proof}
  By construction; see appendix.
\end{proof}

If $i_1$ and $i_2$ are related in program order, then the $\Eco$ direction must match:
\begin{lemma}\label{lem:eco_poloc}
  If $i_1\PoLoc i_2$, then $i_1\Eco i_2$.
\end{lemma}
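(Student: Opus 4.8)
The plan is to prove the containment $\PoLoc \subseteq \Eco$ by a case analysis on whether each of $i_1$ and $i_2$ is a load or a store, routing each of the four combinations to the matching disjunct in the definition of $\Eco$. Throughout I assume $i_1 \ProgOrd i_2$ with both accessing the same address $a$, and I rely on two ingredients: (i) the same-address fragment $\SameAddrOrd$ of $\PreservePO$ (Definition~\ref{def:ppo-same-addr}), which the Inst-Order axiom promotes to $\MemOrd$ edges; and (ii) the Load-Value axiom, which pins each load's source relative to $\MemOrd$ and $\ProgOrd$. I also use that every load reads from some store (the initializing store for $a$ if nothing else), so that $\RfInv$ and $\Fr$ are well defined on the loads involved.

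Three of the four cases are direct. If $i_1$ and $i_2$ are both stores, the store-store clause of Definition~\ref{def:ppo-same-addr} gives $i_1 \SameAddrOrd i_2$, hence $i_1 \PreservePO i_2$ and, by Inst-Order, $i_1 \MemOrd i_2$; since both are stores to $a$ this is exactly $i_1 \Coh i_2 \subseteq \Eco$. If $i_1$ is a load and $i_2$ a store, the load-store clause again yields $i_1 \MemOrd i_2$; letting $S$ be the store $i_1$ reads, I argue $S \Coh i_2$ by ruling out $i_2 \Coh S$: the store $S$ is visible to $i_1$, so either $S \MemOrd i_1$ (which with $i_2 \Coh S$ would close a $\MemOrd$ cycle through $i_1 \MemOrd i_2$) or $S \ProgOrd i_1 \ProgOrd i_2$ (which by the store-store clause forces $S \MemOrd i_2$); either way $i_2 \Coh S$ is impossible, so $i_1 \Fr i_2 \subseteq \Eco$. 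If $i_1$ is a store and $i_2$ a load, I use Load-Value directly: since $i_1 \ProgOrd i_2$ the store $i_1$ is visible to $i_2$, so the source $S_2$ of $i_2$ satisfies $i_1 = S_2$ or $i_1 \MemOrd S_2$, i.e.\ $i_1\,{\Coh}^*\,S_2 \ReadFrom i_2$, the Write-to-Read disjunct of $\Eco$.

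The load-load case is the main obstacle, both because $\SameAddrOrd$ orders two same-address loads only when no store to $a$ lies between them, and because the relevant disjunct $\RfInv;{\Coh}^*;\ReadFrom$ is genuinely \emph{not} contained in $\MemOrd$ (an internal $\Rfi$ edge can run against $\MemOrd$ due to store-buffer bypassing), so one cannot collapse the argument into memory order as Lemma~\ref{lem:com_in_memord} would allow for the other relations. Writing $S_1,S_2$ for the sources of $i_1,i_2$, the goal is ${S_1}\,{\Coh}^*\,{S_2}$. When no store to $a$ separates $i_1$ and $i_2$, the load-load clause gives $i_1 \MemOrd i_2$, and a visibility argument (if $S_1 \ProgOrd i_1$ then $S_1 \ProgOrd i_2$; if $S_1 \MemOrd i_1$ then $S_1 \MemOrd i_2$) shows $S_1$ is visible to $i_2$, whence Load-Value maximality forces ${S_1}\,{\Coh}^*\,{S_2}$. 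When a store $S'$ to $a$ does intervene, I instead chain coherence: $S'$ is visible to $i_2$ so $S'\,{\Coh}^*\,S_2$, while $S_1 \Coh S'$ follows either from $S_1 \ProgOrd S'$ (store-store clause) or from $S_1 \MemOrd i_1 \MemOrd S'$ via the load-store edge $i_1 \SameAddrOrd S'$; hence again ${S_1}\,{\Coh}^*\,{S_2}$. In all four cases $i_1 \Eco i_2$, completing the proof. (Alternatively, the dichotomy of Lemma~\ref{lem:eco_either} could be invoked to reduce each case to excluding the reverse edge $i_2 \Eco i_1$, but the direct construction above seems cleaner.)
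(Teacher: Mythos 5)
Your proof is correct, but it takes a genuinely different route from the paper's. The paper first establishes the dichotomy of Lemma~\ref{lem:eco_either} (every same-address pair is $\Eco$-related in one direction or the other) and then refutes the reverse edge: assuming $i_2 \Eco i_1$, it derives a contradiction in each read/write configuration, invoking the already-proven Causality theorem twice (in the $\Rfe$ subcases of the load--store and load--load configurations) together with the same-address clauses of Definition~\ref{def:ppo-same-addr} and the Load-Value axiom. You instead construct the forward edge $i_1 \Eco i_2$ directly, using only two ingredients: promotion of $\SameAddrOrd$ edges into $\MemOrd$ via Inst-Order, and the $\MemOrd$-maximality in the Load-Value axiom (your ``visibility'' argument), with totality of $\MemOrd$ supplying the coherence trichotomy. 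This buys you independence from both Lemma~\ref{lem:eco_either} and Causality, so the argument stands on its own within the GAM axioms; you also make explicit a point the paper leaves implicit, namely that the read-to-read disjunct $\RfInv;{\Coh}^*;\ReadFrom$ is \emph{not} contained in $\MemOrd$ (an internal reads-from edge can run against memory order because of bypassing), which is precisely why the load--load case cannot be discharged through Lemma~\ref{lem:com_in_memord}. What the paper's route buys in exchange is reuse: Lemma~\ref{lem:eco_either} is needed regardless, and once it is in place the refutation of the reverse edge is mechanical. Two small points you should state explicitly: in the load--store case you rule out $i_2 \Coh S$ but should also note $S \neq i_2$, which follows from the same visibility fork ($S \ProgOrd i_1$ contradicts $i_1 \ProgOrd i_2$, and $S \MemOrd i_1$ contradicts $i_1 \MemOrd i_2$); and in the intervening-store subcase both branches of your fork actually yield $S_1 \MemOrd S'$, which silently excludes $S_1 = S'$ --- worth saying, since you assert $S_1 \Coh S'$ rather than merely $S_1 \,{\Coh}^*\, S'$. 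Neither is a gap in substance; each is a one-line addition.
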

\begin{proof}
  The alternative of $i_2\Eco i_1$ results in a contradiction, except for one case where it overlaps $i_1\Eco i_2$.  See appendix.
\end{proof}

\begin{theorem}
  The SC-per-Location axiom is satisfied.
\end{theorem}
\begin{proof}
  (abbreviated; see appendix)

  First, by Lemma~\ref{lem:eco_poloc}, all $\PoLoc$ edges involving at least one write can be converted into sequences containing only $\ReadFrom$, $\Coh$, and $\Fr$.  So we consider only cycles with $\ReadFrom$, $\Coh$, $\Fr$, and read-to-read $\PoLoc$ edges.  Replace every instance of read-read $\PoLoc$ in the cycle with $\RfInv;{\Coh}^*;\ReadFrom$ per Lemma~\ref{lem:eco_poloc}.  Now, because $\Coh$ and $\Fr$ both target writes, every appearance of $\RfInv$ must be preceded either by $\ReadFrom$ or by $\RfInv;{\Coh}^*;\ReadFrom$.  In particular, every appearance of $\RfInv$ must be preceded directly by $\ReadFrom$.  Since $\ReadFrom;\RfInv$ is the identity function, all appearances of $\RfInv$ in the cycle can be eliminated by simply removing each $\ReadFrom;\RfInv$ pair in the cycle.  This leaves a cycle with only $\ReadFrom$, $\Coh$, and $\Fr$, which is a contradiction.
\end{proof}

\subsection{COM $\subseteq$ GAM}

This direction is easier.  Given $\ProgOrd$, $\ReadFrom$, and $\Coh$, we must find a suitable $\MemOrd$.
By the Causality axiom, $\Rfe \cup \Coh \cup \Fr \cup \PreservePO$ is acyclic, and hence there is at least one total ordering compatible with it.  We show that any such total ordering satisfies GAM.
The Inst-Order axiom is true by construction, and hence we must only show that the Load-Value axiom is satisfied.

\begin{theorem}
  Any $\MemOrd$ which is a total ordering of $\Rfe \cup \Coh \cup \Fr \cup \PreservePO$ satisfies the Load-Value axiom.
\end{theorem}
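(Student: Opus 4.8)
The plan is to fix a reads-from edge $S \ReadFrom L$, where $S$ is a store to some address $a$ and $L$ is a load to $a$, and to verify the two things the Load-Value axiom demands: that $S$ belongs to the candidate set $\{S' : S' \ProgOrd L \vee S' \MemOrd L\}$ of stores to $a$, and that $S$ is the $\MemOrd$-maximum of that set. Throughout I would lean on Lemma~\ref{lem:com_in_memord} (so that $\Rfe$, $\Coh$, and $\Fr$ all sit inside $\MemOrd$) together with the SC-per-Location axiom, using the latter to rule out the bad cases by exhibiting short cycles in $\ReadFrom \cup \Coh \cup \Fr \cup \PoLoc$.

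First I would establish membership. If $S$ and $L$ lie in different threads, then $S \Rfe L$, and since $\Rfe \subseteq \MemOrd$ we get $S \MemOrd L$ directly. If instead they lie in the same thread, then $S$ and $L$ are $\ProgOrd$-comparable; were $L \ProgOrd S$, the pair $S \ReadFrom L$ and $L \PoLoc S$ would close a two-edge cycle forbidden by SC-per-Location, so in fact $S \ProgOrd L$. Either way $S$ is a candidate.

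For maximality I would argue by contradiction: suppose some store $S' \neq S$ to $a$ is a candidate yet $S \MemOrd S'$. Because $S$ and $S'$ are both writes to $a$, they are ordered by $\Coh$; and since $\Coh \subseteq \MemOrd$ the coherence direction must agree with the total order $\MemOrd$, forcing $S \Coh S'$. Combining $S \ReadFrom L$ (i.e.\@ $L \RfInv S$) with $S \Coh S'$ yields $L \Fr S'$, and $\Fr \subseteq \MemOrd$ then gives $L \MemOrd S'$. Now I invoke that $S'$ is a candidate: if $S' \MemOrd L$, this contradicts $L \MemOrd S'$ in the total order $\MemOrd$; and if $S' \ProgOrd L$, then $S' \PoLoc L$, so $L \Fr S'$ together with $S' \PoLoc L$ is once more a cycle barred by SC-per-Location. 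Both branches are impossible, so no such $S'$ exists and $S$ is the $\MemOrd$-maximum of the candidate set.

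The part requiring the most care is the handling of internal (same-thread) reads, since the derived relation $\Rfe$ captures only cross-thread reads-from and therefore cannot by itself place an internally-read store into $\MemOrd$. This is precisely where SC-per-Location does the work that Causality alone cannot: the two short cycles, $S \ReadFrom L$ with $L \PoLoc S$ for membership and $L \Fr S'$ with $S' \PoLoc L$ for the program-order candidate sub-case, are what pin down the internal orderings. The remaining bookkeeping, namely translating the $\max_{mo}$ condition into a no-larger-candidate statement and splitting the candidate disjunction into its $\ProgOrd$ and $\MemOrd$ parts, is routine once these cycle arguments are in place.
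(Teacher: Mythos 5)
Your proof is correct and takes essentially the same route as the paper's: membership via the $\Rfi$/$\Rfe$ split (with SC-per-Location ruling out $L \ProgOrd S$ in the internal case), and maximality by deriving $L \Fr S'$ and contradicting either SC-per-Location (when $S' \ProgOrd L$) or the construction of $\MemOrd$ (when $S' \MemOrd L$), with your version merely spelling out details the paper glosses, such as why totality of $\Coh$ forces $S \Coh S'$. One small citation note: in this direction the containments $\Rfe, \Coh, \Fr \subseteq \MemOrd$ hold by construction of $\MemOrd$ as a total order extending $\Rfe \cup \Coh \cup \Fr \cup \PreservePO$, rather than by Lemma~\ref{lem:com_in_memord}, which is proved from the GAM axioms in the opposite direction --- though the facts you need are identical.
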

\begin{proof}
  If $w\ReadFrom r$, then either $w\Rfi r$ or $w\Rfe r$.  In the first case, $w\ProgOrd r$, or else it would contradict the SC-per-Location axiom.  In the second case, $w\MemOrd r$ by construction of $\MemOrd$.  In either case, $w$ must be in the candidate set
  \[ \{
    \mathsf{St}\ a\ v'\ |\ \mathsf{St}\ a\ v' \ProgOrd
    \mathsf{Ld}\ a\ \vee\ \mathsf{St}\ a\ v' \MemOrd \mathsf{Ld}\ a \}.
  \]
It remains to be shown that $w$ is in fact the $\MemOrd$-maximal element of that candidate set.

  Suppose that $w$ is not maximal.  Then there is some other write $w'$ to the same address $a$ such that $w\MemOrd w'$ and either $w'\ProgOrd r$ or $w'\MemOrd r$.  But then by definition, $r\Fr w'$, and $\Fr$ cannot contradict either $\ProgOrd$ (by SC-per-Location) or $\MemOrd$ (by construction of $\MemOrd$).  Hence we have a contradiction.
\end{proof}

\subsection{Empirical Validation}

We also used model checking to confirm the validity of the proof of equivalence between GAM and COM.
We encoded both models into Alloy~\cite{alloy,memalloy}, a relational model finder backed by a SAT solver, and checked for any mismatches.
The definition of this model is shown in Appendix~\ref{sec:alloy}.
In keeping with the spirit of the proofs, $\PreservePO$ is entirely parameterized; there is no explicit notion of fence or dependency in this version of the model.  We only assume that Definition~\ref{def:ppo-same-addr} always holds.
Under these conditions, Alloy verifies in roughly one hour that no counterexamples are found for tests with up to seven instructions.

\section{Comparing GAM with Existing Atomic Memory Models}\label{sec:instance}

Now that we have defined our three model formulations and completed the proofs of equivalence, we can now show how GAM is related to existing atomic memory models.
Most atomic memory models already have the same axioms as GAM, so our commparison will base off from the definitions of $\PreservePO$.
In some cases, the existing memory model can be instantiated from GAM.
While in other cases, the dependency ordering or same-address load-load ordering of an existing model does not match that in GAM, and we will explain the difference and possible ways to tweak GAM to match the existing model.

\subsection{SC}
SC has no fence, the memory/fence ordering enforced by SC is shown in Table~\ref{tab:order-sc}.
\begin{table}[!htb]
    \centering
    \begin{tabular}{|l|l|l|}
        \hline
        \backslashbox{$I_{old}$}{$I_{new}$} & Ld   & St   \\ \hline
        Ld                                  & True & True \\ \hline
        St                                  & True & True \\ \hline
    \end{tabular}
    \caption{Memory/fence orderings for SC: $\OrderedFunc_{SC}(I_{old}, I_{new})$}\label{tab:order-sc}
\end{table}

After supplying $\OrderedFunc_{SC}$ to GAM, $\PreservePO$ in the GAM axiomatic instance will order every pair of memory instructions from the same processor.
In this case, the Load-Value axiom will reduce to
\begin{displaymath}
\mathsf{St}\ a\ v \ReadFrom \mathsf{Ld}\ a \Rightarrow
\mathsf{St}\ a\ v = \max_{mo}\{
\mathsf{St}\ a\ v' \MemOrd \mathsf{Ld}\ a \}
\end{displaymath}
This is because $I_1\ProgOrd I_2$ implies $I_1\PreservePO I_2\Rightarrow I_1\MemOrd I_2$.
Thus, the GAM axiomatic instance is equivalent to SC.
In the operational instance of GAM, the SC-$\OrderedFunc$ function makes the guards of Execute-Load and Execute-Store rules to wait for all previous memory instructions to be done.
This is also the same as SC.

\subsection{TSO}
TSO has only one fence, and the memory/fence ordering enforced by TSO is shown in Table~\ref{tab:order-tso}.
The TSO axiomatic model defines a $\PreservePO$ edge from instructions $I_1$ to $I_2$ iff $\OrderedFunc_{TSO}(I_1, I_2)$.
When supplying GAM with the $\OrderedFunc_{TSO}$ function, the $\PreservePO$ of the resulting GAM axiomatic instance is the same as that of TSO axiomatic model, since $\DepOrd$ and $\SameAddrOrd$ are entirely contained within $\OrderedFunc_{TSO}$.
In other words, all load-load, load-store, and store-store orderings are automatically enforced anyway, so there is no need to worry about any particular subset of such orderings.
\begin{comment}
except that $\PreservePO$ in GAM additionally contains $\DepOrd$ and $\SameAddrOrd$ edges.
It is easy to show that $\DepOrd$ and $\SameAddrOrd$ are in fact redundant in this case.
This is because $\PreservePO$ is only used to constrain $\MemOrd$, the order of memory instructions.
The only ordering of memory instructions that is not enforced by $\LSFOrd$ is between an older store and a younger load.
Since $\DepOrd$ and $\SameAddrOrd$ cannot contribute to building this ordering, we can drop them, and then the GAM axiomatic instance becomes exactly the same as TSO.
\end{comment}
In the operational instance of GAM, the TSO-$\OrderedFunc$ function will cause the guard of the Execute-Load rule to wait for all older loads in ROB to be done, and cause the guard of the Execute-Store rule to wait for all older memory instructions to be done.
%These two changes model $\LSOrd$, making the operational model of GAM still match the axiomatic model of GAM.

\subsection{SPARC RMO}\label{sec:compare-rmo}
RMO has various fences, and the memory/fence orderings enforced by RMO are shown in Table~\ref{tab:order-rmo}.
RMO also enforces the ordering between dependent instructions.
However, there is a bug in the dependency definition in RMO~\cite{wmm}.
For the sake of comparison, we consider this to be a mistake rather than an intentional deviation, and hence we simply assume a corrected version of RMO that has the same definition of dependency ordering as GAM does.

When we supply the $\OrderedFunc_{RMO}$ function to GAM, the resulting GAM axiomatic instance is very close to but slightly different from the RMO axiomatic model.
The difference is that RMO does not order loads for the same address.
Same-address load-load reordering is a subtle issue (see Section~\ref{sec:compare-arm}) and a common source of implementation bugs~\cite{arm:llh}, but by modern standards RMO's approach is considered overly aggressive.
Nevertheless, for completeness we describe how GAM could be tweaked to allow same-address load-load reordering: we can simply tweak the axiomatic definition of GAM by removing case \ref{ppo:ld->ld} from the definition of $\SameAddrOrd$ (Definition~\ref{def:ppo-same-addr}).
After this removal, the GAM axiomatic instance becomes exactly the same as RMO.

The challenge is then to tweak the GAM operational instance to keep it equivalent to the axiomatic instance.
In the GAM operational instance, we relax the Execute-Load rule by making the ROB search ignore loads for the same address.
Also, in the Compute-Mem-Addr rule that computes the address of a \emph{load}, the ROB search in the rule should ignore younger loads for the same address.
These two changes relax the ordering between loads for the same address, making the operational instance of GAM still match the axiomatic instance of GAM.

\subsection{WMM}

WMM~\cite{wmm} has two fences: Commit and Reconcile, and the memory/fence orderings enforced by WMM are shown in Table~\ref{tab:order-wmm}.
The ordering between Commit and Reconcile is particularly important in WMM, as preventing store-load reordering requires the combination of a Commit and a Reconcile.

\begin{table}[!htb]
    \centering
    \begin{tabular}{|l|l|l|l|l|}
        \hline
        \backslashbox{$I_{old}$}{$I_{new}$} & Ld    & St    & Commit & Reconcile \\ \hline
        Ld                                  & False & True  & True   & True \\ \hline
        St                                  & False & False & True   & False \\ \hline
        Commit                              & False & True  & True   & True \\ \hline
        Reconcile                           & True  & True  & True   & True \\ \hline
    \end{tabular}
    \caption{Memory/fence orderings for WMM: $\OrderedFunc(I_{old}, I_{new})$}\label{tab:order-wmm}
\end{table}

%When we supply the WMM-$\OrderedFunc$ function to GAM, the resulting GAM axiomatic instance is very close to but slightly different from the WMM axiomatic model.
%The difference is that
WMM does not enforce any dependency ordering (although all load-store ordering is automatically enforced).
Therefore, in order to make GAM match WMM, we must first tweak the axiomatic definition of GAM by dropping $\DepOrd$.
After this change, one subtle difference still remains.
Consider a scenario in which $L_1 \ProgOrd S \ProgOrd L_2$, where $L_1$ and $L_2$ are both loads for address $a$ and $S$ is store for $a$.
GAM does not directly require $L_1$ and $L_2$ to be ordered in $\MemOrd$ due to the intervening store, but WMM does require $L1 \MemOrd L_2$.
However, it turns out that the two are actually equivalent in this case, because we can transform the $\MemOrd$ in GAM to a legal $\MemOrd$ in WMM (i.e., one that obeys $\PreservePO$ in WMM).

During the transformation, the store read by each load determined by the Load-Value axiom will not change.
In each transformation step, for a processor $i$, we find one such potential counterexample scenario: a load $L_2$ which is the youngest load in the $\ProgOrd$ of processor $i$ which is $\MemOrd$-before an older load $L_1$ from processor $i$ for the same address.
This gives us $L_1\ProgOrd L_2$ and $L_2\MemOrd L_1$.
%Assume the address accessed by both is $a$.
Since this reordering is allowed by GAM, there must be store for $a$ between $L_1$ and $L_2$ in the $\ProgOrd$ of processor $i$.
The transformation is to move $L_2$ to be right after $L_1$ in $\MemOrd$.
This is legal because no WMM ordering primitive can cause an instruction to be ordered after $L_2$ without also being ordered after $L_1$.
It also does not affect the value returned by $L_1$, nor does it affect any instruction originally older than $L_2$ in $\MemOrd$.  
%\mycomment{(Dan to Sizhuo: please check/correct.  I want to make sure I understand properly why this works without deps but not with deps)}.  It does however reduce one reordering of loads for the same address.
By repeating the above steps, we can complete the transformation until no such apparent contradictions remain.
Therefore, the axiomatic instance of GAM is equivalent to WMM.

We also need to tweak the GAM operational instance to keep it equivalent to the axiomatic instance.
In the operational instance of GAM, if a Fetch rule fetches a load into the ROB, we predict the load value and record it in the new ROB entry.
Younger instructions in ROB can read the predicted load value for computation.
In the Execute-Load rule, if the load is marked as done, then we compare the read value with the previously predicted value.
In case they are not equal, we kill all instructions younger than the load in ROB.
Introducing load-value prediction relaxes dependency ordering, making the operational instance of GAM still match the axiomatic instance of GAM.

\subsection{ARM v8.2}\label{sec:compare-arm}

As of March 2017, ARM completely revamped its memory consistency model.
The end result looks very similar to GAM, with $\FenceOrd$ defined to include \texttt{DMB LD} (load-to-load/store), \texttt{DMB ST} (store-to-store), Load-Acquire (ordered with subsequent loads/stores), and Store-Release (ordered with prior loads/stores).
There is, however, one main exception: ARM allows read-same-write (RSW) behavior (Figure~\ref{fig:rsw}): two loads which return the value written by the same write are \emph{not} ordered in $\PreservePO$.
In particular, in Figure~\ref{fig:rsw}, the two loads of $z$ are not ordered on ARM, even though they are two loads of the same address with no intervening store.
If the two loads read from \emph{different} stores (e.g., the RDW behavior in Figure~\ref{fig:rsw}), however, the outcome is forbidden.

\begin{figure}[!htb]
    \centering
  \begin{tabular}{l||l}
    St $x$, $1$ & Ld r1, $y$ (=1)       \\
    Fence       & Ld r2, $z+r1-r1$ (=0) \\
    St $y$, $1$ & Ld r3, $z$ (=0)       \\
                & Ld r4, $x+r3-r3$ (=0) \\
  \end{tabular}
  \hspace{1cm}
  \begin{tabular}{l||l}
    St $x$, $1$ & Ld r1, $y$ (=1)       \\
    Fence       & Ld r2, $z+r1-r1$ (=1) \\
    St $y$, $1$ & Ld r3, $z$ (=2)       \\
                & Ld r4, $x+r3-r3$ (=0) \\
  \end{tabular}
  \caption{The read-same-writes (RSW, left) litmus test is forbidden under GAM but permitted by ARM.
      The read-different-writes (RDW, right) litmus test is forbidden under both ARM and GAM.
      Both tests are the same, but ARM makes a distinction based on the values returned by the loads.}
  \label{fig:rsw}
\end{figure}

We feel the subtlety in allowing the RSW behavior while forbidding the RDW behavior may lead to confusion.
Besides, there is no published evidence showing that having this subtlety can lead to higher performance in implementations.
Therefore, definition~\ref{def:ppo-same-addr}.\ref{ppo:ld->ld} of GAM was carefully chosen to forbid the RSW behavior, while still allowing so-called ``fri-rfi'' behavior (Figure~\ref{fig:fri_rfi}) which can result from local store forwarding in implementations.

\begin{figure}[!htb]
  \centering
  \begin{tabular}{l||l}
    St $x$, $1$ & Ld r1, $y$ (=1)       \\
    Fence       & St $y$, $2$           \\
    St $y$, $1$ & Ld r2, $y$ (=2)       \\
                & Ld r3, $x+r2-r2$ (=0) \\
  \end{tabular}
  \caption{The MP+fence+fri-rfi-addr litmus test.}
  \label{fig:fri_rfi}
\end{figure}

\subsection{Alpha}

Alpha's memory model is similar to GAM with one single fence, but it is strictly weaker in that it
%GAM relies takes a conservative approach when it comes to branches. In
does not enforce any dependencies, including even load-store dependencies.  Alpha therefore allows the behavior in Figure \ref{fig:alpha}, while GAM does not.

\begin{figure}[t]
    \centering
  \begin{tabular}{l||l}
    Ld r1, $x$ (=1) & Ld r2, $y$ (=1)\\
    If r1 == 0      & St $x$, r2\\
    then St $y$, 1  &\\
    else St $y$, 1  &\\
  \end{tabular}
  \caption{Alpha is more relaxed to reorder stores before branches}
  \label{fig:alpha}
\end{figure}

It is possible to remove all dependency orderings from the GAM axiomatic model in order to account for this behavior, but doing so in the operational model would be a substantial challenge (just as it would be in any real microarchitecture).  It is not generally possible to perform speculative stores, as there is no way to undo a failed speculation, so the operational model that produces such behaviors would necessarily be somewhat contrived.
In any case, such behaviors are no longer produced in more modern memory model definitions, and so we do not attempt to adapt the GAM operational model to account for speculative load-store dependency reordering.

\subsection{RISC-V}
The RISC-V model is not yet finalized, but it is likely to use a model very similar to GAM.
For comparison, we include the basics of the expected model below.
Note in particular that Release is not ordered with Acquire, in contrast to how WMM does order Commit with Reconcile.

\begin{figure}[!htb]
    \centering
    \begin{tabular}{|l|l|l|l|l|l|}
        \hline
        \backslashbox{$I_{old}$}{$I_{new}$} & Ld    & St    & Release & Acquire & Full \\ \hline
        Ld                                  & False & False & True    & True    & True \\ \hline
        St                                  & False & False & True    & False   & True \\ \hline
        Release                             & False & True  & True    & False   & True \\ \hline
        Acquire                             & True  & True  & True    & True    & True \\ \hline
        Full                                & True  & True  & True    & True    & True \\ \hline
    \end{tabular}
    \caption{Memory/fence orderings for RISC-V: $\OrderedFunc_{RISC\Hyphen{}V}(I_{old}, I_{new})$}
    \label{tab:riscv}
\end{figure}

This model presents all of the best features of GAM: a minimal set of dependency orderings that are nevertheless up to modern standards, a flexible and performant yet easy-to-define set of fences, and a reasonably-minimal set of same-address orderings.
It can also be adapted to the needs of any subtle variant or modification by simply changing the set of fences that are included in Table~\ref{tab:riscv}.  As such, if RISC-V adopts GAM, it will be the first modern architecture allowing load-store reordering to come complete with a proper axiomatic model, a proper operational model, and a full proof of equivalence.

\section{GAM-I2E: Parameterizing Dependency Ordering}\label{sec:I2E}
\label{sec:i2e}

In previous sections, we have seen that GAM is not parameterized by dependency orderings, and requires manual tweak on the definitions to produce memory models with a different dependency ordering.
The major reason is that GAM is designed to be able to allow load-store reordering.
As stated in Section~\ref{sec:inst-reorder}, allowing load-store reordering means a store may indirectly affect an older load in the same processor.
This implies that no matter what mechanism an operational model uses, it cannot execute instructions in order.
Thus, when an operational model wants to execute an instruction $I$, it may not have all the information (e.g., memory access addresses) of instructions that are older than $I$ in the same processor.
However, in the axiomatic model, such information is always available, and will used in the computation  of $\DepOrd$ edges that point to $I$.
The lack of information in the operational model makes it difficult to parameterize dependency ordering while keeping the operational and axiomatic models equivalent.

Recently, Zhang et al. have shown that some memory models can be expressed in the form of instantaneous instruction execution (I2E) when load-store reordering is forbidden.
I2E means that each processor in the operational model executes instructions instantaneously and in order.
Here we apply that idea, i.e., we force $\OrderedFunc$(Ld, St) to be true (i.e., forbid load-store reordering) to make it possible to express the GAM operational model in I2E.
In the I2E operational model, for the next instruction $I$ to execute on a processor $i$, we know all the information of all the instructions older than $I$ in processor $i$.
Thus, the I2E operational model can compute the $\PreservePO$ edges pointing to $I$ using the same way as the axiomatic model does.
Hence, the I2E operational model knows the same constraint of executing $I$ (i.e., the constraint on placing $I$ in the global order) as the axiomatic model does.
I2E eliminates the difference in information available to the axiomatic and operational models, making it possible to parameterize the memory model by any form of $\DepOrd$.
The model is not parametrized by same-address ordering because some same-address ordering are required by single-thread correctness.
It should be noted that computing the $\PreservePO$ edges pointing to instruction $I$ should not require knowing the execution result of $I$.
This is true for computing the $\LSFOrd$ and $\SameAddrOrd$ edges defined in GAM.
This should also be true for most definitions of preserved dependency ordering (i.e., $\DepOrd$).

We refer to this new model as \emph{GAM-I2E}.
In the following, we give the axiomatic and operational definitions of GAM-I2E, which are parametrized by $\LSFOrd$ and $\DepOrd$, as well as the equivalence proof.

\subsection{Axiomatic Model of GAM-I2E}

The axiomatic model of GAM-I2E is exactly the same as that of GAM in Section~\ref{sec:gam:axiom}.
The only additional requirement is that $\OrderedFunc$(Ld, St) must be true.
Given this requirement, one can slightly simplify the definition of $\SameAddrOrd$ by removing case~\ref{ppo:ld->st} from Definition~\ref{def:ppo-same-addr}, because that load-store ordering is already enforced by $\LSFOrd$.

\subsection{Operational Model of GAM-I2E}
The operational model consists of $n$ processors.
Each processor executes instructions instantaneously, and contains a local buffer to temporarily keep executed stores and fences.
The memory system is a list $\IIEMemOrd$ of load and store instructions. (we use suffix $i2e$ to distinguish from the definitions in the axiomatic model).
In the following, we will also use $\IIEMemOrd$ as a total order of memory instructions in the memory system, i.e., $I_1 \IIEMemOrd I_2$ means that instruction $I_1$ is closer to the list head than $I_2$.
%We pun on the state holding the data structure and the ordering invariant the state maintains (for example, we say that $\IIEMemOrd$ is a total order instead of a list).

Assume the next instruction to execute on a processor is $I$.
Let $\IIEProgOrd$ be the execution order of $I$ and all instructions already executed by the processor (i.e., $I$ is the youngest).
If we treat $\IIEProgOrd$ as a program order, then we can follow the definitions of preserved program order (Section~\ref{sec:ppo}) to compute $\IIEPreservePO$ from $\IIEProgOrd$.
$\IIEPreservePO$ is the preserved program order among $I$ and all instructions executed by the processor.
Note that to make this definition meaningful, computing $\IIEPreservePO$ should not require knowing the load value of $I$.
%\mycomment{If we allow RSW, then this model may not work.}
It should also be noted that $\IIEPreservePO$ edges grow monotonously.
To be specific, consider the case that a processor has executed $k$ instructions $I_1 \IIEProgOrd I_2 \IIEProgOrd \cdots \IIEProgOrd I_k$, and we have computed the $\IIEPreservePO$ edges for $I_1\ldots I_k$.
If the processor executes a new instruction $I_{k+1}$, then the $\IIEPreservePO$ edges for $I_1\ldots I_{k+1}$ will contain all the previously computed $\IIEPreservePO$ edges for $I_1\ldots I_k$, and all the newly added edges will point to $I_{k+1}$.
This is because whether instructions $I$ and $I'$ are ordered by preserved program order is fully determined by $I$, $I'$ and instructions between $I$ and $I'$ in the program order.

With the above definitions, now we give the rules for the operational moddel of GAM-I2E.
\begin{itemize}
    \item \textbf{Rule Execute-Reg-Branch:} Execute a reg-to-reg or branch instruction $I$. \\
    \emph{Guard:} True. \\
    \emph{Action:} Execute $I$ and update local register states.
    
    \item \textbf{Rule Execute-Store-Fence:} Execute a store or fence instruction $I$. \\
    \emph{Guard:} True. \\
    \emph{Action:} Insert $I$ into the local buffer.
    
    \item \textbf{Rule Execute-Load:} Execute a load $L$ for address $a$. \\
    \emph{Guard:} There is no instruction $I$ in the local buffer that is ordered before $L$ in $\IIEPreservePO$. \\
    \emph{Action:} Insert $L$ into an arbitrary place in list $\IIEMemOrd$ such that for any memory instruction $I$ which is ordered before $L$ in $\IIEPreservePO$, $L$ is after $I$ in $\IIEMemOrd$.
    With the updated $\IIEMemOrd$, we can determine the load value of $L$ in the following way:
    \begin{enumerate}
        \item If the local buffer contains any store for $a$, then $L$ reads from the youngest (i.e., most recently inserted) store for $a$ in the local buffer.
        \item Otherwise, $L$ reads from the youngest store for $a$ in $\IIEMemOrd$ that is from the same processor of $L$ or is older than $L$ in $\IIEMemOrd$.
    \end{enumerate}
    
    \item \textbf{Rule Dequeue-Store:} Dequeue a store $S$ from the local buffer to the memory system. \\
    \emph{Guard:} There is no instruction in the local buffer that is ordered before $S$ in $\IIEPreservePO$. \\
    \emph{Action:} Remove $S$ from the local buffer, and append $S$ to the end of $\IIEMemOrd$ (i.e., $S$ becomes the youngest in $\IIEMemOrd$).
    
    \item \textbf{Rule Dequeue-Fence:} Dequeue a fence $F$ from the local buffer. \\
    \emph{Guard:} There is no instruction in the local buffer that is ordered before $F$ in $\IIEPreservePO$. \\
    \emph{Action:} Remove $F$ from the local buffer.
\end{itemize}

\subsection{Soundness: GAM-I2E Operational Model $\subseteq$ GAM-I2E Axiomatic Model}
\begin{theorem}
    GAM-I2E operational model $\subseteq$ GAM-I2E axiomatic model
\end{theorem}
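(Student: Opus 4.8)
The plan is to build a witness $\langle \ProgOrd, \MemOrd, \ReadFrom\rangle$ from a terminating operational run and check the two GAM axioms against it. I would let $\ProgOrd$ be the per-processor order in which instructions are executed (a per-processor total order, since each processor executes in order), let $\MemOrd$ be the order of the final list $\IIEMemOrd$ (each load enters it when Execute-Load fires and each store enters it when Dequeue-Store fires, so once every buffer has been drained $\IIEMemOrd$ totally orders all memory instructions), and let $S \ReadFrom L$ exactly when Execute-Load selected $S$ as the store read by $L$. Because $\IIEProgOrd$ equals $\ProgOrd$ at the end of the run, the monotonicity observation about $\IIEPreservePO$ identifies the final $\IIEPreservePO$ with $\PreservePO$; moreover, since $\IIEPreservePO$ is already transitively closed, any two memory instructions with $I_1\PreservePO I_2$ carry a direct edge $I_1\IIEPreservePO I_2$ that is present the moment $I_2$ is executed (all of $I_2$'s program predecessors, and hence their addresses, are known by then because execution is instantaneous and in order).

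To discharge the Inst-Order axiom it suffices to show $I_1\IIEPreservePO I_2\Rightarrow I_1\MemOrd I_2$ for memory instructions, which I would prove by cases on the kinds of $I_1$ and $I_2$, each time reading off the relevant guard. If $I_2$ is a load, the Execute-Load guard forbids any buffered instruction $\IIEPreservePO$-before $I_2$, so $I_1$ has already been appended to $\IIEMemOrd$ (if it is a store) or inserted (if it is a load, since it executed earlier in $\IIEProgOrd$) before $I_2$ is placed, and Execute-Load explicitly inserts $I_2$ after every such $I_1$. If $I_2$ is a store it is appended to the end of $\IIEMemOrd$ by Dequeue-Store; its guard forbids any buffered instruction $\IIEPreservePO$-before it, forcing an $\IIEPreservePO$-earlier store $I_1$ to have been dequeued first (a load $I_1$ is already in the list since it executed earlier in program order), so $I_1$ precedes $I_2$ in $\IIEMemOrd$. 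Transitivity of the total order $\MemOrd$ then upgrades these base cases to all of $\PreservePO$.

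The main obstacle is the Load-Value axiom: I must show the store $S$ chosen operationally by Execute-Load is the $\MemOrd$-maximal store to the load's address among those that either precede the load in $\ProgOrd$ or precede it in $\MemOrd$. The argument splits along the two clauses of the load-value rule, and three facts drive it: (i) same-processor stores to one address are dequeued, and hence ordered in $\MemOrd$, in program order, which follows from case~\ref{ppo:st->st} of $\SameAddrOrd$ together with the Dequeue-Store guard; (ii) a store still in the local buffer when $L$ executes is eventually appended to the end of $\IIEMemOrd$ and therefore lands $\MemOrd$-after the already-inserted $L$; and (iii) no store to the load's address that is appended after $L$ is inserted can belong to the candidate set, since such a store is neither $\ProgOrd$-before $L$ (it would otherwise have been buffered or already dequeued) nor $\MemOrd$-before $L$. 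When the buffer holds a store for the address, fact (i) makes the youngest buffered such store dominate all other same-address same-processor stores while fact (ii) puts it $\MemOrd$-after every store that is $\MemOrd$-before $L$, so it is the candidate maximum; when the buffer holds no such store, fact (iii) shows the candidate set is frozen at $L$'s execution, and since insertions and end-appends never reorder existing list elements, the operationally chosen youngest qualifying store in $\IIEMemOrd$ is exactly the $\MemOrd$-maximum of that set. The care needed to keep these three facts consistent under store-load reordering, which remains permitted since only $\OrderedFunc(\text{Ld},\text{St})$ is forced, is the crux of the proof; everything else is bookkeeping that I would formalize as per-step invariants in the style of Lemma~\ref{lem:rob:inv}.
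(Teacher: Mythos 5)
Your proposal is correct and takes essentially the same route as the paper's proof: the same witness construction ($\ProgOrd$ from per-processor execution order, $\MemOrd$ as the final $\IIEMemOrd$, $\ReadFrom$ from the Execute-Load rule), the same invariants identifying $\IIEPreservePO$ with $\PreservePO$, and the same guard-based arguments, including the two-case split for Load-Value on whether the read store is still buffered or already in $\IIEMemOrd$, with your fact (i) playing exactly the role of the paper's $S' \IIEProgOrd S \Rightarrow S' \IIEPreservePO S$ step. The only differences are cosmetic: you prove Inst-Order by direct case analysis on $I_2$ where the paper argues by contradiction on where $I_1$ resides, and you package the Load-Value bookkeeping as three named facts that the paper establishes inline.
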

\begin{proof}
The goal is to show that for any execution of the GAM-I2E operational model, we can construct $\langle \ProgOrd, \MemOrd, \ReadFrom\rangle$ which satisfies the GAM-I2E axioms and has the same program behavior as the operational execution.
$\ProgOrd$ is the order of executing instructions in each processor of the operational model.
$\ReadFrom$ is constructed according to the Execute-Load rule, i.e., if the Execute-Load rule picks store $S$ to satisfy a load $L$, then $S\ReadFrom L$.
$\MemOrd$ is the $\IIEMemOrd$ at the end of the operational execution.
We need to show that $\langle \ProgOrd, \ReadFrom, \MemOrd \rangle$ satisfies the axioms.
It should be noted that $\IIEProgOrd$ and $\IIEPreservePO$ always matche $\ProgOrd$ and $\PreservePO$ respectively during the operational execution.
That is, when an instruction $I$ of processor $i$ is executed in the operational execution, $\IIEProgOrd$ and $\IIEPreservePO$ of instructions executed by processor $i$ (including $I$) satisfies the following invariants:
\begin{itemize}
    \item $\IIEProgOrd$ is a prefix of $\ProgOrd$ (of processor $i$) up to $I$ (including $I$).
    \item For any instructions $I_1 \PreservePO I_2$ from processor $i$, if $I_1$ and $I_2$ are not ordered after $I$ in $\ProgOrd$ (i.e., $I_2$ may be equal to $I$), then $I_1 \IIEPreservePO I_2$.
    \item For any instructions $I_1$ and $I_2$, if $I_1 \IIEPreservePO I_2$, then $I_1 \PreservePO I_2$.
\end{itemize}

With above invariants, we prove that the Inst-Order axiom is satisfied by contradiction, i.e., we assume there are two memory instructions $I_1$ and $I_2$ from processor $i$ such that $I_1 \PreservePO I_2$ but $I_2 \MemOrd I_1$.
In the operational model, when $I_2$ is executed, $I_1$ must have been executed, and $I_1$ is ordered before $I_2$ in $\IIEPreservePO$ according to the invariants.
At the time when $I_2$ is executed, $I_1$ can only be in one of the following two places:
\begin{enumerate}
    \item $I_1$ is already in the memory system:
    In this case, if $I_2$ is a load, then the Execute-Load rule ensures that $I_2$ is placed after $I_1$ in $\IIEMemOrd$.
    If $I_2$ is a store, it can only be appended to the end of $\IIEMemOrd$, and is still after $I_1$ in $I\IIEMemOrd$.
    
    \item $I_1$ is in the local buffer:
    In this case, $I_1$ must be a store.
    $I_2$ must also be a store (otherwise if $I_2$ is a load, the guard of Execute-Load rule will be false due to $I_1$ in the local buffer).
    And $I_2$ is inserted into the local buffer.
    The Dequeue-Store rule ensures that $I_1$ will be appended to $\IIEMemOrd$ before $I_2$, so $I_1$ is still before $I_2$ in $\IIEMemOrd$.
\end{enumerate}
$I_1\IIEMemOrd I_2$ implies that $I_1\MemOrd I_2$, contradicting with the initial assumption.
Thus the Inst-Order axiom is satisfied.

Now we show that the Load-Value axiom is also satisfied.
Consider a load $L$ for address $a$ from processor $i$ which reads from a store $S$ in the operational execution.
When the Execute-Load rule executes $L$, we consider where $S$ resides:
\begin{enumerate}
    \item $S$ is in the local buffer of processor $i$:
    $S$ will be appended to $\IIEMemOrd$ later, so $S$ must be after $L$ in $\MemOrd$.
    Since we already have $S \ProgOrd L$ and $L\MemOrd S$, the Load-Value axiom will only pick stores that are before $L$ in $\ProgOrd$.
    Now we consider such a store $S'$ ($\neq S$) for $a$ which is before $L$ in $\ProgOrd$.
    Note that $S$ is the most recently inserted store for $a$ when $L$ is executed.
    Thus, when $S$ is executed by processor $i$, $S'$ must have been executed, and we have $S' \IIEProgOrd S \Rightarrow S' \IIEPreservePO S$ at that time (according the definition of same-address ordering).
    Therefore, $S'$ is appended to $\IIEMemOrd$ before $S$, and thus $S'\MemOrd S$.
    As a result, the Load-Value axiom also agrees on $S\ReadFrom L$.
    
    \item $S$ is already in $\IIEMemOrd$:
    When $L$ is executed, the local buffer of processor $i$ cannot contain any store for $a$ according to the guard of the Execute-Load rule.
    Thus, all stores for $a$ that are before $L$ in $\ProgOrd$ are already in $\IIEMemOrd$ at that time.
    Since stores can only be appended to the end of $\IIEMemOrd$, all stores for $a$ that are before $L$ in $\MemOrd$ are also in $\IIEMemOrd$ by the time when $L$ is executed.
    Then the way that the Execute-Load rule determines the load value of $L$ is exactly the same as the Load-Value axiom.
\end{enumerate}
\end{proof}

\subsection{Completeness: GAM-I2E Axiomatic Model $\subseteq$ GAM-I2E Operational Model}
\begin{theorem}
    GAM-I2E axiomatic model $\subseteq$ GAM-I2E operational model.
\end{theorem}
\begin{proof}
The goal is that for any legal axiomatic relations $\langle \ProgOrd, \MemOrd, \ReadFrom\rangle$ (which satisfy the GAM-I2E axioms), we can run the GAM-I2E operational model to simulate the same program behavior.
In each step of the simulation, we first decide which rule to fire in the operational model based on the current state of the operational model and $\MemOrd$, and then we fire that rule.
Here is the algorithm to determine which rule to fire in each simulation step:
\begin{enumerate}
    \item If in the operational model there is a processor whose next instruction is not a load, we fire an Execute-Reg-Branch or Execute-Store-Fence rule to execute that instruction in the operational model.
    \item If the above case does not apply, and in the operational model there is a fence that can be dequeued from the local buffer, then we fire the Dequeue-Fence rule to dequeue that fence in the operational model.
    \item \label{sim:i2e:st} If neither of the above cases applies, and in the operational model there is a store $S$ in the local buffer of a processor, and $S$ can be dequeued from the local buffer (i.e., the guard for the Dequeue-Store rule is true), and all stores before $S$ in $\MemOrd$ are already in $\IIEMemOrd$, then we fire a Dequeue-Store rule to dequeue $S$ in the operational model.
    \item \label{sim:i2e:ld} If none of the above cases applies, then in the operational model there must be a processor such that the next instruction of the processor is a load $L$, and $L$ can be executed (i.e., the guard for the Execute-Load rule is true), and all stores before $L$ in $\MemOrd$ are already in $\IIEMemOrd$.
    We fire an Execute-Load rule to execute $L$ in the operational model.
    In the Execute-Load rule of $L$, we insert $L$ into $\IIEMemOrd$ such that for any instruction $I$ already in $\IIEMemOrd$, if $I \MemOrd L$ then $I \IIEMemOrd L$, otherwise $L \IIEMemOrd I$.
\end{enumerate}
After each step of the simulation, we keep the following invariants:
\begin{enumerate}
    \item \label{inv:i2e:po} The execution order on each processor is a prefix of the $\ProgOrd$ of that processor.
    \item The result of each executed instruction is the same as that in $\ProgOrd$.
    \item \label{inv:i2e:rf} The store read by each executed load is the same as that indicated by the $\ReadFrom$ edges.
    \item \label{inv:i2e:stuck} The simulation cannot get stuck.
    \item \label{inv:i2e:mo-match} For two memory instruction $I_1$ and $I_2$, if $I_1 \IIEMemOrd I_2$ in the operational model, then $I_1 \MemOrd I_2$ in the axiomatic relations.
    \item \label{inv:i2e:mo-prefix} The order of all stores in $\IIEMemOrd$ is a prefix of the order of all stores in $\MemOrd$.
\end{enumerate}
The first two induction invariants imply that \emph{before} each simulation step, the following properties hold for each processor $i$ (assuming the next instruction of the processor is $I$):
\begin{enumerate}
    \item $\IIEProgOrd$ is a prefix of $\ProgOrd$ (of processor $i$) up to $I$ (including $I$).
    \item For any instructions $I_1 \PreservePO I_2$ from processor $i$, if $I_1$ and $I_2$ are not ordered after $I$ in $\ProgOrd$ (i.e., $I_2$ may be equal to $I$), then $I_1 \IIEPreservePO I_2$.
    \item \label{prop:i2e:ppo} For any instructions $I_1$ and $I_2$, if $I_1 \IIEPreservePO I_2$, then $I_1 \PreservePO I_2$.
\end{enumerate}
The detailed proof for these invariants can be found in Appendix~\ref{sec:i2_op_contain_axi}.
\end{proof}

It should be noted that the above models and proofs of GAM-I2E do not rely on the specific forms of $\DepOrd$ or $\LSFOrd$.
Therefore, GAM-I2E is fully parametrized by $\DepOrd$ and $\LSFOrd$.

\section{Conclusion}\label{sec:conclusion}

For years, many of the leading industry memory models have been so complicated to understand and to analyze that the status quo was simply to live with an incomplete and underspecified memory model.
Academics would attempt to build axiomatic and operational models and then to prove them equivalent, but these models and proofs were subject to frequent breakage and refinement due to the thorniness of the issues at hand.
Other models were simply never updated to modern standards, and were therefore left with definitions fence ordering, same-address ordering, and/or dependency ordering that are today well known to be insufficient.
This has led to no shortage of confusion in the broader understanding of memory models in the field.

In response to the recently emerging trend back towards atomic memory models, we present GAM, a flexible operational \emph{and} axiomatic memory model definition that is \emph{parameterized} by the set of fences in the model.
GAM corrects the preserved program order definition oversights present in memory models from past generations, and it reduces the definition of fence behavior into localized intra-thread ordering specifications that can be easily understood in isolation.
GAM also comes with proofs of equivalence between its axiomatic and operational models, thereby overcoming the obstacle that many previous memory models have faced in being far too complicated to understand or to work with.
The equivalence makes it much easier for architects, programmers, and theoreticians to each simply use the variant that they find easiest to work with.

Finally, GAM also makes it easy to understand the implications of tweaking a memory model's definition.
It is easy to add new fences that trade off strength for performance, for example.
It is also possible to remove behaviors; as we show, forbidding load-store reordering altogether allows GAM to be reduced to an even simpler I2E-based definition.
We believe that all of these features will go a long way towards eliminating the worst of the subtleties and corner cases that have most of the memory models of past generations.

\bibliographystyle{plain}
\bibliography{ref}

\appendix

\newpage
\section{GAM Operational model $\subseteq$ GAM Axiomatic Model}\label{sec:gam_axi_contain_op}

\begin{lemma}\label{lem:rob:ppo}
For any operational model state, the following properties hold:
\begin{enumerate}
    \item If $I_1\RobProgOrd I_2$, then whether $I_1$ is ordered before $I_2$ in any of $\RobDataDep,\RobAddrDep,\RobNTPreservePO$ only depends on the states of $I_1$, $I_2$, and instructions between $I_1$ and $I_2$ in the ROB.
    \item If we add a new instruction to the end of an ROB, then changes in $\RobDataDep,\RobAddrDep,\RobNTPreservePO$ can only involve new edges pointing to the newly added instruction.
    \item If we mark a not-done non-branch instruction as done in an ROB, then there is no change in $\RobDataDep,\RobAddrDep,\RobNTPreservePO$.
    \item If we mark a not-done branch instruction as done in an ROB, then the changes in $\RobDataDep,\RobAddrDep,\RobNTPreservePO$ can only involve removing existing edges.
    \item If we compute the store data of a store in an ROB, then there is no change in $\RobDataDep,\RobAddrDep,\RobNTPreservePO$.
    \item If we compute the address of a memory instruction in an ROB, then the changes in $\RobDataDep,\RobAddrDep$ can only involve removing existing edges.
    \item If we compute the address of a load $L$ to be $a$ in an ROB, then the changes in $\RobNTPreservePO$ can only involve:
    \begin{enumerate}
        \item new edges pointing to $L$, and
        \item new edges from $L$, and
        \item removal of existing edges.
    \end{enumerate}
    \item If we compute the address of a store $S$ to be $a$ in an ROB, then the changes in $\RobNTPreservePO$ can only involve:
    \begin{enumerate}
        \item new edges point to $S$, and
        \item new edges starting from $S$, and
        \item new edges across $S$, and
        \item removal of existing edges.
    \end{enumerate}
\end{enumerate}
\end{lemma}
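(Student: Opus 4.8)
The plan is to make part~1 (the locality property) the workhorse and then classify every operational rule by exactly which per-instruction fields it mutates. First I would establish part~1 by inspecting the definitions clause by clause. For $\RobDataDep$ and $\RobAddrDep$ the ``most recent writer'' condition quantifies only over instructions strictly between $I_1$ and $I_2$, so locality is immediate. For $\RobNTPreservePO$, the clauses that mention only the types or addresses of the endpoints (the branch-to-store clause, the same-address load-store and store-store clauses, and the $\OrderedFunc$ clause) are trivially local; each remaining clause posits a witness instruction that is forced by $\RobProgOrd$ to lie between $I_1$ and $I_2$, and its side conditions (``no intervening store'', etc.) again range only over that interval. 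Hence whether $I_1$ precedes $I_2$ in any of the three relations is fixed by $I_1$, $I_2$, and the instructions between them.

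Alongside this I would record which fields each relation actually reads: $\RobDataDep$ and $\RobAddrDep$ look only at the ROB contents and the register read/write sets, never at a done bit, a computed address, store data, or the memory; $\RobNTPreservePO$ additionally consults the $\MemAddr$ fields (only through definedness and equality) and the instruction types. Parts~2--5 then follow quickly. For Fetch the appended instruction is youngest, so every pre-existing pair has an unchanged in-between set and part~1 forbids any reordering of it, while any genuinely new edge must take the fresh instruction as its younger endpoint (part~2). Marking a non-branch instruction done (Execute-Reg-to-Reg, Execute-Fence, Execute-Load, Execute-Store) and computing store data (Compute-Store-Data) touch only fields no relation reads, giving parts~3 and~5. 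For a branch, marking it done again changes nothing, and its only substantive effect is a misprediction kill; since every killed instruction is strictly younger than the branch, part~1 leaves surviving pairs untouched and the sole net effect is the loss of edges incident to killed instructions (part~4).

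The substance is Compute-Mem-Addr (parts~6--8), which I would analyse by separating its two effects: setting one $\MemAddr$ field from $\top$ to $a$, and a possible kill of a younger done load. Since $\RobDataDep$ and $\RobAddrDep$ ignore $\MemAddr$, only the kill can touch them, and a kill deletes only edges incident to killed instructions (part~6). For $\RobNTPreservePO$ I would argue that defining $I.\MemAddr=a$ can only switch on or off the address-sensitive clauses. When $I$ is a load $L$, it can appear only as a same-address load-load endpoint or as the destination of a same-address store-to-load edge, and a load is never the disqualifying ``intervening store'' of any clause; so the address computation produces only new edges into $L$ and out of $L$, with the kill supplying the removals (part~7). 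When $I$ is a store $S$, it can become the target of an older same-address load or store (edges into $S$), the source of a younger same-address store (edges out of $S$), and, via the data-forwarding clause (the analogue of case~\ref{ppo:ddep+sa} of Definition~\ref{def:ppo-dep}), the bypassed store through which a brand-new edge jumps across $S$ from an older producer to a younger same-address load.

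The step I expect to be the main obstacle is exactly this store case (part~8), because one address computation both creates and destroys edges at once: giving $S$ the address $a$ adds the into-/out-of-/across-$S$ edges above, yet simultaneously can turn $S$ into a disqualifying intervening store that invalidates a previously valid same-address load-load edge or a data-forwarding edge that had been routed through a different store. I would therefore prove part~8 by enumerating, clause by clause, every $\RobNTPreservePO$ pair whose status can flip when a single $\MemAddr$ moves from $\top$ to $a$, verifying that each newly-true pair is of the into-$S$, out-of-$S$, or across-$S$ shape and that each newly-false pair is a genuine removal; the kill is then folded in uniformly, exactly as in part~6, as one more source of removals.
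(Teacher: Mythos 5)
Your proposal is correct, and it takes essentially the same (and indeed the only natural) approach as the paper: the paper's entire proof of this lemma is the one-line remark that ``the cases in the lemma can be proved easily one by one,'' and your clause-by-clause verification is exactly that check carried out in full. Your two organizing devices --- the locality property of part~1 as the workhorse, and the classification of which per-instruction fields each of $\RobDataDep$, $\RobAddrDep$, $\RobNTPreservePO$ actually reads --- together with your careful handling of the genuinely delicate store case (where computing $S.\MemAddr$ simultaneously creates into-$S$/out-of-$S$/across-$S$ edges via the same-address and data-forwarding clauses and destroys edges by turning $S$ into an intervening store, with kills folded in as a uniform source of removals, sound because a kill deletes a suffix of the ROB) supply precisely the details the paper omits.
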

\begin{proof}
    The cases in the lemma can be proved easily one by one.
\end{proof}

\begin{lemma}
    The following invariants hold during the execution of the operational model:
    \begin{enumerate}
        \item \label{inv:rob:ppo} If $I_1 \RobNTPreservePO I_2$ and $I_2.\DoneTS \neq \top$, then $I_1.\DoneTS \neq \top$ and $I_1.\DoneTS < I_2.\DoneTS$.
        \item \label{inv:rob:addr} If $I_1 \RobAddrDep I_2$ and $I_2.\AddrTS \neq \top$, then $I_1.\DoneTS \neq \top$ and $I_1.\DoneTS < I_2.\AddrTS$.
        \item \label{inv:rob:st-data} If $I_1 \RobDataDep I_2$, and  not $I_1\RobAddrDep I_2$, and $I_2$ is a store, and $I_2.\StDataTS \neq \top$, then $I_1.\DoneTS \neq \top$ and $I_1.\DoneTS < I_2.\StDataTS$.
        \item \label{inv:rob:addr-st} If $I_1 \RobProgOrd I_2$, and $I_1$ is a memory instruction, and $I_2$ is a store, and $I_2.\DoneTS \neq \top$, then $I_1.\AddrTS \neq \top$ and $I_1.\AddrTS < I_2.\DoneTS$.
        \item \label{inv:rob:kill-done-st} We never kill a done store.
        \item \label{inv:rob:mem-val} For any address $a$, let $S$ be the store with the maximum $\DoneTS$ among all the done stores for address $a$.
        The monolithic memory value for $a$ is equal to $S.\StData$.
        \item \label{inv:rob:rf} For any done load $L$, let $S = L.\FromSt$ (i.e., $S$ is the store read by $L$).
        All of the following properties are satisfied:
        \begin{enumerate}
            \item \label{prop:rob:rf:no-kill} $S$ still exists in an ROB (i.e., S is not killed).
            \item \label{prop:rob:rf:addr-data} $S.\MemAddr = L.\MemAddr$ and $S.\StData = L.\LdVal$.
            \item \label{prop:rob:rf:done-no-st} If $S$ is done, then there is no not-done store $S'$ such that $S'.addr = a$ and $S'\RobProgOrd L$.
            \item \label{prop:rob:rf:done-max} If $S$ is done, then for any other done store $S'$ with $S'.\MemAddr = L.\MemAddr$, if $S'\RobProgOrd L$ or $S'.\DoneTS < L.\DoneTS$, then $S'.\DoneTS < S.\DoneTS$.
            \item \label{prop:rob:rf:not-done} If $S$ is not done, then $S\RobProgOrd L$, and there is no store $S'$ such that $S'.\MemAddr = L.\MemAddr$ and $S\RobProgOrd S'\RobProgOrd L$.
        \end{enumerate}
    \end{enumerate}
\end{lemma}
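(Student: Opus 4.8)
The plan is to prove the entire lemma by a single induction on the number of rule firings (equivalently, on the ghost global time), establishing all seven invariants \emph{simultaneously}. A simultaneous induction is essential because the invariants are mutually dependent: for instance, \ref{inv:rob:kill-done-st} is used to argue that once a store is done neither it nor any older instruction is ever removed, which in turn is what keeps \ref{inv:rob:addr-st}, \ref{inv:rob:mem-val}, and the sub-properties of \ref{inv:rob:rf} stable; conversely, the argument that a kill never removes a done store relies on guard \ref{guard:addr->st} of the Execute-Store rule, which is precisely \ref{inv:rob:addr-st}. The base case (the initial ROB, pre-loaded with the done register-initialization instructions and no speculative activity) makes every invariant hold trivially. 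For the inductive step I would fix a rule, assume all invariants before it fires, and check each invariant after; Lemma~\ref{lem:rob:ppo} is the workhorse here, since it tells us exactly which $\RobDataDep$, $\RobAddrDep$, and $\RobNTPreservePO$ edges can appear or disappear under each rule, so that an invariant can only be threatened by the handful of rules that either add an edge into an already-timestamped instruction or stamp a new timestamp.

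For the timestamp-ordering invariants \ref{inv:rob:ppo}--\ref{inv:rob:addr-st}, the strategy is to line up each rule's guard with the corresponding case of the relation. Invariant \ref{inv:rob:ppo} is stamped only when a rule marks some $I_2$ done, and the guards of Execute-Fence, Execute-Load, and Execute-Store require exactly that every older ordered instruction (the $\OrderedFunc$ predecessors, the same-address predecessors, and the branch/address/data cases) already be done, so all $\RobNTPreservePO$ predecessors carry a strictly smaller $\DoneTS$; Fetch adds edges only \emph{into} the freshly added not-done instruction (by Lemma~\ref{lem:rob:ppo}), so that case is vacuous. Invariants \ref{inv:rob:addr} and \ref{inv:rob:st-data} are stamped by Compute-Mem-Addr and Compute-Store-Data respectively, and follow because the operand-readiness search used by those rules locates precisely the nearest older writer of each source register --- i.e.\ the $\RobAddrDep$ / $\RobDataDep$ predecessor --- and requires it to be done, hence already stamped at an earlier time. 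Invariant \ref{inv:rob:addr-st} directly encodes guard \ref{guard:addr->st}: when Execute-Store stamps a store's $\DoneTS$, every older memory instruction already has its $\AddrTS$ set, necessarily at an earlier step.

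The crux of the argument is the pair \ref{inv:rob:kill-done-st} and \ref{inv:rob:addr-st}, handling the two kill sites. A kill occurs only on a branch misprediction (Execute-Branch) or when Compute-Mem-Addr discovers a younger done load to the same address; in both cases every killed instruction is strictly younger in the ROB than the instruction being resolved. The hard part is to rule out a \emph{done store} inside the killed range. For the branch case, any younger done store would, by guard~4 of Execute-Store, have required all older branches to be done before it was stamped --- but the branch is being marked done only now, a contradiction. For the address-computation case, any younger done store would, by guard \ref{guard:addr->st}, have required the address of the older memory instruction whose address we are computing for the first time to be already available, again a contradiction. With \ref{inv:rob:kill-done-st} in hand, \ref{inv:rob:mem-val} is then easy: only Execute-Store both writes memory and creates a done store, it makes that store the unique $\DoneTS$-maximal done store for its address, and no kill can later remove a $\DoneTS$-maximal done store.

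I expect the main obstacle to be invariant \ref{inv:rob:rf}, and in particular keeping all five sub-properties stable not only at the instant a load is marked done but under every subsequent step (other stores becoming done, addresses being computed, and kills firing). When Execute-Load stamps $L.\FromSt = S$ I would split on whether $S$ is the bypassed not-done store or the $\DoneTS$-maximal done store read from memory. The key local facts are that no store to $a$ can lie strictly between a not-done source store $S$ and $L$ --- since such an intervening done store would violate guard~6 of Execute-Store while such an intervening not-done store would have been found first by the ROB search --- giving \ref{prop:rob:rf:not-done}, and that a memory-reading load has, by the failed search, seen all older same-address memory instructions already done, giving \ref{prop:rob:rf:done-max} via \ref{inv:rob:mem-val}. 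Preservation under later steps then reduces to showing that newly-done stores to $a$ either are $\RobProgOrd$-younger than $L$ (so they fall outside the hypotheses of \ref{prop:rob:rf:done-max} and \ref{prop:rob:rf:done-no-st}) or carry a larger $\DoneTS$, and that \ref{inv:rob:kill-done-st} prevents $S$ itself from being killed (\ref{prop:rob:rf:no-kill}); the bookkeeping across the bypass-to-memory transition, where a not-done source store $S$ later becomes done, is where the argument is most delicate and will require the most care.
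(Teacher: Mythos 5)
Your proposal is correct and follows essentially the same route as the paper's proof in Appendix~\ref{sec:gam_axi_contain_op}: a single simultaneous induction over rule firings, with Lemma~\ref{lem:rob:ppo} controlling how the edge relations change, the rule guards supplying the timestamp orderings for invariants \ref{inv:rob:ppo}--\ref{inv:rob:addr-st}, the interplay of invariants \ref{inv:rob:addr-st} and \ref{inv:rob:kill-done-st} ruling out killed done stores at both kill sites, and the bypass-versus-memory-read split driving invariant \ref{inv:rob:rf}. The only differences are cosmetic --- e.g., you derive the kill contradictions directly from the Execute-Store guards at stamping time, where the paper routes through invariants \ref{inv:rob:ppo} and \ref{inv:rob:addr-st} (which are just the stable records of those guards) --- and you correctly flag as delicate exactly the spots (ROB-search cases of \ref{inv:rob:ppo} for loads, and preservation of \ref{prop:rob:rf:done-max} via \ref{prop:rob:rf:done-no-st} when new stores become done) where the paper's proof does the most work.
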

\begin{proof}
We now prove the invariants inductively.
That is, when rule $R$ fires in the operational model, we assume that all the invariants hold before $R$ fires, and try to prove that the invariants still hold after $R$ fires.
To avoid confusion, we add superscript 0 to model states and orderings before $R$ fires, and add superscript 1 to model states and orderings after $R$ fires.
For example, $\RobNTPreservePO^0$ denotes the non-transitive preserved program order before $R$ fires, while $I.\DoneTS^1$ denotes the done timestamp of instruction $I$ after $R$ fires.
Consider the type of rule $R$:
\begin{enumerate}
    \item Fetch: Assume $R$ fetches a new instruction $I$ into an ROB.
    According to Lemma~\ref{lem:rob:ppo}, new edges in $\RobDataDep^1, \RobAddrDep^1, \RobNTPreservePO^1$ must point to $I$.
    Now we consider each invariant:
    \begin{itemize}
        \item Invariants~\ref{inv:rob:ppo}, \ref{inv:rob:addr} and \ref{inv:rob:st-data}:
        These invariants may be affected by the new edges in $\RobDataDep^1, \RobAddrDep^1, \RobNTPreservePO^1$.
        However, since $I.\DoneTS^1$, $I.\AddrTS^1$ and $I.\StDataTS^1$ are all $\top$, these invariants cannot be affected.
        
        \item Invariants~\ref{inv:rob:addr-st}, \ref{inv:rob:kill-done-st}, \ref{inv:rob:mem-val}, \ref{inv:rob:rf}: These invariants cannot be affected.
    \end{itemize}

    \item Execute-Reg-to-Reg: Assume $R$ executes a reg-to-reg instruction $I$ and marks it as done.
    According to Lemma~\ref{lem:rob:ppo}, $\RobDataDep^1, \RobAddrDep^1, \RobNTPreservePO^1$ are the same as $\RobDataDep^0, \RobAddrDep^0, \RobNTPreservePO^0$, respectively.
    $I.\DoneTS$ changes from $\top$ to current global time.
    Now we consider each invariant:
    \begin{itemize}
        \item Invariant~\ref{inv:rob:ppo}: This invariant can be affected by the change in $I.\DoneTS$.
        Consider any $I_1$ such that $I_1 \RobNTPreservePO^1 I$.
        Since $I$ is a reg-to-reg instruction, it must be that $I_1\RobDataDep^1 I$ according to the definition of $\RobNTPreservePO$.
        Since $\RobDataDep^0 = \RobDataDep^1$, $I_1\RobDataDep^0 I$.
        The guard of $R$ requires that $I_1$ is already done before $R$ fires, so $I_1.\DoneTS^1 = I_1.\DoneTS^0 < I.\DoneTS^1$, i.e., the invariant still holds.
        
        \item Invariants~\ref{inv:rob:addr}, \ref{inv:rob:st-data}, \ref{inv:rob:addr-st}, \ref{inv:rob:kill-done-st}, \ref{inv:rob:mem-val}, \ref{inv:rob:rf}: These invariants cannot be affected.
    \end{itemize}

    \item Execute-Branch: Assume $R$ executes a branch instruction $I$ and marks it as done.
    According to Lemma~\ref{lem:rob:ppo}, $\RobDataDep^1, \RobAddrDep^1, \RobNTPreservePO^1$ are contained by $\RobDataDep^0, \RobAddrDep^0, \RobNTPreservePO^0$, respectively. 
    $I.\DoneTS$ changes from $\top$ to current global time, and instructions younger than $I$ in the ROB may all be killed.
    Now we consider each invariant:
    \begin{enumerate}
        \item Invariant~\ref{inv:rob:ppo}:
        Consider any $I_1$ such that $I_1\RobNTPreservePO^1 I$.
        Since $I$ is a branch, it must be that $I_1\RobDataDep^1 I$.
        Since $\RobDataDep^1 \subseteq \RobDataDep^0$, $I_1\RobDataDep^0 I$.
        The guard of $R$ ensures that $I_1$ must be already done right before $R$ fires, so $I_1.\DoneTS^1 = I_1.\DoneTS^0 < I.\DoneTS^1$, i.e., the invariant still holds.
        
        \item Invariants~\ref{inv:rob:addr}, \ref{inv:rob:st-data}, \ref{inv:rob:addr-st}: These invariants cannot be affected.
        
        \item Invariant~\ref{inv:rob:kill-done-st}:
        This invariant may be affected if instructions are killed.
        We prove by contradiction, i.e., we assume a done store $S$ is killed in rule $R$.
        Since $S$ is killed, $I\RobProgOrd^0 S\Rightarrow I\RobNTPreservePO^0 S$.
        According to invariant~\ref{inv:rob:ppo}, $S.\DoneTS^0 \neq \top \Rightarrow I.\DoneTS^0 \neq \top$, i.e., $I$ is done even before $R$ fires.
        This contradicts with the guard of $R$.
        
        \item Invariant~\ref{inv:rob:mem-val}: This invariant cannot be affected.
        
        \item Invariant~\ref{inv:rob:rf}: 
        We consider each case in this invariant:
        \begin{itemize}
            \item Invariant~\ref{prop:rob:rf:no-kill}: This invariant can be affected by instruction kills.
            Assume a store $S$ is killed by $I$ in rule $R$ ($I\RobProgOrd^0 S$), and $S$ is read by a load $L$ (i.e., $S = L.\FromSt$).
            We have shown that $S$ cannot be done, so $S$ is not done before $R$ fires.
            Invariant~\ref{prop:rob:rf:not-done} says that $S\RobProgOrd^0 L$.
            Then $L$ will also be killed by $I$, so this invariant still holds.
            
            \item Invariants~\ref{prop:rob:rf:addr-data}, \ref{prop:rob:rf:done-no-st}, \ref{prop:rob:rf:done-max}, \ref{prop:rob:rf:not-done}: These invariants cannot be affected.
        \end{itemize}
    \end{enumerate}

    \item Execute-Fence: Assume $R$ executes a fence instruction $F$ and marks it as done.
    According to Lemma~\ref{lem:rob:ppo}, $\RobDataDep^1, \RobAddrDep^1, \RobNTPreservePO^1$ are the same as $\RobDataDep^0, \RobAddrDep^0, \RobNTPreservePO^0$, respectively.
    $I.\DoneTS$ changes from $\top$ to current global time.
    Now we consider each invariant:
    \begin{itemize}
        \item Invariant~\ref{inv:rob:ppo}: This invariant can be affected by the change in $I.\DoneTS$.
        Consider any $I_1$ such that $I_1 \RobNTPreservePO^1 F$.
        Since $\RobNTPreservePO^0 = \RobNTPreservePO^1$, $I_1 \RobNTPreservePO^0 F$.
        Since $F$ is a fence, it must be that $\OrderedFenceFunc(I_1, F)$ is true before $R$ fires.
        Then the guard of $R$ ensures that $I_1$ must be already done before $R$ fires, so $I_1.\DoneTS^1 = I_1.\DoneTS^0 < F.\DoneTS^1$, i.e., the invariant still holds.
        
        \item Invariants~\ref{inv:rob:addr}, \ref{inv:rob:st-data}, \ref{inv:rob:addr-st}, \ref{inv:rob:kill-done-st}, \ref{inv:rob:mem-val}, \ref{inv:rob:rf}: These invariants cannot be affected.
    \end{itemize}

    \item Compute-Store-Data: Assume $R$ computes the data of a store $S$.
    According to Lemma~\ref{lem:rob:ppo}, $\RobDataDep^1, \RobAddrDep^1, \RobNTPreservePO^1$ are the same as $\RobDataDep^0, \RobAddrDep^0, \RobNTPreservePO^0$, respectively.
    $S.\StDataTS$ changes from $\top$ to current global time.
    Now we consider each invariant:
    \begin{itemize}
        \item Invariants~\ref{inv:rob:ppo}, \ref{inv:rob:addr}: These invariant cannot be affected.
        
        \item Invariant~\ref{inv:rob:st-data}:
        Consider any instruction $I_1$ such that $I_1 \RobDataDep^1 S$ but not $I_1\RobAddrDep^1 S$.
        Note that $\RobDataDep^0 = \RobDataDep^1$ and $\RobAddrDep^0 = \RobAddrDep^1$.
        Therefore, the computation of the data of $S$ uses the result of $I_1$ as a source operand.
        Then the guard of $R$ ensures that $I_1$ must be already done before $R$ fires.
        Therefore, we have $I_1.\DoneTS^1 = I_1.\DoneTS^0 < S.\StDataTS^1$, and the invariant still holds.
        
        \item Invariants~\ref{inv:rob:addr-st}, \ref{inv:rob:kill-done-st}, \ref{inv:rob:mem-val}: These invariant cannot be affected.
        
        \item Invariant~\ref{inv:rob:rf}:
        If there exists a done load $L$ such that $L.\FromSt^0 = S$, then $S.\StData^0 \neq \top$ according to invariant~\ref{prop:rob:rf:addr-data}.
        That is, the store data of $S$ is already computed before $R$ fires, contradicting with the guard of $R$.
        Therefore, $S$ is not read by any load yet, and this invariant is not affected.
    \end{itemize}

    \item Execute-Store: Assume $R$ executes a store $S$ and marks it as done.
    According to Lemma~\ref{lem:rob:ppo}, $\RobDataDep^1, \RobAddrDep^1, \RobNTPreservePO^1$ are the same as $\RobDataDep^0, \RobAddrDep^0, \RobNTPreservePO^0$, respectively.
    $S.\DoneTS$ changes from $\top$ to current global time, and the monolithic memory is also updated.
    Now we consider each invariant:
    \begin{itemize}
        \item Invariant~\ref{inv:rob:ppo}: This invariant can be affected by the change in $S.\DoneTS$.
        Consider any $I_1$ such that $I_1\RobNTPreservePO^1 S$.
        Since $\RobNTPreservePO^0 = \RobNTPreservePO^1$, $I_1\RobNTPreservePO^0 S$.
        Since $S$ is a store, there can following cases to form $I_1\RobNTPreservePO^0 S$:
        \begin{itemize}
            \item $I_1\DataDep^0 S$: 
            The guard of $R$ ensures that $\max(S.\AddrTS^0, S.\StDataTS^0) < S.\DoneTS^1$.
            Invariants~\ref{inv:rob:addr} and \ref{inv:rob:st-data} says that $I_1.\DoneTS^0 < \max(S.\AddrTS^0, S.\StDataTS^0)$.
            Thus, we have $I_1.\DoneTS^1 = I_1.\DoneTS^0 < S.\DoneTS^1$, i.e., the invariant still holds.
            
            \item $I_1$ is a branch: The guard of $R$ ensures that $I$ is already done before $R$ fires, so $I_1.\DoneTS^1 = I_1.\DoneTS^0 < S.\DoneTS^1$, i.e., the invariant still holds.
            
            \item There exists a memory instruction $I$ such that $I_1 \RobAddrDep^0 I\RobProgOrd^0 S$:
            The guard of $R$ ensures that the address of $I$ has been computed before $R$ fires, i.e., $I.\AddrTS^0 < S.\DoneTS^1$.
            Invariant~\ref{inv:rob:addr} says that $I_1.\DoneTS^0 < I.\AddrTS^0$.
            Thus, $I_1.\DoneTS^1 = I_1.\DoneTS^0 < S.\DoneTS^1$, i.e., the invariant still holds.
            
            \item $I_1$ is a load whose address has been computed to the same as the address of $S$:
            The guard of $R$ ensures that $I$ is done before $R$ fires,  so $I_1.\DoneTS^1 = I_1.\DoneTS^0 < S.\DoneTS^1$, i.e., the invariant still holds.
            
            \item $I_1$ is a store whose address has been computed to the same as the address of $S$:
            The guard of $R$ ensures that $I$ is done before $R$ fires,  so $I_1.\DoneTS^1 = I_1.\DoneTS^0 < S.\DoneTS^1$, i.e., the invariant still holds.
            
            \item $\OrderedFunc(I_1, S)$ is true:
            The guard of $R$ ensures that $I$ is done before $R$ fires, so $I_1.\DoneTS^1 = I_1.\DoneTS^0 < S.\DoneTS^1$, i.e., the invariant still holds.  
        \end{itemize}
        
        \item Invariants~\ref{inv:rob:addr}, \ref{inv:rob:st-data}: These invariants are not affected.
        
        \item Invariants~\ref{inv:rob:addr-st}: This invariant can be affected by the change in $S.\DoneTS$.
        Consider any memory instruction $I_1$ such that $I_1\RobProgOrd^1 I_2$.
        Note that $\RobProgOrd$ cannot be changed by $R$, so $I_1\RobProgOrd^0 I_2$.
        The guard of $R$ ensures that $I_1$ has computed its address before $R$ fires, so $I_1.\AddrTS^1 = I_1.\AddrTS^0 < S.\DoneTS^1$, i.e., the invariant still holds.
        
        \item Invariant~\ref{inv:rob:kill-done-st}: This invariant is not affected.
        
        \item Invariant~\ref{inv:rob:mem-val}: This invariant can be affected by making $S$ done and updating the monolithic memory.
        We only need to focus on memory address $a = S.\MemAddr^0$; other addresses are not affected.
        Note that $S.\DoneTS^1$ is the maximum among the non-$\top$ $\DoneTS$ of every instruction.
        Therefore, after $R$ fires, $S$ is the store with the maximum $\DoneTS$ among all done stores for $a$.
        On the other hand, the monolithic memory location $a$ is updated to $S.\StData$ by rule $R$.
        Thus, the invariant still holds.
        
        \item Invariant~\ref{inv:rob:rf}:
        We assume $a = S.\MemAddr$.
        We consider each case in this invariant:
        \begin{itemize}
            \item Invariant~\ref{prop:rob:rf:no-kill}, \ref{prop:rob:rf:addr-data}: These invariants are not affected.
            
            \item Invariant~\ref{prop:rob:rf:done-no-st}: This invariant can be affected when there exists a done load $L$ such that $L.\FromSt^0 = S$.
            We need to show that there is no not-done store $S'$ such that $S'.\MemAddr^1 = a$ and $S'\RobProgOrd^1 L$.
            Since rule does not compute any address or change $\RobProgOrd$.
            It is equivalent to show that there is no not-done store $S'$ such that $S'.\MemAddr^0 = a$ and $S'\RobProgOrd^0 L$.
            We prove by contradiction, i.e., we assume such $S'$ exists before $R$ fires.
            Since $S$ is not done before $R$ fires, according to invariant~\ref{prop:rob:rf:not-done}, $S \RobProgOrd^0 L$ and it cannot be that $S\RobProgOrd^0 S' \RobProgOrd^0 L$.
            Thus, it must be that $S'\RobProgOrd S$.
            However, the guard of $R$ requires that $S'$ to be done bofore $R$ fires, contradicting with the assumption that $S'$ is not done.
            Thus, the invariant holds.
            
            \item Invariant~\ref{prop:rob:rf:done-max}: This invariant can be affected in the following two ways:
            \begin{enumerate}
                \item There exists a done load $L$ such that $L.\FromSt^0 = S$:
                In this case, we need to show that for any other done store $S'$ with $S'.\MemAddr^1 = a$, if $S'\RobProgOrd^1 L$ or $S'.\DoneTS^1 < L.\DoneTS^1$, then $S'.\DoneTS^1 < S.\DoneTS^1$.
                Since $S.\DoneTS^1$ is the maximum $\DoneTS$ among all done instructions, this invariant still holds.
                
                \item There exists a done load $L^*$ and a done store $S^*$ such that $L^*.\FromSt^1 = S^*$ and $L^*.\MemAddr^1 = S^*.\MemAddr^1 = S^*$:
                In this case, we need to show that if $S\RobProgOrd^1 L^*$ or $S.\DoneTS^1 < L^*.\DoneTS^1$, then $S.\DoneTS^1 < S^*.\DoneTS^1$.
                Since $S.\DoneTS^1$ is the maximum, $S.\DoneTS^1 < L^*.\DoneTS^1$ must be false.
                We will now show that $S\RobProgOrd^1 L^*$ is also impossible to prove the invariant holds.
                We prove it by contradiction, i.e., we assume $S\RobProgOrd^1 L^*$.
                Since rule $R$ does not change $\RobProgOrd$ or any store address or any state of $L^*$ and $S^*$, we have $S\RobProgOrd^0 L^*$, $S.\MemAddr^0 = a = L^*.\MemAddr^* = S^*.\MemAddr^0$, $L^*.\FromSt^0 = S^*$, and $S^*$ is done before $R$ fires.
                This contradicts with Invariant~\ref{prop:rob:rf:done-no-st}.
                Therefore, the invariant still holds.
            \end{enumerate}
            
            \item Invariant~\ref{prop:rob:rf:not-done}: This invariant is not affected.
        \end{itemize}
    \end{itemize}
    
    \item Execute-Load: Assume $R$ executes a load $L$.   
    If $L$ is not marked as done, the model state does not change, so all invariants still hold.
    Now we consider the case that $L$ is marked as done.
    According to Lemma~\ref{lem:rob:ppo}, $\RobDataDep^1, \RobAddrDep^1, \RobNTPreservePO^1$ are the same as $\RobDataDep^0, \RobAddrDep^0, \RobNTPreservePO^0$, respectively.
    $L.\DoneTS$ changes from $\top$ to current global time.
    Now we consider each invariant:
    \begin{itemize}
        \item Invariant~\ref{inv:rob:ppo}: This invariant can be affected by the change in $L.\DoneTS$.
        Assume $a = L.\MemAddr^0$.
        Consider any $I_1$ such that $I_1\RobNTPreservePO^1 L$.
        Since $\RobNTPreservePO$ is not changed by $R$, we have $I_1\RobNTPreservePO^0 L$.
        This ordering can be caused by the following cases:
        \begin{itemize}
            \item $I_1\DataDep^0 L$: Since $L$ only needs to compute the address from registers, $I_1\AddrDep^0 L$.
            Invariant~\ref{inv:rob:addr} says that $I_1.\DoneTS^0 < I_1.\AddrTS^0$.
            The guard of $R$ ensures that $L.\AddrTS^0 < L.\DoneTS^1$.
            Therefore $I_1.\DoneTS^1 = I_1.\DoneTS^0 < L.\DoneTS^1$, i.e., the invariant still holds.
                          
            \item There exists a store $S$ such that $S.\MemAddr^1 = a$ and $I_1 \RobDataDep^1 S\RobProgOrd^1 L$, and there is no store $S'$ such that $S'.\MemAddr^1 = a$ and $S \RobProgOrd^1 S' \RobProgOrd^1 L$:
            Since $R$ does not change $\RobProgOrd$, $\RobDataDep$ or any address, the above condition becomes: $S.\MemAddr^0 = a$, and $I_1\RobDataDep^0 S\RobProgOrd L$, and there is no store $S'$ such that $S'.\MemAddr^0 = a$ and $S \RobProgOrd^0 S' \RobProgOrd^0 L$.
            Before $R$ fires, if there are not-done loads with computed addresses $a$ between $S$ and $L$ in the ROB, then let $L'$ be the youngest of them in ROB, and the ROB search conducted in $R$ will stop at $L'$.
            This will make $R$ not mark $L$ as done, contradicting with our previous assumption.
            Therefore, right before $R$ fires, any load with computed address $a$ between $S$ and $L$ in the ROB must be done.
            Since $S.\MemAddr^0 = a$, the ROB search in $R$ will search through $S$.
            If $S.\StDataTS^0 = \top$, then the ROB search will stop at $S$ and $L$ cannot be marked as done, contradicting with our assumption.
            Therefore, $S.\StDataTS\neq \top \Rightarrow S.\StDataTS^0 < L.\DoneTS^1$.
            Since address of $S$ is computed before $R$ fires, $S.\AddrTS^0 < L.\DoneTS^1$.
            Invariants~\ref{inv:rob:addr} and \ref{inv:rob:st-data} say that $I_1.\DoneTS^0 < \max(S.\AddrTS^0, S.\StDataTS^0)$, so $I_1.\DoneTS^1 = I_1.\DoneTS^0 < L.\DoneTS^1$, i.e., the invariant still holds.
            
            \item $I_1$ is a load with $I_1.\MemAddr^1 = a$, and there is no store $S$ such that $S.\MemAddr^1 = a$ and $I_1 \RobProgOrd^1 S \RobProgOrd^1 L$:
            Since $R$ does not change $\RobProgOrd$ or any address, the above condition becomes: $I_1.\MemAddr^0 = a$, and there is no store $S$ such that $S.\MemAddr^0 = a$ and $I_1 \RobProgOrd^0 S \RobProgOrd^1 L$.
            Before $R$ fires, if there are not-done loads with computed addresses $a$ between $I_1$ and $L$ in the ROB, then let $L'$ be the youngest of them in ROB, and the ROB search conducted in $R$ will stop at $L'$.
            This will make $R$ not mark $L$ as done, contradicting with our previous assumption.
            Therefore, right before $R$ fires, any load with computed address $a$ between $I_1$ and $L$ in the ROB must be done.
            Since $I_1.\MemAddr = a$, the ROB search in $R$ will search through $S$.
            If $I_1$ is not done before $R$ fires, then the ROB search will stop at $L$ and $L$ cannot be marked as done in $R$, contradicting with our previous assumption.
            Therefore $I_1$ is done before $R$ fires, so $I_1.\DoneTS^1 = I_1.\DoneTS^0 < L.\DoneTS^1$, i.e., the invariant still holds.
            
            \item $\OrderedFunc(I_1,L)$ is true:
            The guard of $R$ ensures that $I$ is done before $R$ fires, so $I_1.\DoneTS^1 = I_1.\DoneTS^0 < S.\DoneTS^1$, i.e., the invariant still holds.
        \end{itemize}
        
        \item Invariants~\ref{inv:rob:addr}, \ref{inv:rob:addr-st}, \ref{inv:rob:st-data}, \ref{inv:rob:kill-done-st}, \ref{inv:rob:mem-val}: These invariants cannot be affected.
        
        \item Invariant~\ref{inv:rob:rf}: This invariant can be affected because $L$ becomes done.
        Let $S = L.\FromSt^1$, and let $a = L.\MemAddr^1$.
        We need to show that $L$ and $S$ satisfies all the sub-invariants.
        We consider each case of separately.
        \begin{itemize}
            \item Invariant~\ref{prop:rob:rf:no-kill}: This invariant cannot be affected.
            
            \item Invariant~\ref{prop:rob:rf:addr-data}: We need to show that $S.\MemAddr^1 = L.\MemAddr^1$ ad $S.\StData^1 = L.\LdVal^1$.
            Note that $S.\MemAddr^0 = S.\MemAddr^0 = L.\MemAddr^0 = L.\MemAddr^1$.
            Also note that $S.\StData^0 = S.\StData^1$ and $L.\LdVal$ is the value read in rule $R$.
            If $L$ bypasses from local store in ROB, then $S$ is the store being bypassed, and the invariant holds.
            Otherwise, $L$ reads from monolithic memory, and invariant~\ref{inv:rob:mem-val} ensures that invariant holds.
            
            \item Invariant~\ref{prop:rob:rf:done-no-st}: Since $S$ is done after $R$ fires, $S$ is also done before $R$ fires.
            Then $R$ reads the value from monolithic memory.
            The guard of $R$ ensures that there is no store $S'$ such that $S'.\MemAddr^0 = a$ and $S'\RobProgOrd^0 L$.
            Since $R$ does not change address or $\RobProgOrd$, the invariant still holds.
            
            \item Invariant~\ref{prop:rob:rf:done-max}: Using the same argument as above, $R$ reads the value from monolithic memory.
            Right before $R$ fires, invariant~\ref{inv:rob:mem-val} says that for any other done store $S'$ with $S'.\MemAddr^0 = a$, $S'.\DoneTS < S.\DoneTS$.
            Since $R$ does not change $\DoneTS$ of stores or address, the invariant still holds.
            
            \item Invariant~\ref{prop:rob:rf:not-done}: Since $S.\DoneTS^1 = \top$, $S.\DoneTS^0 = \top$.
            Then $R$ reads the value by bypassing from $S$ in the local ROB, i.e., the ROB search in $R$ stops at $S$.
            We now prove by contradiction, i.e., we assume there exists $S'$ with $S'.\MemAddr^1 = a$ and $S\RobProgOrd^1 S' \RobProgOrd^1 L$.
            Since $R$ does not change address or $\RobProgOrd$, $S'.\MemAddr^0 =a$ and $S\RobProgOrd^0 S' \RobProgOrd^0 L$.
            Since $S\RobNTPreservePO^0 S'$, invariant~\ref{inv:rob:ppo} says that $S'.\DoneTS^0 = \top$.
            Then the ROB search cannot stop at $S$, contradicting with our previous conclusion.
            Therefore the invariant still holds.
        \end{itemize}
    \end{itemize}
    
    \item Compute-Mem-Addr for load $L$: $R$ computes the address of load $L$ to $a$.
    According to Lemma~\ref{lem:rob:ppo}, edges in $\RobDataDep$ and $\RobAddrDep$ may reduce.
    Some edges in $\RobNTPreservePO$ may be removed, but there can also be new $\RobNTPreservePO$ edges.
    $L.\MemAddr$ changes from $\top$ to $a$.
    We consider each invariant separately.
    \begin{itemize}
        \item Invariant~\ref{inv:rob:ppo}:
        Since $L$ is not done, we only need to consider newly generated $\RobNTPreservePO$ edges that start from $L$.
        Consider any $I_2$ such that $L\RobNTPreservePO^1 I_2$ but not $L\RobNTPreservePO^0 I_2$.
        We need to show that $I_2.\DoneTS^1 = \top$.
        $I_2$ must be in the following cases:
        \begin{itemize}
            \item $I_2$ is a store, $L\RobProgOrd^1 I_2$, and $I_2.\MemAddr^1 = a$:
            We prove by contradiction, i.e., assume $I_2.\DoneTS^1 \neq \top$.
            This gives $I_2.\DoneTS^0 \neq \top$.
            Since $L\RobProgOrd^0 S$ and $L.\AddrTS^0 = \top$, invariant~\ref{inv:rob:addr} says that $I_2.\DoneTS^0 = \top$, contradicting with previous assumption.
            Therefore the invariant still holds.
            
            \item $I_2$ is a load, $L\RobProgOrd^1 I_2$, $I_2.\MemAddr^1 = a$, and there is no store $S$ such that $S.\MemAddr^1 = a$ and $L\RobProgOrd^1 S\RobProgOrd^1 I_2$:
            We prove by contradiction, i.e., assume $I_2.\DoneTS^1 \neq \top$.
            This implies that $I_2$ is also done before $R$ fires.
            Since the ROB search in $R$ does not kill $I_2$, there must be a not-done load $L'$ such that $L'.\MemAddr^0 = a$ and $L\RobProgOrd^0 L' \RobProgOrd^0 I_2$.
            Now we have $L'\RobNTPreservePO I_2$.
            Since $L'.\DoneTS = \top$, this contradicts with invariant~\ref{inv:rob:ppo}.
            Therefore the invariant still holds.
        \end{itemize}
        
        \item Invariant~\ref{inv:rob:addr}: The guard of $R$ ensures that this invariant still holds.
        
        \item Invariants~\ref{inv:rob:st-data}, \ref{inv:rob:addr-st}: These invariants cannot be affected.
        
        \item Invariant~\ref{inv:rob:kill-done-st}: We prove by contradiction, i.e., we assume a done store $S$ is killed.
        Note that $S.\DoneTS^0 = S.\DoneTS^1 \neq \top$ and $L\RobProgOrd^0 S$.
        Invariant~\ref{inv:rob:addr-st} says that $L.\AddrTS^0 \neq \top$, contradicting with the guard of $R$.
        Therefore the invariant still holds.
        
        \item Invariants~\ref{inv:rob:mem-val} and \ref{inv:rob:rf}: These invariants cannot be affected.
    \end{itemize}

    \item Compute-Mem-Addr for store $S$: $R$ computes the address of $S$ to be $a$.
    According to Lemma~\ref{lem:rob:ppo}, edges in $\RobDataDep$ and $\RobAddrDep$ may reduce.
    Some edges in $\RobNTPreservePO$ may be removed, but there can also be new $\RobNTPreservePO$ edges.
    $S.\MemAddr$ changes from $\top$ to $a$.
    We consider each invariant separately.
    \begin{itemize}
        \item Invariant~\ref{inv:rob:ppo}:
        Since $S$ is not done, we do not need to consider new $\RobNTPreservePO$ edges ending at $S$.
        We only need to consider new $\RobNTPreservePO$ edges starting from $S$ or across $S$.
        There are the following two cases:
        \begin{itemize}
            \item There is store $S'$ such that $S'.\MemAddr^1 = a$ and $S\RobProgOrd^1 S'$:
            We need to show that $S'.\DoneTS^1 = \top$.
            We prove by contradiction, i.e., we assume $S'.\DoneTS^1 \neq \top$.
            This implies $S'.\DoneTS^0 \neq \top$ and $S\RobProgOrd^0 S'$.
            According to invariant~\ref{inv:rob:addr-st}, $S.\MemAddr^0 \neq \top$, contradicting with the guard of $R$.
            
            \item There is instruction $I_1$ and load $L$ such that $L.\MemAddr^1 = a$ and $I_1\RobDataDep^1 S\RobProgOrd^1 L$, and there is no store $S'$ such that $S'.\MemAddr^1=a$ and $S\RobProgOrd^1 S'\RobProgOrd^1 L$:
            The above statement becomes: $L.\MemAddr^0 = a$, $I_1\RobDataDep^0 S\RobProgOrd^0 L$, and there is not store $S'$ such that $S'.\MemAddr^0 = a$ and $S\RobProgOrd^0 S'\RobProgOrd^0 L$.
            We can show that $L.\DoneTS^1 = \top$, so the invariant still holds.
            We prove by contradiction, i.e., $L.\DoneTS^1\neq \top$.
            This implies $L.\DoneTS^0\neq \top$.
            Since $S$ is not killed by the ROB search in $R$, there must be $L'$ such that $L'.\MemAddr^0 = a$, $L'.\DoneTS^0 = \top$ and $S\RobProgOrd^0 L'\RobProgOrd^0 L$.
            This gives $L'\RobNTPreservePO^0 L$.
            Since $L.\DoneTS^0\neq \top$, this contradicts invariant~\ref{inv:rob:ppo}.
        \end{itemize}
        
        \item Invariant~\ref{inv:rob:addr}: The guard of $R$ ensures that this invariant still holds.
        
        \item Invariants~\ref{inv:rob:st-data}, \ref{inv:rob:addr-st}: These invariant cannot be affected.
        
        \item Invariant~\ref{inv:rob:kill-done-st}: We prove by contradiction, i.e., we assume a done store $S'$ is killed.
        That is, $S'.\DoneTS^0 \neq \top$ and $S\RobProgOrd^0 S'$.
        Invariant~\ref{inv:rob:addr-st} says that $S.\MemAddr^0 \neq \top$, contradicting with the guard of $R$.
        
        \item Invariant~\ref{inv:rob:mem-val}: This invariant is not affected.
        
        \item Invariant~\ref{inv:rob:rf}: For any done load $L^*$ for address $a$, assume $S^* = L^*.\FromSt^1$.
        The address computation of $S$ may prevent $L^*$ and $S^*$ from satisfying the invariants here.
        Note that $L^*.\MemAddr^0 = a$ and $S^* = L^*.\FromSt^0$.
        We consider each case of this invariant:
        \begin{itemize}
            \item Invariant~\ref{prop:rob:rf:no-kill}:
            We prove by contradiction, i.e., $S^*$ is killed but $L^*$ is not.
            We have proved that $S^*$ cannot be done, so $S^*$ is not done before $R$ fires.
            Invariant~\ref{prop:rob:rf:not-done} says that $S^* \RobProgOrd^0 L^*$.
            Then $L^*$ is also killed, contradicting with our assumption.
            
            \item Invariant~\ref{prop:rob:rf:addr-data}: This invariant cannot be affected.
            
            \item Invariant~\ref{prop:rob:rf:done-no-st}:
            We need to show that if $S^*$ is done, then it is impossible that $S \RobProgOrd^1 L^*$.
            We prove by contradiction, i.e., assume $S^*.\DoneTS^0 \neq \top$ and $S\RobProgOrd^1 L^*$.
            This implies that $S\RobProgOrd^0 L^*$..
            Invariant~\ref{prop:rob:rf:done-no-st} says that there is no store $S'$ such that $S'.\MemAddr^0 = a$ and $S'\RobProgOrd^0 L^*$.
            Since $L^*$ is not killed in the ROB search of rule $R$, there must be load $L'$ such that $L'.\MemAddr^0 = a$ and $L'.\DoneTS^0 = \top$ and $L'\RobProgOrd^0 L^*$.
            This gives $L' \RobNTPreservePO^0 L^* \Rightarrow L'.\DoneTS^0 \neq \top$, contradicting with previous conclusion.
            Therefore the invariant still holds.
            
            \item Invariant~\ref{prop:rob:rf:done-max}: This invariant is not affected.
            
            \item Invariant~\ref{prop:rob:rf:not-done}: We need to show that if $S^*$ is not done, then it is impossible that $S^* \RobProgOrd^1 S\RobProgOrd^1 L^*$.
            We prove by contradiction, i.e., we assume $S^*.\DoneTS^0\neq \top$ and $S^* \RobProgOrd^1 S\RobProgOrd^1 L^*$.
            This implies $S^* \RobProgOrd^0 S\RobProgOrd^0 L^*$.
            Invariant~\ref{prop:rob:rf:not-done} says that there is no store $S'$ such that $S'.\MemAddr^0 =a$ and $S^*\RobProgOrd^0 S' \RobProgOrd^0 L^*$.
            Since $L^*$ is not killed by the ROB search in rule $R$, there must be load $L'$ such that  $L'.\MemAddr^0 = a$ and $L'.\DoneTS^0 = \top$ and $S\RobProgOrd^0 L'\RobProgOrd^0 L^*$.
            This gives $L' \RobNTPreservePO^0 L^* \Rightarrow L'.\DoneTS^0 \neq \top$, contradicting with previous conclusion.
            Therefore the invariant still holds.
        \end{itemize}
    \end{itemize}

\end{enumerate}
\end{proof}

\newpage
\section{GAM Axiomatic model $\subseteq$ GAM Operational Model}\label{sec:gam_op_contain_axi}

\begin{theorem}
    GAM axiomatic model $\subseteq$ GAM operational model.
\end{theorem}
\begin{proof}
    The goal is that for any legal axiomatic relations $\langle \ProgOrd, \MemOrd, \ReadFrom\rangle$ (which satisfy the GAM axioms), we can run the operational model to give the same program behavior.
    The strategy to run the operational model consists of two major phases.
    In the first phase, we only fire Fetch rules to fetch all instructions into all ROBs according to $\ProgOrd$.
    During the second phase, in each step we fire a rule that either marks an instruction as done or computes the address or data of a memory instruction.
    Which rule to fire in a step depends on the current state of the operational model and $\MemOrd$.
    Here we give the detailed algorithm that determines which rule to fire in each step:
    \begin{enumerate}
        \item If in the operational model there is a not-done reg-to-reg or branch instruction whose source registers are all ready, then we fire an Execute-Reg-to-Reg or Execute-Branch rule to execute that instruction.
        \item If the above case does not apply, and in the operational model there is a memory instruction, whose address is not computed but the source registers for the address computation are all ready, then we fire a Compute-Mem-Addr rule to compute the address of that instruction.
        \item If neither of the above cases applies, and in the operational model there is a store instruction, whose store data is not computed but the source registers for the data computation are all ready, then we fire a Compute-Store-Data rule to compute the store data of that instruction.
        \item If none of the above cases applies, and in the operational model there is a fence instruction and the guard of the Execute-Fence rule for this fence is ready, then we fire the Execute-Fence rule to execute that fence.
        \item \label{sim:mem} If none of the above cases applies, then we find the oldest instruction in $\MemOrd$, which is not-done in the operational model, and we fire an Execute-Load or Execute-Store rule to execute that instruction.
    \end{enumerate}
    Before giving the invariants, we give a definition related  to the ordering of stores for the same address.
    For each address $a$, all stores for $a$ are totally ordered by $\MemOrd$, and we refer to this total order of stores for $a$ as $<_{co}^a$.
    
    Now we show the invariants.
    After each step, we maintain the following invariants:
    \begin{enumerate}
        \item The order of instructions in each ROB in the operational model is the same as the $\ProgOrd$ of that processor in the axiomatic relations.
        \item \label{inv:result} The results of all the instructions that have been marked as done so far in the operational model are the same as those in the axiomatic relations.
        \item All the load/store addresses that have been computed so far in the operational model are the same as those in the axiomatic relations.
        \item All the store data that have been computed so far in the operational model are the same as those in the axiomatic relations.
        \item \label{inv:no-kill} No kill has ever happened in the operational model.
        \item For the rule fired in each step that we have performed so far, the guard of the rule is satisfied the at that step (i.e., the rule can fire).
        \item \label{inv:done} In each step that we have performed so far, if we fire a rule to execute an instruction (especially a load) in that step, the instruction must be marked as done by the rule.
        \item \label{inv:store} For each address $a$, the order of all the store updates on monolithic memory address $a$ that have happened so far in the operational model is a prefix of $<_{co}^a$.
    \end{enumerate}
    We now examine each option that we may choose in each step of phase 2, and verify that all invariants hold.
    \begin{enumerate}
        \item We execute a reg-to-reg or branch instruction: trivial.

        \item We compute the address of a load or store instruction $I$: we only need to verify invariant~\ref{inv:no-kill}, i.e., this address computation will not kill any done load.
        Assume the address of $I$ is $a$, and $I$ is in processor $i$.   
        We prove by contradiction, i.e., we assume the address computation of $I$ kills a done load $L$ in the ROB of processor $i$.
        We search $\ProgOrd$ of processor $i$ from $L$ towards the oldest instruction (excluding $L$).
        We stop the search when we find a load or store instruction for address $a$ (note that all addresses in $\ProgOrd$ are known).
        We refer to the instruction found as $I_a$.
        Since $I$ has address $a$, either $I=I_a$ or $I \ProgOrd I_a$.
        If $I=I_a$, then $I_a$ is not-done and its address is not computed before the kill.
        In case of $I \ProgOrd I_a$, $I_a$ cannnot be a done store (because the address of $I$ is just computed).
        In this case, $I_a$ must be not-done and its address must not  be computed, otherwise $L$ will not be killed.
        In either case, $I_a$ must be not-done and the address of $I_a$ is not computed before the kill.
        We also have the following observation:
        \begin{itemize}
            \item $L$ can only become done via option~\ref{sim:mem} in a previous step, so for any not-done memory instruction $I'$ in the operational model, we have $L \MemOrd I'$ in the axiomatic relations.
        \end{itemize}
        We consider the following two possibilities:
        \begin{enumerate}
            \item $I_a$ is a load: In this case, the above observation says that $L \MemOrd I_a$.
            However, since there is no other memory instruction for address $a$ between $I_a$ and $L$ in processor $i$, $I_a \SameAddrOrd L \Rightarrow I_a \MemOrd L$.
            Thus this case is impossible.
            
            \item $I_a$ is a store: In this case, we consider which store is read by $L$.
            \begin{enumerate}
                \item $L$ bypasses from a store $S$ in processor $i$.
                $S$ must be older than $I_a$ in ROB (because address of $I_a$ is not computed at the time when $L$ is executed), so $S \ProgOrd I_a \Rightarrow S \MemOrd I_a$.
                This contradicts with the Load-Value axiom.
                
                \item $L$ reads the value of a store $S$ from the monolithic memory.
                Since all the done stores for $a$ form the prefix of $<_{co}^a$ and $I_a$ is not-done, $S <_{co}^a I_a \Rightarrow S\MemOrd I_a$.
                This also contradicts with the Load-Value axiom.
            \end{enumerate}
        \end{enumerate}
        Therefore this address computation cannot cause any kill.

        \item We compute a store data: trivial.
        
        \item We execute a fence instruction: trivial.

        \item We execute a memory instruction $I$ from $\MemOrd$: 
        First note that for any memory instruction $I'$ such that $I' \MemOrd I$, $I'$ must be done in the operational model (because of the way we pick $I$).
        We now prove according to the type of $I$.
        \begin{itemize}        
            \item $I$ is a store for address $a$: we first show that all the guards are satisfied:
            \begin{enumerate}
                \item Address and data of $I$ have been computed:
                We prove by contradiction, i.e., the address or data of $I$ has not been computed.
                We backtrack the dependency chain on $I$.
                The only reason for not being able to compute the address or data of $I$ is that an older instruction $I_1$ is not-done, and that $I_1 \DataDep I$.
                $I_1$ can only be a not-done reg-to-reg instruction or a not-done load.
                If $I_1$ is a reg-to-reg instruction, it cannot be computed because of a not-done instruction $I_2 \DataDep I_1$.
                We trace this dependency chain until we encounter a load, i.e., $I_k \DataDep I_{k-1} \DataDep \ldots \DataDep I_1 \DataDep I$, where $I_k$ is a not-done load and $I_1\ldots I_{k-1}$ are all not-done reg-to-reg instructions.
                Now we have $I_k \PreservePO I \Rightarrow I_k \MemOrd I$, contradicting with the way we pick $I$ in option~\ref{sim:mem}.
                
                \item All older memory or fence instructions that are ordered before $I$ by $\OrderedFunc$ are done:
                We prove by contradiction, i.e., assume there is a not-done memory or fence instruction $I_1$ that is older than $I$ in ROB and satisfies $\OrderedFunc(I_1, I)$.
                This implies $I_1 \ProgOrd I \Rightarrow I_1\LSFOrd I$.
                If $I_1$ is a not-done fence, the guard to execute it in the operational model must be false according to our algorithm.
                Therefore, there must be a not-done memory or fence instruction $I_2$ that is older than $I_1$ in ROB and satisfies $\OrderedFunc(I_2, I_1)$.
                This implies $I_2\LSFOrd I$.
                We keep backtracking if $I_2$ is also a not-done fence.
                We stop backtracking until $I_k$ is a not-done memory instruction.
                That is, we have $I_k \LSFOrd I_{k-1} \LSFOrd\cdots\LSFOrd I_1 \LSFOrd I \Rightarrow I_k \MemOrd I$.
                Since $I_k$ is a not-done memory instruction, this contradicts with the way we pick $I$ in option~\ref{sim:mem}.
                Therefore, such $I_1$ does not exist.
                
                \item All older branches are done:
                We prove by contradiction, i.e., an older branch $B$ in the ROB of $I$ is not-done.
                We backtrack the dependency chain on $B$, and get $I_k \DataDep I_{k-1} \DataDep \ldots \DataDep I_1 \DataDep B$, where $I_k$ is a not-done load and $I_1\ldots I_{k-1}$ are all not-done reg-to-reg instructions.
                Since $B \ProgOrd I$, we have $I_k \PreservePO B \DepOrd I \Rightarrow I_k \PreservePO I \Rightarrow I_k \MemOrd I$, contradicting with the way we pick $I$ in option~\ref{sim:mem}.
                
                \item All older loads and stores have computed their addresses:
                We prove by contradiction, i.e., an older load or store $I'$ in the ROB of $I$ has not computed its address.
                We backtrack the dependency chain on the address of $I'$, and get $I_k \DataDep I_{k-1} \DataDep \ldots \DataDep I_1 \AddrDep I'$, where $I_k$ is a not-done load and $I_1\ldots I_{k-1}$ are all not-done reg-to-reg instructions.
                Since $I' \ProgOrd I$, we have $I_k \PreservePO I_1 \DepOrd I \Rightarrow I_k \PreservePO I \Rightarrow I_k \MemOrd I$, contradicting with the way we pick $I$ in option~\ref{sim:mem}.
                
                \item All older loads and stores for address $a$ are done:
                For any store $S$ for address $a$ that is older than $I$ in the ROB of $I$, we have $S \SameAddrOrd I \Rightarrow S \MemOrd I$.
                Therefore, $S$ must be done.
                For any load $L$ for address $a$ that is older than $I$ in the ROB of $i$, we have $L \SameAddrOrd I\Rightarrow L \MemOrd I$.
                Therefore, $L$ must be done.
            \end{enumerate}
            We now only need to verify invariant~\ref{inv:store}.
            This is trivial, because all stores for $a$ that is older than $I$ in $<_{mo}$ are done (i.e., have updated $m[a]$).
            
            \item $I$ is a load for address $a$:
            we first show that all the guards are satisfied:
            \begin{enumerate}
                \item Address of $I$ has been computed: same argument as store case.
                \item All older memory or fence instructions that are ordered before $I$ by $\OrderedFunc$ are done: same argument as store case.
            \end{enumerate}
            We now need to verify invariants~\ref{inv:result} and \ref{inv:done}.
            To do this, we consider the three possible outcomes of the ROB search in the Execute-Load rule that executes $I$:
            \begin{enumerate}
                \item The search finds a not-done load $L$: We prove that this case is impossible (for invariant~\ref{inv:done}) by contradiction.
                If there are intervening stores for $a$ between $L$ and $I$ in the ROB, none of those stores can be done, because the not-done load $L$ will make the guards of Execute-Store rules fail.
                Let $S$ be the youngest store among these stores.
                $S$ must not have computed its address, because otherwise the search will stop at $S$.
                Now we backtrack the dependency chain on the address of $S$, and get $I_k \DataDep I_{k-1} \DataDep \ldots \DataDep I_1 \DataDep S$, where $I_k$ is a not-done load and $I_1\ldots I_{k-1}$ are all not-done reg-to-reg instructions.
                Since there is no store for $a$ between $S$ and $I$, we have $I_k \PreservePO I_1 \DepOrd I\Rightarrow I_k \PreservePO I \Rightarrow I_k \MemOrd I$.
                Since $I_k$ is not done, this contradicts with the way we pick $I$ in option~\ref{sim:mem}.
                Therefore, there is no store for $a$ between $L$ and $I$ in the ROB.
                Then we have $L \SameAddrOrd I \Rightarrow L \MemOrd I$.
                Since $L$ is not done, this contradicts with the way we pick $I$ in option~\ref{sim:mem}.

                \item The search finds a not-done store $S$: 
                Using the same argument as above, there cannot be any store for $a$ between $I$ and $S$ in ROB.
                We now prove that the data of the $S$ must have been computed (for invariant~\ref{inv:done}).
                We prove by contradiction, i.e., we assume the data of $S$ is not yet computed.
                We backtrack the dependency chain on the data of $S$, and get $I_k \DataDep I_{k-1} \DataDep \ldots \DataDep I_1 \DataDep S$, where $I_k$ is a not-done load and $I_1\ldots I_{k-1}$ are all not-done reg-to-reg instructions.
                Since there is no store for $a$ between $S$ and $I$, we have $I_k \PreservePO I_1 \DepOrd I \Rightarrow I_k \MemOrd i$, contradicting with the way we pick $i$ in option~\ref{sim:mem} ($j_n$ is a not-done load).
                
                Since the data of $S$ has been computed, $I$ reads from $S$.
                We now need to verify invariant~\ref{inv:result}.
                Since $S$ is not-done, we have $I \MemOrd S$, i.e., the Load-Value axiom can only select from stores $\ProgOrd I$.
                Since there is no other store for $a$ between $S$ and $I$ in the ROB, the Load-Value axiom also agrees on $S\ReadFrom I$.
                
                \item The search finds nothing:
                In this case, $I$ reads from $m[a]$, and we need to verify invariant~\ref{inv:result}.
                We first show that all stores for $a$ older than $I$ in ROB are done.
                If there are not-done stores for $a$ older than $I$ in the ROB, then let $S$ be the youngest one among them.
                There cannot be any done store for $a$ between $S$ and $I$, because the guard of the Execute-Store rule that marks the store as done cannot be satisfied.
                The address of $S$ cannot be computed (otherwise the search will stop at $S$).
                Now we backtrack the dependency chain on the address of $S$ as we do in the first case, and can show a contradiction. 
                
                Assume $m[a]$ is last written by store $S^*$ before this rule fires.
                Thus, for any done store $S'$ for $a$ when this rule fires, either $S' = S^*$ or $S' \MemOrd S^*$.
                Since loads and stores can only be marked as done via option~\ref{sim:mem} in the operational model and $S^*$ is already done, we have $S^* \MemOrd I$.
                For any store $S_1 <_{mo} I$, $S_1$ must be done, so either $S_1 = S^*$ or $S_1 \MemOrd S^*$.
                For any store $S_2 <_{po} I$, $S_2$ is also done, so either $S_2 = S^*$ or $S_2 \MemOrd S^*$.
                Therefore, the Load-Value axiom also agrees on $S^*\ReadFrom I$.
            \end{enumerate}
        \end{itemize}
    \end{enumerate}
\end{proof}

\newpage
\section{Equivalence of COM and GAM}
\label{sec:gamcom}

We first define one more derived relation:
\begin{itemize}
  \item Reads-from internal ($\Rfi$), which is the subset of $\ReadFrom$ for which both the read and the write are in the same thread
\end{itemize}

\subsection{GAM $\subseteq$ COM}

\begin{lemma}\label{lem:app_com_in_memord}
  All of $\Rfe$, $\Coh$, $\Fr$, and $\PreservePO$ are contained in $\MemOrd$.
\end{lemma}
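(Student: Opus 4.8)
The plan is to treat the four relations separately, since each lands inside $\MemOrd$ for a slightly different reason, but all four arguments rely only on the two GAM axioms (Inst-Order and Load-Value) together with the fact that $\MemOrd$ totally orders all memory instructions. The two trivial cases come first: $\PreservePO \subseteq \MemOrd$ is \emph{exactly} the Inst-Order axiom, and $\Coh \subseteq \MemOrd$ is immediate by construction, since we defined $\Coh$ to be the restriction of $\MemOrd$ to pairs of same-address stores.

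Next I would dispatch $\Rfe$. Suppose $S \Rfe L$, so $S$ is a store and $L$ a load to the same address, lying in \emph{different} threads. By the Load-Value axiom, $S$ is the $\MemOrd$-maximal element of the set of stores visible to $L$; in particular $S$ belongs to that set, i.e.\ $S \ProgOrd L$ or $S \MemOrd L$. Because $S$ and $L$ are in different threads and $\ProgOrd$ is a per-processor order, the disjunct $S \ProgOrd L$ is impossible, leaving $S \MemOrd L$, as required.

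The $\Fr$ case takes the most care. By $\Fr = \RfInv;\Coh$, an edge $L \Fr W'$ means there is a store $W$ with $W \ReadFrom L$ and $W \Coh W'$, where $W$ and $W'$ are both stores to the load's address $a$. I want to conclude $L \MemOrd W'$. From Load-Value, $W$ is the $\MemOrd$-maximum among the stores to $a$ visible to $L$. Since $W \Coh W'$ gives $W \MemOrd W'$ with $W \neq W'$ (coherence is a strict order), the store $W'$ cannot itself lie in that visible set: otherwise $W'$ would be a visible element strictly $\MemOrd$-after the $\MemOrd$-maximum, contradicting maximality. Unfolding the visible set, $W' \notin \{\,\mathsf{St}\ a\ v' \mid \mathsf{St}\ a\ v' \ProgOrd L \ \vee\ \mathsf{St}\ a\ v' \MemOrd L\,\}$ means neither $W' \ProgOrd L$ nor $W' \MemOrd L$ holds. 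Since $\MemOrd$ totally orders all memory instructions and $W' \neq L$, the failure of $W' \MemOrd L$ forces $L \MemOrd W'$, which is precisely the claim.

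The main obstacle is this final $\Fr$ step. Everything else is routine unfolding of definitions, but here the argument hinges on reading the Load-Value max condition carefully enough to \emph{exclude} the coherence-successor $W'$ from the visible set, and then invoking totality of $\MemOrd$ to convert that non-membership into the desired $\MemOrd$ direction. Getting the strictness of $\Coh$ and the irreflexivity of $\MemOrd$ exactly right is what makes the exclusion step airtight.
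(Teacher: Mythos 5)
Your proposal is correct and follows essentially the same route as the paper: $\PreservePO$ via Inst-Order, $\Coh$ by construction, $\Rfe$ by the Load-Value disjunction with the $\ProgOrd$ case excluded across threads, and $\Fr$ via Load-Value maximality combined with $\Coh \subseteq \MemOrd$. The only difference is cosmetic—you phrase the $\Fr$ case as a direct exclusion-plus-totality argument where the paper runs the same reasoning as a proof by contradiction.
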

\begin{proof}
  Two of the four cases are easy: $\Coh$ is contained in $\MemOrd$ by construction, and $\PreservePO$ is contained in $\MemOrd$ by the Inst-Order axiom.

  By the Load-Value axiom, if for any write $w$ and read $r$, if $w\ReadFrom r$, then $w$ precedes $r$ either in $\ProgOrd$ or in $\MemOrd$.  The former is ruled out in the definition of $\Rfe$, and hence $w$ must precede $r$ in $\MemOrd$.

  The proof for $\Fr$ proceeds by contradiction.  Suppose there is some write $w$ and some read $r$ such that $r\Fr w$ and $w\MemOrd r$.  Then by definition of $\Fr$, there is some other write $w'$ such that $w'\ReadFrom r$ and $w'\Coh w$.  Furthermore, since $\Coh\subseteq\MemOrd$, we have $w'\MemOrd w\MemOrd r$.  This, however, contradicts the Load-value axiom, as $w'$ is not the $\MemOrd$-maximal candidate write.
\end{proof}

The SC-per-Location axiom will take a bit more work to prove.
To start, define $\Eco$ as the union of the following relations:
\begin{itemize}
  \item $\Coh$ (Write to Write)
  \item $\Fr$ (Read to Write)
  \item ${\Coh}^*;\ReadFrom$ (Write to Read)
  \item $\RfInv;{\Coh}^*;\ReadFrom$ (Read to Read)
\end{itemize}

\begin{lemma}\label{lem:app_eco_either}
  For all pairs $i_1$, $i_2$ of memory accesses to the same address, either $i_1\Eco i_2$ or $i_2\Eco i_1$.
\end{lemma}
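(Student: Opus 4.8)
The plan is to prove the lemma by a straightforward case analysis on whether each of $i_1$ and $i_2$ is a read or a write, exploiting the fact that $\Coh$ is a \emph{total} order on the writes to the common address $a$. The one standing assumption I would make explicit is that every read reads from some write (guaranteed here because each address is initialized by an initial store), so that any read $r$ always has a well-defined $\ReadFrom$-source write. With that in hand, each of the four cases reduces to comparing two writes under $\Coh$ and routing the result into the appropriate component of the $\Eco$ union.

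The write--write case is immediate: if both $i_1$ and $i_2$ are writes to $a$, totality of $\Coh$ gives $i_1\Coh i_2$ or $i_2\Coh i_1$, and since $\Coh$ is one of the disjuncts of $\Eco$ we are done. For the mixed case where $i_1$ is a write and $i_2$ a read, I would let $w$ be the write with $w\ReadFrom i_2$ and compare $i_1$ with $w$ under $\Coh$. If $i_1=w$ or $i_1\Coh w$, then $i_1\,({\Coh}^*;\ReadFrom)\,i_2$, so $i_1\Eco i_2$; if instead $w\Coh i_1$, then $i_2\RfInv w$ together with $w\Coh i_1$ gives $i_2\Fr i_1$, so $i_2\Eco i_1$. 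The symmetric case ($i_1$ a read, $i_2$ a write) follows by exchanging the roles of $i_1$ and $i_2$.

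Finally, for the read--read case I would introduce the two source writes $w_1\ReadFrom i_1$ and $w_2\ReadFrom i_2$, both of which are writes to $a$. Totality of $\Coh$ then gives either $w_1=w_2$ or $w_1\Coh w_2$, in which case $i_1\,(\RfInv;{\Coh}^*;\ReadFrom)\,i_2$ and hence $i_1\Eco i_2$; or else $w_2\Coh w_1$, which symmetrically yields $i_2\Eco i_1$. This exhausts all four configurations of $(i_1,i_2)$.

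The argument is essentially bookkeeping, and I do not anticipate a genuine obstacle: the substance of the lemma is just the totality of $\Coh$ propagated backward and forward through the $\ReadFrom$ edges. The only point requiring mild care is the handling of the reflexive step in ${\Coh}^*$—namely the subcases where the write being compared coincides with a read's source write—and making sure each subcase is placed in the intended disjunct of the $\Eco$ union rather than an adjacent one.
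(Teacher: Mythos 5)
Your proof is correct and takes essentially the same route as the paper's: the identical four-way read/write case split, using totality of $\Coh$ on same-address writes, propagated through the $\ReadFrom$-source writes and routed into the same disjuncts of $\Eco$ (${\Coh}^*;\ReadFrom$ for write-to-read, $\Fr$ for read-to-write, $\RfInv;{\Coh}^*;\ReadFrom$ for read-to-read). Your only addition is making explicit the standing assumption that every read has a well-defined $\ReadFrom$-source write, which the paper leaves implicit.
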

\begin{proof}
  By construction.  All pairs of same-address writes are ordered in $\Coh$ by definition.  For any read $r$ and write $w$, let $w'$ be the write such that $w'\ReadFrom r$.  Then either:
  \begin{itemize}
    \item $w=w'$, so $w\ReadFrom r$, and hence $w{\Coh}^*;\ReadFrom r$,
    \item $w\Coh w'$, so $w\Coh;\ReadFrom r$, and hence $w{\Coh}^*;\ReadFrom r$, or
    \item $w'\Coh w$, so $r\RfInv;\Coh w$, and $r\Fr w$.
  \end{itemize}
  Likewise, for any two reads $r_1$ and $r_2$, let $w_1$ and $w_2$ be the writes such that $w_1\ReadFrom r_1$ and $w_2\ReadFrom r_2$.  Then either:
  \begin{itemize}
    \item $w_1=w_2$, so $r_1\RfInv;\ReadFrom r_2$, and hence $r_1\RfInv;{\Coh}^*;\ReadFrom r_2$,
    \item $w_1\Coh w_2$, so $r_1\RfInv;\Coh;\ReadFrom r_2$, and hence $r_1\RfInv;{\Coh}^*;\ReadFrom r_2$,
    \item $w_2\Coh w_1$, so $r_2\RfInv;\Coh;\ReadFrom r_1$, and hence $r_2\RfInv;{\Coh}^*;\ReadFrom r_1$.
  \end{itemize}
\end{proof}

If $i_1$ and $i_2$ are related in program order, then the $\Eco$ direction must match:
\begin{lemma}\label{lem:app_eco_poloc}
  If $i_1\PoLoc i_2$, then $i_1\Eco i_2$.
\end{lemma}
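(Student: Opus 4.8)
The plan is to argue by contradiction against the alternative supplied by Lemma~\ref{lem:app_eco_either}. Given $i_1\PoLoc i_2$ (so $i_1\ProgOrd i_2$ and both access a common address $a$), Lemma~\ref{lem:app_eco_either} guarantees that either $i_1\Eco i_2$ or $i_2\Eco i_1$. I would assume $i_2\Eco i_1$ and show that this either forces $i_1\Eco i_2$ as well (the single overlapping case) or is outright impossible; the desired conclusion then follows. Note that the SC-per-Location axiom is not yet available at this point, so every contradiction must be drawn from the GAM axioms Inst-Order and Load-Value together with $\Coh,\Fr\subseteq\MemOrd$ (Lemma~\ref{lem:app_com_in_memord}). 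Throughout I write $C(r)$ for the Load-Value candidate set $\{\mathsf{St}\ a\ v'\mid \mathsf{St}\ a\ v'\ProgOrd r\ \vee\ \mathsf{St}\ a\ v'\MemOrd r\}$ of a load $r$ to $a$, so that the store read by $r$ is exactly $\max_{mo}C(r)$, and I write $x\preceq_{mo}y$ for $x=y$ or $x\MemOrd y$.

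I would then split on the read/write types of $i_1$ and $i_2$, since each type pair selects exactly one of the four clauses defining $\Eco$. If both are writes, then $i_2\Eco i_1$ means $i_2\Coh i_1$, hence $i_2\MemOrd i_1$; but $i_1\ProgOrd i_2$ with both stores to $a$ gives $i_1\SameAddrOrd i_2$ by Definition~\ref{def:ppo-same-addr}.\ref{ppo:st->st}, hence $i_1\MemOrd i_2$ by Inst-Order, contradicting antisymmetry of $\MemOrd$. If $i_1$ is a write and $i_2$ a read, then $i_2\Eco i_1$ must be $i_2\Fr i_1$, i.e.\ the source $w'$ of $i_2$ satisfies $w'\Coh i_1$, so $w'\MemOrd i_1$; but $i_1\ProgOrd i_2$ puts $i_1\in C(i_2)$, so Load-Value forces $i_1\preceq_{mo}w'$, contradicting $w'\MemOrd i_1$. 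If $i_1$ is a read and $i_2$ a write, then $i_2\Eco i_1$ is realized by $i_2\;{\Coh}^*;\ReadFrom\;i_1$ through a write $w$ with $i_2\preceq_{mo}w$ and $w\ReadFrom i_1$; here $i_1\SameAddrOrd i_2$ (Definition~\ref{def:ppo-same-addr}.\ref{ppo:ld->st}) gives $i_1\MemOrd i_2\preceq_{mo}w$, so $i_1\MemOrd w$, which via Load-Value forces $w\ProgOrd i_1$, and then $w\ProgOrd i_2$ with $w,i_2$ both stores to $a$ yields $w\MemOrd i_2$ (Definition~\ref{def:ppo-same-addr}.\ref{ppo:st->st}), contradicting $i_2\preceq_{mo}w$.

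The main obstacle is the read-read case, where $i_2\Eco i_1$ is witnessed by writes $w_2\ReadFrom i_2$, $w_1\ReadFrom i_1$ with $w_2\;{\Coh}^*\;w_1$. If $w_1=w_2$ (zero coherence steps), then symmetrically $i_1\;\RfInv;{\Coh}^*;\ReadFrom\;i_2$, so $i_1\Eco i_2$ and we are done: this is the overlap case flagged in the statement. Otherwise $w_2\Coh w_1$, so $w_2\MemOrd w_1$ with $w_2\ne w_1$, and I must derive a contradiction. The goal in this subcase is to show $w_1\in C(i_2)$, which is impossible since $w_2=\max_{mo}C(i_2)$ while $w_2\MemOrd w_1$. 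To place $w_1$ in $C(i_2)$ I would first establish $i_1\MemOrd w_1$; then, because $w_1\in C(i_1)$ rules out $w_1\MemOrd i_1$, Load-Value gives $w_1\ProgOrd i_1\ProgOrd i_2$, hence $w_1\ProgOrd i_2$ and $w_1\in C(i_2)$.

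Establishing $i_1\MemOrd w_1$ is the crux and requires a further split on whether a store to $a$ lies between $i_1$ and $i_2$ in $\ProgOrd$. If none does, Definition~\ref{def:ppo-same-addr}.\ref{ppo:ld->ld} gives $i_1\SameAddrOrd i_2$, hence $i_1\MemOrd i_2$; combined with $w_1\in C(i_1)$ this already routes either $w_1\ProgOrd i_1\ProgOrd i_2$ or $w_1\MemOrd i_1\MemOrd i_2$ into $w_1\in C(i_2)$. If some intervening store exists, I take the youngest store $S^{\ast}$ to $a$ with $S^{\ast}\ProgOrd i_2$ (so that $i_1\ProgOrd S^{\ast}\ProgOrd i_2$); then $i_1\SameAddrOrd S^{\ast}$ gives $i_1\MemOrd S^{\ast}$, while $S^{\ast}\in C(i_2)$ gives $S^{\ast}\preceq_{mo}w_2$, so $i_1\MemOrd w_2\MemOrd w_1$ and thus $i_1\MemOrd w_1$, completing the argument. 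I expect this last step---correctly pinning down the intervening store and chaining the two Load-Value applications (for $i_1$ and for $i_2$) without invoking the not-yet-proven SC-per-Location axiom---to be where the care is needed; everything else is routine case bookkeeping.
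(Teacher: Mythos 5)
Your proposal is correct, and at the top level it follows the same skeleton as the paper's proof: invoke Lemma~\ref{lem:app_eco_either}, assume $i_2\Eco i_1$, split on the read/write types of $i_1,i_2$, flag the one overlapping read--read case ($w_1=w_2$, zero coherence steps), and in the remaining read--read case split on whether a same-address store intervenes between $i_1$ and $i_2$ in $\ProgOrd$ --- your $S^{\ast}$ dichotomy is exactly the paper's split on the existence of the write $i_5$ with $i_1\PoLoc i_5\PoLoc i_2$. Where you genuinely diverge is in how each contradiction is closed. The paper stays in the relational vocabulary of COM: it converts the read--read assumption into $i_2\Fr i_4$, subsplits on $\Rfi$ versus $\Rfe$, reuses the write--read case as a forbidden $\Fr$-followed-by-$\PoLoc$ pattern, and finishes the no-intervening-store subcase with a cycle $i_1\PreservePO i_2\Fr i_4\Rfe i_1$ contradicting the already-proven Causality theorem. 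You never form $\Fr$ and never distinguish internal from external reads; instead you inline everything into $\MemOrd$-arithmetic on the Load-Value candidate set $C(r)$ --- membership ($w\ProgOrd r$ or $w\MemOrd r$) plus $\max_{mo}$-maximality --- which is legitimate because the paper's proof that $\Fr\subseteq\MemOrd$ (Lemma~\ref{lem:app_com_in_memord}) is itself just this maximality argument, which you have effectively unfolded. What each route buys: yours is more uniform and self-contained, needing only Inst-Order, Load-Value, the three clauses of Definition~\ref{def:ppo-same-addr}, $\Coh\subseteq\MemOrd$, and totality of $\MemOrd$, and it makes explicit that the Causality theorem is not actually required for this lemma; the paper's version keeps the argument in the $\Fr$/$\Rfe$/$\Coh$ idiom of COM, which lets it reuse the write--read configuration verbatim and matches the pattern-rewriting style used later for SC-per-Location. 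Two minor remarks: the ``youngest'' qualifier on $S^{\ast}$ is unnecessary (any store to $a$ with $i_1\ProgOrd S\ProgOrd i_2$ supports the chain $i_1\MemOrd S\preceq_{mo}w_2\MemOrd w_1$), and in subcase~A you quietly prove slightly more than your stated plan ($w_1\in C(i_2)$ directly, rather than via $i_1\MemOrd w_1$), which is fine but worth stating as such in a final write-up.
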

\begin{proof}
  By Lemma \ref{lem:app_eco_either}, either $i_1\Eco i_2$ or $i_2\Eco i_1$.  We show that the latter always results in a contradiction (except for one case in which it overlaps with the former).
  \begin{itemize}
    \item If $i_1$ and $i_2$ are both writes, then $i_1\PoLoc i_2\Coh i_1$, so $i_1\PoLoc i_2\MemOrd i_1$, which contradicts Definition~\ref{def:ppo-same-addr}.\ref{ppo:st->st}.
    \item If $i_1$ is a read and $i_2$ is a write, then suppose $i_2{\Coh}^*~i\ReadFrom i_1$ for some $i$.  If $i_2=i$, then by the LoadValue axiom, either $i_2\PoLoc i_1$, which contradicts the hypothesis, or $i_2\MemOrd i_1$, which contradicts Definition~\ref{def:ppo-same-addr}.\ref{ppo:ld->st}.
    Therefore, suppose $i_2\Coh i\ReadFrom i_1$.  If $i\Rfe i_1$, then $i_1\PreservePO i_2\Coh i\Rfe i_1$ by Definition~\ref{def:ppo-same-addr}.\ref{ppo:ld->st}, which contradicts Causality.
    If $i\Rfi i_1$, then $i\PoLoc i_1$; otherwise, by the LoadValue axiom, $i_1\MemOrd i$, which contradicts Definition~\ref{def:ppo-same-addr}.\ref{ppo:ld->st}. %\ref{def:ppo-same-addr}.\ref{ppo:ld->st}.
    However, this means $i\PoLoc i_2\Coh i$, which again contradicts Definition~\ref{def:ppo-same-addr}.\ref{ppo:st->st}.
    \item If $i_1$ is a write and $i_2$ is a read, then suppose $i_2\Fr i_1$, and let $i$ be the write such that $i\ReadFrom i_2$ and $i\Coh i_1$.  Since $i\MemOrd i_1$, $i$ is not the $\MemOrd$-maximal store from which $i_2$ should read, and the LoadValue axiom is violated.
    \item If $i_1$ and $i_2$ are both reads, then suppose $i_2\RfInv i_3\Coh i_4\ReadFrom i_1$ for some $i_3$ and $i_4$.  (The case $i_2\RfInv;\ReadFrom i_1$ implies $i_1\Eco i_2$.)  Then $i_2\Fr i_4$.  If $i_4\Rfi i_1$, then $i_4\PoLoc i_2\Fr i_4$, which as we have already seen in the previous case is forbidden.  If $i_4\Rfe i_1$, then either there is some write $i_5$ such that $i_1\PoLoc i_5\PoLoc i_2$, or there is no such write.  If $i_5$ exists, then $i_4\Coh i_5$, $i_3\Coh i_5$, and $i_2\Fr i_5\PoLoc i_2$, which as we have already seen is forbidden.  If $i_5$ does not exist, then $i_1\PreservePO i_2\Fr i_4\Rfe i_1$, which contradicts Causality.
  \end{itemize}
\end{proof}

\begin{theorem}
  The SC-per-Location axiom is satisfied.
\end{theorem}
\begin{proof}
  First, by Lemma~\ref{lem:app_eco_poloc}, all $\PoLoc$ edges involving at least one write can be converted into sequences containing only $\ReadFrom$, $\Coh$, and $\Fr$.  So we consider only cycles with $\ReadFrom$, $\Coh$, $\Fr$, and read-to-read $\PoLoc$ edges.  Among such cycles, first consider cycles with no $\Coh$ or $\Fr$ edges.  Such cycles cannot contain $\ReadFrom$ either, because neither $\ReadFrom$ nor read-read $\PoLoc$ edges can end at a write node, and so there cannot be a source for $\ReadFrom$ relations.  This leaves a cycle consisting only of $\PoLoc$, which is a contradiction.

  Now, consider cycles with at least one $\Coh$ or $\Fr$ edge.  Replace every instance of read-read $\PoLoc$ in the cycle with $\RfInv;{\Coh}^*;\ReadFrom$ per Lemma~\ref{lem:app_eco_poloc}.  Now, because $\Coh$ and $\Fr$ both target writes, every appearance of $\RfInv$ must be preceded either by $\ReadFrom$ or by $\RfInv;{\Coh}^*;\ReadFrom$.  In particular, every appearance of $\RfInv$ must be preceded directly by $\ReadFrom$.  Since $\ReadFrom;\RfInv$ is the identity function, all appearances of $\RfInv$ in the cycle can be eliminated by simply removing each $\ReadFrom;\RfInv$ pair in the cycle.  This leaves a cycle with only $\ReadFrom$, $\Coh$, and $\Fr$.
  If there are any reads in such a cycle, then by similar logic as above, the incoming relation must be $\ReadFrom$ and the outgoing relation must be $\Fr$, but this pair is equivalent to a $\Coh$ edge between writes alone.  Repeating such a transformation produces a cycle consisting of only $\Coh$.  Since by hypothesis there is at least one such $\Coh$ edge, this is a contradiction.
\end{proof}

\newpage
\section{Alloy Model for Empirical Validation}
\label{sec:alloy}

Figure~\ref{fig:alloy} shows the Alloy model used for validation.

\lstdefinelanguage{alloy}{
  morekeywords={abstract, sig, extends, pred, fun, fact, no, set, one, lone, let, not, all, iden, some, run, for},
  morecomment=[l]{//},
  morecomment=[s]{/*}{*/},
  commentstyle=\color{green!40!black},
  keywordstyle=\color{blue!40!black},
  moredelim=**[is][\color{red}]{@}{@},
  escapeinside={!}{!},
}
\lstset{language=alloy}
\lstset{aboveskip=0pt}
\lstset{belowskip=0pt}

\begin{figure}[h!]
  \tt\bfseries\centering\footnotesize
  \begin{lstlisting}
// Model of memory
sig Address {}
abstract sig Event {
  po: lone Event, ppo: set Event, mo: set Event, address: one Address }
sig Read extends Event {}
sig Write extends Event { rf: set Read }
fun po_loc : Event->Event { ^po & address.~address }
fact { acyclic[po] }
fact { rf.~rf in iden }
fact { total[mo, Event] } // definition of total omitted for space
fact { (Write <: po_loc :> Write) + (Read <: po_loc :> Write)
  + (Read <: (po_loc - (po_loc.po_loc)) :> Read) in ppo }

// GAM axioms
fun candidates[r: Read] : set Write {
  (r.~mo & Write & r.address.~address) // writes preceding r in <mo
  + (r.^~po & Write & r.address.~address)} // writes preceding r in <po
pred InstOrder { ppo in mo }
pred LoadValue { all w: Write | all r: Read |
  w->r in rf <=> w in (let c = candidates[r] | c - c.~mo)} // i.e., max_<mo
pred GAM { InstOrder and LoadValue }

// COM axioms
fun rfe : Write->Read { rf - (^po + ^~po) }
fun co : Write->Write { Write <: ((address.~address) & mo) :> Write }
fun fr : Read->Write { ~rf.co + ((Read - Write.rf) <: address.~address :> Write) }
pred SC_per_Location  { acyclic[rf + co + fr + po_loc] }
pred Causality { acyclic[rfe + co + fr + ppo] }  // def. of acyclic omitted for space
pred COM { SC_per_Location and Causality }

// Equivalence Checks
check gam_com { GAM => COM } for 7
check com_gam { rfe + co + fr + ppo in mo => COM => GAM } for 7
  \end{lstlisting}
  \caption{Comparing GAM and COM in Alloy}
  \label{fig:alloy}
\end{figure}

\newpage
\section{Completeness: GAM-I2E Axiomatic Model $\subseteq$ GAM-I2E Operational Model}\label{sec:i2_op_contain_axi}
\begin{theorem}
    GAM-I2E axiomatic model $\subseteq$ GAM-I2E operational model.
\end{theorem}
\begin{proof}
The goal is that for any legal axiomatic relations $\langle \ProgOrd, \MemOrd, \ReadFrom\rangle$ (which satisfy the GAM-I2E axioms), we can run the GAM-I2E operational model to simulate the same program behavior.
In each step of the simulation, we first decide which rule to fire in the operational model based on the current state of the operational model and $\MemOrd$, and then we fire that rule.
Here is the algorithm to determine which rule to fire in each simulation step:
\begin{enumerate}
    \item If in the operational model there is a processor whose next instruction is not a load, we fire an Execute-Reg-Branch or Execute-Store-Fence rule to execute that instruction in the operational model.
    \item If the above case does not apply, and in the operational model there is a fence that can be dequeued from the local buffer, then we fire the Dequeue-Fence rule to dequeue that fence in the operational model.
    \item \label{sim:i2e:st} If neither of the above cases applies, and in the operational model there is a store $S$ in the local buffer of a processor, and $S$ can be dequeued from the local buffer (i.e., the guard for the Dequeue-Store rule is true), and all stores before $S$ in $\MemOrd$ are already in $\IIEMemOrd$, then we fire a Dequeue-Store rule to dequeue $S$ in the operational model.
    \item \label{sim:i2e:ld} If none of the above cases applies, then in the operational model there must be a processor such that the next instruction of the processor is a load $L$, and $L$ can be executed (i.e., the guard for the Execute-Load rule is true), and all stores before $L$ in $\MemOrd$ are already in $\IIEMemOrd$.
    We fire an Execute-Load rule to execute $L$ in the operational model.
    In the Execute-Load rule of $L$, we insert $L$ into $\IIEMemOrd$ such that for any instruction $I$ already in $\IIEMemOrd$, if $I \MemOrd L$ then $I \IIEMemOrd L$, otherwise $L \IIEMemOrd I$.
\end{enumerate}
After each step of the simulation, we keep the following invariants:
\begin{enumerate}
    \item \label{inv:i2e:po} The execution order on each processor is a prefix of the $\ProgOrd$ of that processor.
    \item The result of each executed instruction is the same as that in $\ProgOrd$.
    \item \label{inv:i2e:rf} The store read by each executed load is the same as that indicated by the $\ReadFrom$ edges.
    \item \label{inv:i2e:stuck} The simulation cannot get stuck.
    \item \label{inv:i2e:mo-match} For two memory instruction $I_1$ and $I_2$, if $I_1 \IIEMemOrd I_2$ in the operational model, then $I_1 \MemOrd I_2$ in the axiomatic relations.
    \item \label{inv:i2e:mo-prefix} The order of all stores in $\IIEMemOrd$ is a prefix of the order of all stores in $\MemOrd$.
\end{enumerate}
The first two induction invariants imply that \emph{before} each simulation step, the following properties hold for each processor $i$ (assuming the next instruction of the processor is $I$):
\begin{enumerate}
    \item $\IIEProgOrd$ is a prefix of $\ProgOrd$ (of processor $i$) up to $I$ (including $I$).
    \item For any instructions $I_1 \PreservePO I_2$ from processor $i$, if $I_1$ and $I_2$ are not ordered after $I$ in $\ProgOrd$ (i.e., $I_2$ may be equal to $I$), then $I_1 \IIEPreservePO I_2$.
    \item \label{prop:i2e:ppo} For any instructions $I_1$ and $I_2$, if $I_1 \IIEPreservePO I_2$, then $I_1 \PreservePO I_2$.
\end{enumerate}

Now we examine each case in the simulation algorithm and prove that all invariants hold:
\begin{enumerate}
    \item We execute a non-load instruction: trivial.
    \item We dequeue a fence from the local buffer: trivial.
    \item We dequeue a store $S$ from the local buffer: In this case, we need to verify invariants \ref{inv:i2e:mo-match} and \ref{inv:i2e:mo-prefix}.
    Invariant~\ref{inv:i2e:mo-prefix} is trivial, because all stores older than $S$ in $\MemOrd$ are already in $\IIEMemOrd$ (as required by case~\ref{sim:i2e:st} in the algorithm).
    We now consider invariant~\ref{inv:i2e:mo-match}.
    For each instruction $I$ already in $\IIEMemOrd$ at the dequeue time, $I$ must be added to $\IIEMemOrd$ by case~\ref{sim:i2e:st} or \ref{sim:i2e:ld} in the simulation algorithm.
    Since these two cases require that every store older than $I$ in $\MemOrd$ to be present in $\IIEMemOrd$, $S$ cannot be older than $I$ in $\MemOrd$, i.e., $I \MemOrd S$.
    \item We execute a load $L$:
    We first need to verify invariant~\ref{inv:i2e:stuck}, i.e., we are able to find such an $L$ that satisfies the requirements in case~\ref{sim:i2e:ld} of the simulation algorithm.
    We prove this by contradiction, i.e., such $L$ cannot be found.
    In this case, the next instruction of every processor is a load.
    We examine why the next instruction $L_1$ (which is a load) of processor 1 does not satisfy the requirements of case~\ref{sim:i2e:ld} of the simulation algorithm.
    There are two possibilities:
    \begin{enumerate}
        \item There is a store $S_2 \MemOrd L_1$ but $S_2$ is not yet in $\IIEMemOrd$.
        \item The guard of the Execute-Load rule for $L_1$ is false.
        We backtrack which instruction is stalling $L_1$.
        There must exist an instruction $I_1$ in the local buffer of processor 1 which is ordered before $L_1$ in $\IIEPreservePO$.
        If $I_1$ is a fence, then $I_1$ cannot be dequeued because there is another instruction $I_2 \IIEMemOrd I_1$ in the local buffer.
        We keep doing this until we find a store, i.e., $I_k \IIEPreservePO I_{k-1} \IIEPreservePO \cdots \IIEPreservePO I_1 \IIEPreservePO L_1$, where $I_1\cdot I_k$ are all in the local buffer of processor 1, $I_1\cdots I_{k-1}$ are fences, and $I_k$ is a store.
        According to property~\ref{prop:i2e:ppo}, $I_k\IIEPreservePO L_1\Rightarrow I_k \PreservePO L_1 \Rightarrow I_k \MemOrd L_1$.
    \end{enumerate}
    In either case, we find a store $S_2 \MemOrd L_1$, and $S_2$ is not in $\IIEMemOrd$.
    Now we consider why $S_2$ is not in $\IIEMemOrd$, and there are two possibilities:
    \begin{enumerate}
        \item $S_2$ is not executed yet:
        Assume $S_2$ is in processor $i$ in $\ProgOrd$.
        The next instruction to execute in the processor of $S_2$ in the operational model must be a load $L_3$.
        According to invariant~\ref{inv:i2e:po}, since $S_2$ is not in the prefix of $\ProgOrd$ of processor $i$ up to $L_3$, we have $L_3 \ProgOrd S_2 \Rightarrow L_3 \LSFOrd S_2 \Rightarrow L_3 \MemOrd S_2$.
        Following the previous argument, $L_3$ cannot be executed because of a store $S_3$ which is before $L_3$ in $\MemOrd$ but is not in $\IIEMemOrd$.
        That is, $S_3 \MemOrd S_2$ and $S_3$ is not in $\IIEMemOrd$.
        \item $S_2$ is the local buffer of processor $i$:
        There are two possible reasons that stops $S_2$ from being dequeued:
        \begin{enumerate}
            \item There is a store $S_3 \MemOrd S_2$ and $S_3$ is not in $\IIEMemOrd$.
            \item The guard of the Dequeue-Store rule is false.
            Using the previous argument, there must be a store  $S_3 \MemOrd S_2$, and $S_3$ is in the local buffer.
        \end{enumerate}
    \end{enumerate}
    In all cases, we can find a store $S_3 \MemOrd S_2$, and $S_3$ is not in $\IIEMemOrd$.
    Since the simulation algorithm is assumed to get stuck, we can keep doing this, and find $S_k \MemOrd S_{k-1} \MemOrd \cdots \MemOrd S_1 \MemOrd L_1$, where $S_1\ldots S_k$ are all stores that are not in $\IIEMemOrd$, and $k$ can be infinitely large.
    However, there can only be finite number of stores before $L_1$ in $\MemOrd$.
    Therefore, we must be able to find a load $L$ that satisfies the requirements of case~\ref{sim:i2e:ld} of the simulation algorithm.
    
    We also need to verify that $L$ can indeed be inserted into $\IIEMemOrd$ as instructed in case~\ref{sim:i2e:ld} of the simulation algorithm.
    Since both $\MemOrd$ and $\IIEMemOrd$ are total orders, invariant~\ref{inv:i2e:mo-match} ensures that we can cut $\IIEMemOrd$ into two parts, i.e., one part is before $L$ in $\MemOrd$ and the other part is after $L$ in $\MemOrd$.
    Then we can simply place $L$ at the cutting point of $\IIEMemOrd$.
    This also ensures that invariant~\ref{inv:i2e:mo-match} will still hold after this step of simulation.
    
    Finally we need to show that invariant~\ref{inv:i2e:rf} still holds.
    Assume $L$ is from processor $i$, loads address $a$, and reads from store $S$ in the Execute-Load rule.
    Consider where $S$ resides when we fire the Execute-Load rule:
    \begin{enumerate}
        \item $S$ is in the local buffer of processor $i$:
        Since all stores $\MemOrd L$ are already in $\IIEMemOrd$, $S$ does not precede $L$ in $\MemOrd$, i.e., $L\MemOrd S$.
        Invariant~\ref{inv:i2e:po} implies that $S\ProgOrd L$.
        Therefore the Load-Value axiom can only select stores $\ProgOrd L$ as the source for the load result.
        Since all stores for the same address in the same processor are ordered by $\SameAddrOrd$ and thus $\MemOrd$, the Load-Value axiom will pick the youngest store in $\ProgOrd$ among all stores that is before $L$ in $\ProgOrd$.
        Since $S$ is the most recently executed store for $a$ in processor $i$, invariant~\ref{inv:i2e:po} ensures that $S$ is the store picked by the Load-Value axiom.
        
        \item $S$ is already in $\IIEMemOrd$:
        In this case, the local buffer of processor $i$ cannot have any store for address $a$.
        Invariant~\ref{inv:i2e:po} says that for any store $S'$ for $a$ which is ordered before $L$ in $\ProgOrd$, $S'$ must have been executed in processor $i$ in the operational model.
        Therefore, $S'$ must be already in $\IIEMemOrd$.
        The way we find  $L$ ensures that for any store $S''$ for $a$ that is ordered before $L$ in $\MemOrd$, $S''$ must be in $\IIEMemOrd$.
        Thus, all stores that are visible to $L$ according to the Load-Value axiom are all in $\IIEMemOrd$ now.
        Invariants~\ref{inv:i2e:mo-match} and ref{i2e:mo-prefix} both say that the orderings between all such $S'$ and $S''$ are the same in $\IIEMemOrd$ and $\MemOrd$.
        Since the Execute-Load rule uses the same way as the Load-Value axiom to determine the load value, invariant~\ref{inv:i2e:rf} must hold.
    \end{enumerate}
\end{enumerate}
\end{proof}

\end{document}